\documentclass[journal,twoside,web]{ieeecolor}
\usepackage{generic}
\usepackage{xcolor}
\usepackage{cite}
\usepackage{amsmath,amssymb,amsfonts,mathtools}
\usepackage{algorithmic}
\usepackage{graphicx}
\usepackage{textcomp}
\usepackage{subfigure}
\usepackage{caption}        
\usepackage{url}
\usepackage{tikz}
\usetikzlibrary{calc}
\usepackage[edges]{forest}
\usepackage{pgfplots}
\usepackage{hyperref}
\usetikzlibrary{shapes,arrows,automata}
\usetikzlibrary{shapes.multipart}
\DeclareMathAlphabet{\pazocal}{OMS}{zplm}{m}{n}
\usepackage{calrsfs}
\usepackage{dsfont}
\usepackage{stackengine}
\usepackage{xcolor}
\definecolor{GG}{rgb}{0,0,1}
\def\BibTeX{{\rm B\kern-.05em{\sc i\kern-.025em b}\kern-.08em
		T\kern-.1667em\lower.7ex\hbox{E}\kern-.125emX}}
\newtheorem{definition}{\textbf{Definition}}{\normalfont}{\normalfont}
{\normalfont}{\normalfont}
\newtheorem{remark}{Remark}{\normalfont}{\normalfont}
\newtheorem{theorem}{Theorem}
\newtheorem{assumption}{Assumption}
\newtheorem{proposition}{Proposition}
\newtheorem{lemma}{Lemma}

\newcommand{\figurename}{Fig.}
\newcommand{\tablename}{Table}

\newcommand{\smallmat}[1]{\left[ \begin{smallmatrix}#1 \end{smallmatrix} \right]}

\tikzset{
	curve box/.style={white, draw=black, text=black, rectangle, rounded corners},
}
\usetikzlibrary{decorations.pathreplacing}

\begin{document}
	\title{Beyond Shrinkage: Foundations of Data-Driven Control for Piecewise Affine Systems}
	\author{G. Giacomelli, \IEEEmembership{Student Member, IEEE}, V. G. Lopez, \IEEEmembership{Member, IEEE}, S. Formentin, \IEEEmembership{Senior Member, IEEE}, M.A. M\"uller, \IEEEmembership{Senior Member, IEEE}, V. Breschi, \IEEEmembership{Member, IEEE} 
		\thanks{Gianluca Giacomelli and Valentina Breschi are with Control Systems Group, Eindhoven University of Technology, 5612AZ Eindhoven, The Netherlands, (e-mails: \textsl{\{g.giacomelli, v.breschi\}@tue.nl}). Victor G. Lopez and Matthias A. M{\"u}ller are with the Institute of Automatic Control, Leibniz University Hannover, 30167 Hannover, Germany (e-mails: \textsl{\{lopez, mueller\}@irt.uni-hannover.de}). Simone Formentin is with the Dipartimento di Elettronica, Informazione e Bioingegneria,
		Politecnico di Milano, 20133 Milan, Italy (e-mail: \textsl{simone.formentin@polimi.it}).}
	}
	
	\maketitle
	
	\begin{abstract}
    Data-enabled predictive control (DeePC) has recently attracted attention as a promising approach for controlling systems directly from raw data, without requiring an explicit identification step. However, DeePC has not yet been extended to piecewise affine (PWA) systems, despite their extensive use in the (predictive) control literature and their universal approximation capabilities. To address this gap, in this work, we lay the foundations for data-enabled predictive control of PWA systems, providing: $(i)$ their behavioral characterization; $(ii)$ an extension of Willems' Fundamental Lemma to represent their behavior from raw data; $(iii)$ an analysis of the coherence of DeePC strategies using a linear predictor and shrinkage regularizers; and $(iv)$ a study of the impact of misclassification errors on structuring data for prediction. Our theoretical findings are validated by numerical results on a simple example, emphasizing the need to extend beyond a regularized version of the foundational DeePC framework to design control actions that are both effective and coherent with a PWA system’s behavior, thus ensuring the controller’s explainability. 
	\end{abstract}
	
	\begin{IEEEkeywords}
		Data-enabled predictive control; PWA systems; behavioral representation; regularization strategies; explainability.
	\end{IEEEkeywords}
	
	\section{Introduction}
	\IEEEPARstart{D}{ata-driven} control strategies are becoming increasingly popular due to the ever-growing availability of data and the increasing difficulty of first-principled modeling of complex systems. Among these approaches, the seminal works in \cite{coulson2019data,berberich2020data} have concurrently introduced similar strategies that replace the model-based predictor used in Model Predictive Control (MPC) with a predictor built from a raw batch of input/output trajectories of the system, thereby eliminating the need for an explicit identification step. This approach, connected to subspace identification as discussed in \cite{dorfler2022bridging,BRESCHI2023110961}, thus removes the operation that is often the most costly and time-demanding within an advanced control pipeline \cite{hjalmarsson2005experiment}. While this has led to significant progress, research has mainly focused on Linear-Time Invariant (LTI) systems, whereas extending this framework to nonlinear systems remains an active area of research (see the discussion in \cite{berberich2025overview}). In response to these challenges, several data-driven control schemes tailored to specific classes of nonlinear systems have been recently proposed, often relying on linearization of the underlying systems. Examples range from the approach proposed in \cite{verhoek2025behavioral} for Linear Parameter Varying (LPV) systems, which use measured or estimated scheduling signals (and embed them in the data-driven predictor) to map nonlinearities, to strategies using the linear-in-control input Koopman framework to tackle nonlinearities~\cite{de2024koopman}, nevertheless characterized by a prediction error that increases with the prediction horizon \cite{iacob2024koopman}. A different focus is taken in \cite{alsalti2023data}, which, for the class of feedback-linearizable systems, relies on a basis-function approximation to decompose the unknown nonlinearities. 

Building on the previous discussion of nonlinear systems, other strategies exploit existing data-driven predictive control schemes and use data-selection techniques to extend them to nonlinear control. The first hint of this kind appears in the seminal work on data-enabled predictive control (DeePC), which suggests using Lasso regularization to tackle nonlinearities within a certain distance of where batch data were collected \cite{coulson2019regularized}. To move beyond the limitation of a neighborhood of a single linearization point, \cite{naf2025choose,beerwerth2025less,li2025datamodel,pasini2026taming} propose different strategies to construct a data-driven model at each control instant by selecting previously stored data. Specifically, at a given time step, \cite{naf2025choose} uses manifold learning to embed the data, then selects data trajectories that are close to those predicted at the previous time step. Meanwhile, \cite{beerwerth2025less} proposes a contextual selection approach that selects data based on the Euclidean distances between the initial output condition and the reference, and between the data used to reconstruct them, respectively. Instead, \cite{li2025datamodel} proposes selecting data points based on their influence on the closed-loop cost via a subset-to-performance surrogate map, thereby introducing a model for the tracking loss. In another direction, \cite{pasini2026taming} builds on the framework proposed in \cite{chiuso2025harnessing} and exploits data-driven local affine predictors obtained via local linear regression \cite{rosolia2019learning} along the linearization trajectory. Alternatively, a continuously updated data-driven predictor has also been proposed by \cite{berberich2022linear}. However, it requires collecting new, persistently exciting data online, which is known to be challenging. 

Expanding on the aforementioned classes of systems, it is worth noting that, although some works have also considered data-driven control for switching systems (see, e.g., \cite{rotulo2022online, eising2024data,bianchi2025data,wang2025online,li2026online}), no strategy has yet been specifically devised for data-driven predictive control of piecewise affine (PWA) systems except for output-feedback controller synthesis~\cite{hu2026data}. 
Nonetheless, this class of systems has two main potential advantages. First, it is well known that PWA models are universal approximators \cite{lin2002canonical,breiman2002hinging}, making techniques developed for their control also applicable to other classes of nonlinear systems. Second, PWA systems feature a set of affine local models and a switching logic dictated by a polyhedral partition of the input/state (or input/output, depending on the representation) space, characteristics that are in line with the strategies relying on data selection and local linearizations \cite{naf2025choose,beerwerth2025less,pasini2026taming}. 
\paragraph*{Contribution}
        Building on the seminal work \cite{paoletti2009input}, we propose a behavioral representation for PWA systems, showing that PWA behaviors can be described via an LTI embedding under sparsity constraints dictated by the active mode. We further manipulate this representation for the local affine nature of the PWA systems to become explicit, ultimately leading us to the formulation of an extension of Willem's fundamental Lemma for the PWA system class. This result allows us to define a data-driven predictor for the PWA class, which relies on Mosaic data matrices (mirroring the local behaviors) and a binary selection matrix. Without enforcing the mixed-integer constraints characterizing our behavioral predictor and only leveraging the predictor's data structure, we then consider two extensions of DeePC~\cite{coulson2019data} with sparsity-promoting regularization and assess whether the achieved sparsity patterns could match the ones of the actual system behavior. In particular, apart from the elastic-net regularizer already evaluated in \cite{giacomelli2025insights} for nonlinear systems, we analyze for the first time the impact of a regularization scheme, the Composite Absolute Penalty (CAP) \cite{zhao2006grouped}, promoting the sparsification of groups of local behaviors. Both formally and via a simple numerical example, we show that using sparsification without explicitly optimizing for the active mode sequence eventually allows for input/output tracking but results in schemes whose outcomes are not coherent with the actual PWA behavior, paving the way for novel, hybrid DeePC schemes. At the same time, since the considered data structure relies on clustered data, we analyze the impact of misclassification errors when defining the local behaviors of the data-driven model, providing bounds on the attained costs with the considered shrinkage-based strategies.      
                %
                    %
                    %
	\paragraph*{Outline} Section \ref{sec:problem_formulation} introduces our setting and goal, including our assumption on the system to be controlled. In Section~\ref{sec:behaviors}, we then provide a characterization of PWA systems from a behavioral perspective, allowing us to define a data-driven predictor for a PWA system. Using only the data structures characterizing such a predictor, in Section~\ref{sec:shrinkage_based} we analyze the solution of two shrinkage-based DeePC schemes, one using elastic-net~\cite{giacomelli2025insights} and the other exploiting group-shrinkage regularization. As the considered data structures are built by clustering the available data, we analyze the impact of misclassification errors in Section~\ref{sec:misclassification}. Our theoretical findings are supported by the results of a numerical test in Section~\ref{sec:example}, followed by final remarks and directions for future work. 

	\paragraph*{Notation} The sets of integers, non-negative integers, and real numbers are denoted as $\mathbb{Z}$, $\mathbb{Z}_{\geq 0}$, and $\mathbb{R}$, respectively. Given a signal $\zeta_t \!\in\!\mathbb{R}^{n_{\zeta}}$, with $t\!\geq\! 0$, $q$ denotes the shift operator, i.e., $q^{-i}\zeta_t\!=\!\zeta_{t-i}$, for $i\! \in\! \mathbb{Z}$, while ${\zeta_{\left[\tau, \tau'\right]} \!=\!\begin{bmatrix} \zeta_{\tau}^{\top\!} & \!\cdots\! & \zeta_{\tau'}^{\top} \end{bmatrix}^{\top\!}}$, with $\tau,\tau'\!\in\!\mathbb{Z}_{\geq 0},\tau'\!>\!\tau$. We denote with $\mathrm{bdiag}(\zeta_{[\tau,\tau']})$ a block-diagonal matrix featuring the elements of $\zeta_{[\tau,\tau']}$ in its diagonal, while we indicate the Hankel matrix of depth $\ell>0$ associated with $\zeta_{[\tau,\tau']}$, namely $\pazocal{H}_\ell(\zeta_{\left[\tau, \tau'\right]})\in\mathbb{R}^{n_\zeta \ell \times \left( N_\zeta-\ell+1 \right)}$, as
\begin{equation}\label{eq:Hankel}
    \pazocal{H}_\ell(\zeta_{\left[\tau, \tau'\right]}) := \begin{bmatrix}
                     \zeta_\tau & \zeta_{\tau+1} & \cdots & \zeta_{\tau'-\ell+1}\\
                     \zeta_{\tau+1} & \zeta_{\tau+2} & \cdots & \zeta_{\tau'-\ell}\\
                     \vdots & \vdots & \ddots & \vdots\\
                     \zeta_{\tau+\ell-1} & \zeta_{\tau+\ell} & \cdots & \zeta_{\tau'}
\end{bmatrix}.
\end{equation}
Given a matrix $X \!\in\! \mathbb{R}^{m \times n}$, its Moore-Penrose inverse is $X^\dagger \!= \!(X^\top X)^{-1} X^\top$ while $\mathrm{vec}(X)$ is its vectorization. Meanwhile, $[X]_{i:j}$ represents the submatrix of $X$ from the $i$-th to the $j$-th row. Zero matrices are indicated as $\mathbf{0}$ (without specifying their dimensions), while unitary matrices are denoted as $\boldsymbol{1}_{m,n}\!\in\! \mathbb{R}^{m\times n}$. Given two matrices $X \!\in\! \mathbb{R}^{m \times n}$ and $Z \!\in\! \mathbb{R}^{p\times r}$, $X \!\otimes\! Z$ indicates the Kronecker product between the two matrices,  while for $m\!=\!p$ and $n\!=\!r$, $X\! \odot\! Z$ is the Hadamard one. An index set is denoted as $[n]=\{1,\dots,n\}$, with $n\in\mathbb{Z}_{>0}$. 

\section{Setting \& Goal}\label{sec:problem_formulation}
Consider a piecewise affine (PWA) system, whose dynamics are described by the following state-space model: 
	\begin{subequations}\label{eq:PWA_sys}
		\begin{equation}
			\begin{aligned}
				x_{t+1} &= A_{\sigma_t} x_t + B_{\sigma_t} u_t + f_{\sigma_t},\\
				y_t &= C_{\sigma_t} x_t + D_{\sigma_t} u_t + g_{\sigma_t},
			\end{aligned}
		\end{equation}
		where $x_t \in \mathbb{R}^{n_x}$ is the system's state at time $t\in\mathbb{Z}_{\geq0}$, $u_t \in \mathbb{R}^{n_u}$ and $y_t \in \mathbb{R}^{n_y}$ are its input and output, respectively, while, $\sigma_t \in [n_\sigma]$ indicates its \emph{active mode}. The latter is driven by a \emph{complete} polyhedral partition of the state-input space, i.e., 
		\begin{equation}\label{eq:active_mode}
			\sigma_t\!=\!i \!\iff\! \begin{bmatrix} x_t \\  u_t \end{bmatrix} \!\in\! \pazocal{P}_i^{\mathrm{SS}\!}=\!\left\{ \begin{bmatrix} x_t \\ u_t \end{bmatrix}
			~ \mathrm{s.t.}~ P_i^{\mathrm{SS}}\begin{bmatrix}
				x_t\\
				u_t\\
				1
			\end{bmatrix} \!\preceq\! \mathbf{0} \right\}\!,
		\end{equation}
		with $\bigcup_{i=1}^{n_\sigma}\pazocal{P}_{i}^{\mathrm{SS}}=\mathbb{R}^{n_u+n_x}$ and $\overset{\circ}{\pazocal{P}^{\mathrm{SS}}_i} \cap \overset{\circ}{\pazocal{P}^{\mathrm{SS}}_j} = \varnothing$ for all $i\neq j$, with $i,j \in [n_\sigma]$ and $\overset{\circ}{\pazocal{P}_i^{\mathrm{SS}}}$ denoting the interior of the $i$-th polyhedron.
	\end{subequations}	
	Let us assume that this system is \emph{minimal}, both with respect to the dimension of the state and the number of regions, \emph{open-loop stable}, \emph{finite-time observable}, and \emph{controllable} according to the definitions reported in Appendix~\ref{appendix:useful_notions}. 
	
	Suppose that we wish to design a (predictive) controller for such a system, yet we have no access to the matrices $\{(A_i,B_i,C_i,D_i,f_i,g_i)\}_{i=1}^{n_\sigma}$ characterizing the local dynamics of the PWA system, as well as the polyhedral partition of the state-input space dictating the active mode. Meanwhile, assume that we have collected a trajectory of the PWA system of finite length (see Definition~\ref{def:trajectory_PWA} in Appendix~\ref{appendix:useful_notions}) and, thus, we have access to a set of input and \emph{noise-free} output data
	\begin{equation}\label{eq:data}
		\pazocal{D}=\left\{u_{[1,N]}^d,y_{[1,N]}^d\right\}.
	\end{equation}
	
	Under this assumption, our \emph{first} goal is to characterize the behavior of a PWA system using only $\pazocal{D}$, establishing conditions on the \textquotedblleft richness\textquotedblright \ of these data to represent the PWA system. This data-based characterization then allows us to pursue our \emph{second} objective, which is to analyze theoretically and numerically the suitability of two shrinkage-based DeePC schemes for the predictive control of PWA systems.   
	
	\begin{remark}[Why noiseless data]
		Although real-world data are noisy, this preliminary work focuses on a noise-free setting to unveil challenges that can already be faced in controlling a PWA system directly from data in an idealized setting. Treatment of noisy data is postponed to future work.\hfill $\square$
	\end{remark}
	
	\section{PWA systems under behavioral lens}\label{sec:behaviors}
	As a first step, we transition from the state-space representation in \eqref{eq:PWA_sys} to an equivalent input-output description. To this end, apart from the assumption on \eqref{eq:PWA_sys} already introduced in Section~\ref{sec:problem_formulation}, let us consider two additional technical conditions.
	\begin{assumption}[Condition C1 in~\cite{paoletti2009input}]\label{ass:condition_C1}  
		Consider a pair of initial states $x_0\!\neq\! x_0'$, $x_0,x_0' \!\in\! \mathbb{R}^{n_x}$, and an input $\{u_t\}_{t\in \mathbb{Z}_{\geq 0}}$. Let $y_{[0,\ell-1]}$ and $y_{[0,\ell-1]}'$ be the associated outputs according to \eqref{eq:PWA_sys}. If $y_{\tau}\!\neq\! y_{\tau}'$ for all $\tau \!\in\! [0,\ell\!-\!1]$, then $
		y_{\ell}\!\neq\! y_{\ell}'$. \hfill $\square$ 
	\end{assumption} 
	\begin{assumption}[Condition C3 in~\cite{paoletti2009input}]\label{ass:condition_C3}
		For every admissible mode sequence $\sigma_{[0,\ell-1]}$ (see Definition~\ref{def:feasible_modes} in Appendix~\ref{appendix:useful_notions}) of \eqref{eq:PWA_sys}, with $\ell \in \mathbb{Z}_{> 0}$, there exists $\Xi \in \mathbb{R}^{n_y \times n_y (\ell-1)}$ such that the observability matrix
		\begin{equation}\label{eq:observability_matrix}
			\pazocal{O}_{0}^{\ell-2}\!=\!\begin{bmatrix}
            C_{\sigma_0}^{\top\!\!\!}\! &     		(C_{\sigma_{1}}A_{\sigma_0})^{\top\!\!\!} & \!\!\!\cdots\!\!\! &	(C_{\sigma_{\ell-2}}\Pi_{j=0}^{\ell-3}A_{\sigma_{\ell-3-j}})^{\!\top} 		
			\end{bmatrix}^{\top\!\!},\!\!
		\end{equation}
		satisfies $\Xi\pazocal{O}_{0}^{\ell-2}=C_{\sigma_{\ell-1}}\Pi_{j=0}^{\ell-2}A_{\sigma_{\ell-2-j}}$.
		\hfill $\square$ 
	\end{assumption}
	Based on~\cite[Corollary 1]{paoletti2009input}, finite-time observability together with Assumptions~\ref{ass:condition_C1}-\ref{ass:condition_C3} guarantee the existence of a PieceWise Affine model with eXogenous inputs (PWARX) equivalent to \eqref{eq:PWA_sys} according to the following definition (see instead Appendix~\ref{appendix:useful_notions} for the definition of trajectories of \eqref{eq:PWA_sys} and \eqref{eq:PWARX_sys}). 
	\begin{definition}[Equivalence of \eqref{eq:PWA_sys} and \eqref{eq:PWARX_sys}]\label{def:equivalence}
		The PWA representations given in \eqref{eq:PWA_sys} and \eqref{eq:PWARX_sys} are (input-output) equivalent if the set of input-output trajectories of \eqref{eq:PWA_sys} and \eqref{eq:PWARX_sys} coincide.
	\end{definition}
	
	Therefore, the PWA system is equivalently represented by 
	\begin{subequations}\label{eq:PWARX_sys}
		\begin{equation}
			y_t+\sum_{j=1}^{n_a}a_j(s_t)y_{t-j}=\sum_{j=0}^{n_b}b_j(s_t)u_{t-j}+c(s_t),
		\end{equation}
		where $s_t \in [S]$ denotes the active mode of the system at time $t \in \mathbb{Z}_{\geq 0}$, now dictated by  
		\begin{equation}\label{eq:mode_selection}
			s_t=i \iff \begin{bmatrix}
				y_{[t-n_a,t-1]}\\
				u_{[t-n_b,t]}
			\end{bmatrix} \in \pazocal{P}_{i}^{\mathrm{i/o}},
		\end{equation}
		where
		\begin{equation}\label{eq:PWARX_partition}
			\pazocal{P}_{i}^{\mathrm{i/o}\!}\!=\!\left\{\begin{bmatrix}
				y_{[t-n_a,t-1]}\\
				u_{[t-n_b ,t]}
			\end{bmatrix} \mbox{ s.t. } P_{i}^{\mathrm{i/o}\!\!}\begin{bmatrix}
				y_{[t-n_a ,t-1]}\\
				u_{[t-n_b,t]}\\
				1
			\end{bmatrix}\preceq \mathbf{0}\right\}\!,~~i \!\in\! [S],
		\end{equation}
		satisfy $\bigcup_{i=1}^{S}\pazocal{P}_{i}^{\mathrm{i/o}\!}=\mathbb{R}^{n_an_y+(n_b+1)n_u}$ and $\overset{\circ}{\pazocal{P}_{i}^{\mathrm{i/o}\!}} \cap \overset{\circ}{\pazocal{P}_{j}^{\mathrm{i/o}\!}}
        =\varnothing$ for all $i,j \in [S]$, with $i\neq j$. In what follows, we assume that shifting to a PWARX representation preserves controllability.
		\begin{assumption}[Controllability of \eqref{eq:PWARX_sys}]
			The PWARX system in \eqref{eq:PWARX_sys} is controllable.
		\end{assumption}
	\end{subequations}
	Moreover, we will always rely on the following hypothesis.
	\begin{assumption}[Known number of modes]\label{ass:known_numberS}
		The number of modes $S$ of the PWARX system in \eqref{eq:PWARX_sys} is known.
	\end{assumption}
     Note that this assumption is rather limiting in practice. Therefore, future work will focus on lifting it. 
	\begin{remark}[Complexity \eqref{eq:PWARX_sys}]
		As discussed in \cite[Section III.A]{paoletti2009input}, the number of possible modes $S$ of the PWARX representation in \eqref{eq:PWARX_sys} is generally larger than that of the equivalent PWA state-space representation in \eqref{eq:PWA_sys}, i.e., $S \geq n_{\sigma}$. \hfill $\square$  
	\end{remark}

	\subsection{Kernel representation \& behavior of the PWA system}
	By introducing the lag of the system $\rho=\mathrm{max}\{n_a,n_b\}$, the relationship in \eqref{eq:PWARX_sys} leads to the following kernel representation 
	\begin{align}\label{eq:kernel_representation}
		\nonumber &\sum_{j=0}^{\rho} r_j(
		s_{t})q^{-j}w_t=\mathbf{0}  \\
		&\Rightarrow \sum_{i=1}^{S}\left[\sum_{j=0}^{\rho} 
		r_j(i)q^{-j}w_t\right]\mathds{1}(s_t\!=\!i)=\mathbf{0},~~\forall t \in \mathbb{Z}_{\geq 0},
	\end{align}
	where $w_t\!=\!\mathrm{col}(y_t,u_t, 1)$, $q^{-1}$ is the back-shift operator, and $r_j(i)\!=\!\smallmat{a_{j}(i) & -b_j(i) & -c(i)} \!\in\! \mathbb{R}^{n_y\times (n_y+n_u+1)}$ satisfies $a_0(i)\!=\!I$ and $c_j(i)\!=\!\mathbf{0}$, for all $j\! \in\![1,\rho]$ and $i \!\in\! [S]$. Meanwhile, $\mathds{1}(s_t\!=\!i)$ is an indicator function such that $\mathds{1}(s_t\!=\!i)=1$ if $s_t\!=\!i$, and $\mathds{1}(s_t\!=\!i)=0$ otherwise.
    
	Starting from this, we can then define the behavior of the PWA system as
	\begin{equation}\label{eq:PWA_behavior}
		\mathcal{B}^{\mathrm{PWA}\!}\!=\!\left\{(w,\!s) \!\in\! (\mathbb{W}\!\!\times\!\! [S])^{\mathbb{Z}_{\geq 0}}\big|\eqref{eq:mode_selection}\mbox{-}\eqref{eq:PWARX_partition} \mbox{ and }\eqref{eq:kernel_representation} \mbox{ hold}\right\}\!,
	\end{equation}
	where $\mathbb{W}= \mathbb{R}^{n_y+n_u+1}$, as well as the truncated behavior over an horizon $L \in \mathbb{Z}_{\geq 0}$ as
	\begin{align}\label{eq:PWA_behavior_truncated}
		\nonumber \mathcal{B}_{|L}^{\mathrm{PWA}\!}\!=&\!\left\{(w,s) \!\in\! (\mathbb{W}\!\times\! [S])^{L}\big|\exists (\bar{w},\bar{s}) \in \mathcal{B}^{\mathrm{PWA}}  \right.\\
		& \qquad ~ \left. \mbox{ s.t. } w_t=\bar{w}_{t}, s_t=\bar{s}_t, \forall t \in [0,L\!-\!1]\right\}.
	\end{align}
	Moreover, we can define the \emph{projected} and truncated {projected mode behaviors} as
	\begin{align}
		&\mathcal{B}^{\mathrm{PWA}\!}_{[S]}\!=\!\left\{s \!\in\! [S]^{\mathbb{Z}_{\geq 0}}| \exists w \!\in\! \mathbb{W}^{\mathbb{Z}_{\geq 0}} \mbox{ s.t. } (w,s) \!\in\! \mathcal{B}^{\mathrm{PWA}}\right\},\label{eq:projected_behavior}\\
		&\mathcal{B}^{\mathrm{PWA}\!}_{[S]|L}\!=\!\!\left\{s \!\!\in\! [S]^{L}| \exists \bar{s} \!\in\! \mathcal{B}_{[S]}^{\mathrm{PWA}\!\!} \mbox{ s.t. }\! s_t\!\!=\!\!\bar{s}_t, \forall t \!\!\in\! [0,L\!\!-\!1]\right\}\!,\label{eq:truncated_projected_behavior}
	\end{align}
	where the latter can be seen as \textquotedblleft behavioral translations\textquotedblright \ for the concept of \emph{feasible} (infinite and finite) {mode sequence}, respectively, with the latter defined as follows.
	\begin{definition}[PWARX feasible mode sequence]\label{def:feasible_modes2}
		Given $\ell \in \mathbb{Z}_{> 0}$, the mode sequence $s_{[0,\ell-1]}' \in [S]^{\ell}$ is feasible for \eqref{eq:PWARX_sys} if there exists an input/output sequence $\{u_{[0,\ell-1]},y_{[0,\ell-2]}\}$ such that $s_{[0,\ell-1]}=s_{[0,\ell-1]}'$.  
	\end{definition}
        In addition, we can introduce the behavior compatible with a fixed, feasible mode sequence $s \in [S]^{\mathbb{Z}_{\geq 0}}$ as
	\begin{equation}\label{eq:fixed_s_behavior}
		\mathcal{B}^{\mathrm{PWA}}_{s}=\left\{w \in \mathbb{W}^{\mathbb{Z}_{\geq 0}} \big| (w,s) \in \mathcal{B}^{\mathrm{PWA}}\right\},
	\end{equation}
	and its truncated counterpart as
	\begin{equation}\label{eq:fixed_s_behavior_truncated}
		\mathcal{B}^{\mathrm{PWA}\!}_{s|L}\!=\!\left\{w \!\in\! \mathbb{W}^{L} \big| \exists \bar{w} \!\in\! \mathcal{B}^{\mathrm{PWA}\!}_{s} \mbox{ s.t. } w_t\!=\!\bar{w}_t, \forall t \!\in\! [0,L\!-\!1]\right\}.
	\end{equation}
    The equivalence of \eqref{eq:PWA_sys} and \eqref{eq:PWARX_sys} with \eqref{eq:kernel_representation} now allows us to characterize the dimension of $\mathcal{B}^{\mathrm{PWA}\!}_{s|L}$ as follows.
    \begin{lemma}[Dimension of $\mathcal{B}^{\mathrm{PWA}\!}_{s|L}$]\label{lemma:behaviour_dimension}
        Let $L\geq \rho$ and let the state-space PWA system \eqref{eq:PWA_sys} equivalent to \eqref{eq:PWARX_sys} be observable in $L$ steps. Given any feasible sequence $s_{[0,L-1]} \in \mathcal{B}^{\mathrm{PWA}\!}_{[S]|L}$, then the dimension of $\mathcal{B}^{\mathrm{PWA}\!}_{s|L}$ satisfies
        \begin{equation}\label{eq:dimension_behavior}
            \mathrm{dim}(\mathcal{B}^{\mathrm{PWA}\!}_{s|L})=n_x+n_uL +1.
        \end{equation}
    \end{lemma}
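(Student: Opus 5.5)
Once the mode sequence $s_{[0,L-1]}$ is fixed, the system is linear time-varying. So the plan is to parametrize $\mathcal{B}^{\mathrm{PWA}}_{s|L}$ explicitly through the equivalent state-space model \eqref{eq:PWA_sys} and count free parameters. Since $w_t=\mathrm{col}(y_t,u_t,1)$ contains the constant entry $1$, the set is not a linear subspace. I would therefore read ``dimension'' as the dimension of the linear span of the lifted vectors $w_{[0,L-1]}\in\mathbb{R}^{(n_y+n_u+1)L}$, equivalently the affine dimension of the set of $(u,y)$ trajectories plus one for the homogeneous constant direction.

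By Definition~\ref{def:equivalence} and \cite[Corollary 1]{paoletti2009input}, every $w\in\mathcal{B}^{\mathrm{PWA}}_{s|L}$ is generated by \eqref{eq:PWA_sys} for some initial state $x_0$, some input $u_{[0,L-1]}$, and the state-space mode sequence $\sigma_{[0,L-1]}$ associated with $s$. With that mode sequence frozen, stacking outputs gives
\begin{equation*}
y_{[0,L-1]}=\pazocal{O}_0^{L-1}x_0+\pazocal{T}_\sigma u_{[0,L-1]}+\pazocal{F}_\sigma,
\end{equation*}
where $\pazocal{T}_\sigma$ is the time-varying Toeplitz matrix of Markov parameters and $\pazocal{F}_\sigma$ collects the affine terms $f_{\sigma_t},g_{\sigma_t}$. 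Hence
\begin{equation*}
w_{[0,L-1]}=\Pi\,M_\sigma\,\mathrm{col}(x_0,u_{[0,L-1]},1),
\end{equation*}
with $\Pi$ a fixed permutation and
\begin{equation*}
M_\sigma=\smallmat{\pazocal{O}_0^{L-1} & \pazocal{T}_\sigma & \pazocal{F}_\sigma\\ \mathbf{0} & I & \mathbf{0}\\ \mathbf{0} & \mathbf{0} & \boldsymbol{1}_{L,1}}.
\end{equation*}
The span of the behavior is then contained in $\mathrm{Im}(M_\sigma)$, which has at most $n_x+n_uL+1$ columns. This gives the upper bound.

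For the lower bound I show that $M_\sigma$ has full column rank. Suppose $M_\sigma\,\mathrm{col}(\xi,v,\alpha)=\mathbf{0}$. The last block row forces $\alpha=0$, the middle block forces $v=0$, and then $\pazocal{O}_0^{L-1}\xi=0$ forces $\xi=0$ by $L$-step observability (with $L\geq\rho$ guaranteeing the horizon is long enough). So $M_\sigma$ has full column rank.

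The main obstacle is that not every pair $(x_0,u)$ actually realizes the fixed mode sequence. The partition constraints \eqref{eq:active_mode} (equivalently \eqref{eq:mode_selection}) carve out a polyhedron $\pazocal{C}_\sigma$ of admissible $(x_0,u_{[0,L-1]})$. The behavior is the image of $\pazocal{C}_\sigma\times\{1\}$, so its span equals $\mathrm{Im}(M_\sigma)$ only if the cone generated by $\pazocal{C}_\sigma\times\{1\}$ is full-dimensional. I would handle this by interpreting the dimension in terms of the affine hull. I would first argue that a feasible sequence $s$ admits trajectories in the interior of the regions, and that $\pazocal{C}_\sigma$ has nonempty interior, using minimality of the partition and the nonoverlapping interiors. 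If a lower-dimensional feasible set appears, I would restrict the claim to the affine hull, which is what the fundamental-lemma construction later needs anyway. Given nonempty interior, the affine hull of $\pazocal{C}_\sigma$ is all of $\mathbb{R}^{n_x+n_uL}$, and appending the constant coordinate yields exactly $n_x+n_uL+1$ independent directions, which is \eqref{eq:dimension_behavior}.
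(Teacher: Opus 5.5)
Your parametrization and rank count are the paper's proof. The paper writes $\mathrm{col}(y_{[0,L-1]},u_{[0,L-1]},\boldsymbol{1}_L)=\pazocal{B}^{\mathrm{PWA}}\mathrm{col}(x_0,u_{[0,L-1]},1)$ with exactly your block matrix $M_\sigma$, and gets $\mathrm{rank}(\pazocal{B}^{\mathrm{PWA}})=n_x+n_uL+1$ from $L$-step observability. It then simply asserts that, by input--output equivalence, $\pazocal{B}^{\mathrm{PWA}}$ is a basis of $\mathcal{B}^{\mathrm{PWA}}_{s|L}$. The obstacle you raise is real, and it is exactly what that assertion skips. Only $(x_0,u_{[0,L-1]})\in\pazocal{C}_\sigma$ realize the mode sequence, and $M_\sigma$ is injective, so the span of the behavior has dimension $\dim\mathrm{aff}(\pazocal{C}_\sigma)+1$. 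Hence \eqref{eq:dimension_behavior} holds if and only if $\pazocal{C}_\sigma$ has nonempty interior.

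The gap is in how you discharge this. Minimality of the partition and disjointness of the interiors constrain the regions themselves. They say nothing about how the dynamics carry trajectories across shared boundaries, so they do not give $\pazocal{C}_\sigma$ a nonempty interior. Take the closed regions $\{x_1\le 0\}$ and $\{x_1\ge 0\}$ as in \eqref{eq:active_mode}, with $y=x\in\mathbb{R}^2$, $B_1=B_2=\mathrm{col}(0,1)$, $A_1=\mathrm{diag}(0.5,0.5)$ and $A_2=\mathrm{diag}(0.4,0.5)$. The system is stable and observable, with two genuinely different modes. The sequence $(1,2,\dots)$ is feasible, but only with $x_{1,0}=0$, because mode $1$ maps $\{x_1\le 0\}$ into itself independently of $u$. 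So $\pazocal{C}_\sigma$ lies in a hyperplane.

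Your fallback of restricting the claim to the affine hull does not rescue the statement. By the injectivity you proved, a proper affine hull gives a dimension strictly smaller than $n_x+n_uL+1$. To close the argument you need one of two things. Either add an explicit hypothesis that every feasible sequence is realized on a set of $(x_0,u_{[0,L-1]})$ with nonempty interior. Or read $\mathcal{B}^{\mathrm{PWA}}_{s|L}$ as the mode-frozen linear time-varying behavior $\Pi\,\mathrm{Im}(M_\sigma)$, ignoring the partition inequalities. The second is how the paper implicitly treats it, including in the proof of Theorem~\ref{thm:fundamental_PWA}, where this set is identified with the image of a data matrix.
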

    \begin{proof}
        The proof can be found in Appendix~\ref{appendix:proof_dimension_behavior}.
    \end{proof}    
      \begin{remark}[Lag and partition]
        In the remainder of the paper, we will always consider the lag $\rho$ rather than $n_a$ and $n_b$. In turn, this implies that the active mode will be dictated by the equivalent condition
        \begin{subequations}\label{eq:mode_selection2}
        \begin{equation}\label{eq:PWARX_partition1}
        s_t =i \iff \begin{bmatrix}
            y_{[t-\rho,t-1]}\\
            u_{[t-\rho,t]}
        \end{bmatrix} \in \pazocal{P}_i
        \end{equation}
        where
        \begin{equation}\label{eq:PWARX_partition2}
			\pazocal{P}_{i}\!=\!\left\{\!\begin{bmatrix}
				y_{[t-\rho,t-1]}\\
				u_{[t-\rho ,t]}
			\end{bmatrix} \mbox{ s.t. } P_{i}\begin{bmatrix}
				y_{[t-\rho ,t-1]}\\
				u_{[t-\rho,t]}\\
				1
			\end{bmatrix}\!\preceq\! \mathbf{0}\!\right\}\!,~i \!\in\! [S],\!
		\end{equation}
        where $P_{i}$ is similar to $P_{i}^{\mathrm{i/o}}$, yet potentially featuring some rows of zeros. 
        \end{subequations}
    \end{remark}
    \subsection{PWA behaviors from data without structural insights}
    Let us now compactly recast the kernel in \eqref{eq:kernel_representation} as follows
    \begin{subequations}\label{eq:kernel_equivalent2}
		\begin{equation}
			\sum_{j=0}^{\rho} r_jq^{-j}\tilde{w}_t^{e}=\mathbf{0},
		\end{equation}  
		with 
		\begin{equation}\label{eq:stacked_coefficients}
			r_j\!=\!\begin{bmatrix}
				r_j(1) & r_j(2) & \ldots & r_j(S) 
			\end{bmatrix} \in \mathbb{R}^{n_y\times S(n_y+n_u+1)},
		\end{equation}
		and $\tilde{w}_{t}^{e}=\mathrm{col}(\tilde{y}^{e}_{t},\tilde{u}_{t}^{e}) \in \mathbb{R}^{S(n_y+n_u+1)}$ where we define 
		\begin{equation}\label{eq:useful_definitions}
        \begin{aligned}
			&\tilde{y}^{e}_{t-j}\!=\! \mathds{1}(s_t)  \otimes y_{t-j} \in \mathbb{R}^{Sn_y},\\ &\tilde{u}_{t-j}^{e}\!=\! \begin{bmatrix}
			    \mathds{1}(s_t)  \!\otimes\! u_{t-j} \\ \mathds{1}(s_t)
			\end{bmatrix}\!\!\in \mathbb{R}^{S(n_{u\!}+1)},
            \end{aligned}
		\end{equation}
	\end{subequations}
	 and $\mathds{1}(s_t)\!\!=\!\!\smallmat{\!
		\mathds{1}(s_t=1) &
		\mathds{1}(s_t=2) &\!
		\cdots \!&
		\mathds{1}(s_t=S)\!}^{\top\!}$, for all $j \!\in\! [0,\rho]$ and $\forall t \in \mathbb{Z}_{\geq 0}$. Similar to what is done in \cite[Section VI.E]{faye2025willems}, this reformulation allows us to obtain a \emph{linear embedding} of the PWA system and to define (see \cite[Section VI]{willems1986time}) the following \emph{linear}, \emph{complete}, and \emph{shift-invariant} behavior:
	\begin{equation}\label{eq:tilde_A_behavior}
		\tilde{\mathcal{B}}^{e}=\left\{\tilde{w}^{e} \in \tilde{\mathbb{W}}^{ \mathbb{Z}_{\geq 0}}\big| R(q)\tilde{w}^{e}=\mathbf{0} \right\},
	\end{equation}
	where $\tilde{\mathbb{W}}\!=\!\mathbb{R}^{S(n_y+n_u+1)}$, $R(q)\!=\!\sum_{j=0}^{\rho}r_jq^{-j}$ and $r_j$ defined in \eqref{eq:stacked_coefficients}. Note that, as given by a linear embedding, $\tilde{\mathcal{B}}^{e}$ disregards the interdependence between $\tilde{u}_{t}^{e}$, $\tilde{y}_{t}^{e}$, and $s_{t}$. Based on these definitions and given a feasible mode sequence $s \!\in\! [S]^{\mathrm{Z}_{\geq 0}}$, let $\Pi_{y}$ be the projection of $\tilde{w}^{e}$ onto $y \!\in\! \mathcal{B}_{s}^{\mathrm{PWA}}$ and $\Pi_{u}$ be the projection of $\tilde{w}^{e}$ onto $\smallmat{u^{\!\top} & 1}^{\!\top\!} \!\in\! \mathcal{B}_{s}^{\mathrm{PWA}}$, and, let $\Pi_{\tilde{y}^{e}}$ and $\Pi_{\tilde{u}^{e}}$ be the projections of $\tilde{w}^{e}$ onto $\tilde{y}^{e} \in \tilde{\mathcal{B}}^{e}$ and $\tilde{u}^{e} \in \tilde{\mathcal{B}}^{e}$, respectively. In the same spirit as \cite[Lemma 11]{markovsky2022data}, we can formalize the relation between $\tilde{\mathcal{B}}^{e}$ and $\mathcal{B}^{\mathrm{PWA}}_{s}$ as follows.
    \begin{lemma}[LTI embedding of PWA behaviors]
    The behavior $\tilde{\mathcal{B}}^{e}$ in \eqref{eq:tilde_A_behavior} embeds $\mathcal{B}^{\mathrm{PWA}}_{s}$ in \eqref{eq:PWA_behavior}, i.e., $\mathcal{B}_{s}^{\mathrm{PWA}} \subseteq \Pi_{w} \tilde{\mathcal{B}}^{e}$. Such an embedding is exact, namely $\mathcal{B}_{s}^{\mathrm{PWA}}=\tilde{\mathcal{B}}^{e}$, when the constraints    
\begin{equation}\label{eq:constraints_for_equality}
    \Pi_{\tilde{y}^{e}}\tilde{w}_{t-j\!}^{e}\!=\!\mathds{1}(s_t) \!\otimes\!  \Pi_{y}\tilde{w}_{t-j\!}^{e},~\Pi_{\tilde{u}^{e}}\tilde{w}_{t-j\!}^{e}\!=\!\mathds{1}(s_t) \!\otimes\!  \Pi_{u}\tilde{w}_{t-j\!}^{e},
 \end{equation}
    are imposed for all $j \in [0,\rho]$ and $t \in \mathbb{Z}_{\geq 0}$, with $s_t$ verifying \eqref{eq:mode_selection2}. Therefore, $\mathcal{B}_{s}^{\mathrm{PWA}}$ can be equivalently represented as
    \begin{align}\label{eq:PWA_behvaior_constr}
        \nonumber \mathcal{B}_{s}^{\mathrm{PWA}\!\!}=\!\{\Pi_{w}\tilde{w}^{e}\big|&\tilde{w}^{e}\!\in\! \tilde{\mathcal{B}}^{e}\mbox{ and \eqref{eq:constraints_for_equality} holds } \forall j \in [0,\rho],\\
        & \qquad \forall s_t \mbox{ satisfying \eqref{eq:mode_selection2}, } t \in \mathbb{Z}_{\geq 0}\},
    \end{align}
    where $\Pi_{w}$ is the projection of $\tilde{w}^{e}$ onto $w$.
    \end{lemma}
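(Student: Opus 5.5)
The plan is to prove the two inclusions separately, working directly from the kernel representation \eqref{eq:kernel_representation} and its stacked form \eqref{eq:kernel_equivalent2}.

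\emph{Embedding.} Take $w=\mathrm{col}(y,u,1)\in\mathcal{B}_{s}^{\mathrm{PWA}}$ for a feasible mode sequence $s$. Build the lifted signal $\tilde{w}^{e}$ according to \eqref{eq:useful_definitions}: at each time $t$ and for each $j\in[0,\rho]$, set $\tilde{y}^{e}_{t-j}=\mathds{1}(s_t)\otimes y_{t-j}$ and $\tilde{u}^{e}_{t-j}=\mathrm{col}(\mathds{1}(s_t)\otimes u_{t-j},\mathds{1}(s_t))$. Since $\mathds{1}(s_t)$ has exactly one unit entry, namely at position $s_t$, the product $r_j\tilde{w}^{e}_{t-j}$ reduces to $r_j(s_t)w_{t-j}$. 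This is because the blocks $r_j(i)$ with $i\neq s_t$ multiply zero vectors. Summing over $j$ gives $R(q)\tilde{w}^{e}_t=\sum_j r_j(s_t)q^{-j}w_t=\mathbf{0}$ by \eqref{eq:kernel_representation}, so $\tilde{w}^{e}\in\tilde{\mathcal{B}}^{e}$. The projection $\Pi_w$, which reads off the active block, returns $w$. Hence $\mathcal{B}_{s}^{\mathrm{PWA}}\subseteq\Pi_w\tilde{\mathcal{B}}^{e}$.

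\emph{Reverse inclusion.} Conversely, take $\tilde{w}^{e}\in\tilde{\mathcal{B}}^{e}$ satisfying \eqref{eq:constraints_for_equality} with $s_t$ selected by \eqref{eq:mode_selection2}. Define $w=\Pi_w\tilde{w}^{e}$. The constraints force every block of $\tilde{w}^{e}_{t-j}$ except block $s_t$ to vanish, and force that block to equal $w_{t-j}$, with the affine entry equal to $1$. Substituting into $R(q)\tilde{w}^{e}=\mathbf{0}$ therefore recovers \eqref{eq:kernel_representation} at every $t$. Because $s_t$ is consistent with the partition \eqref{eq:mode_selection2}, which is equivalent to \eqref{eq:mode_selection}--\eqref{eq:PWARX_partition}, the pair $(w,s)$ lies in $\mathcal{B}^{\mathrm{PWA}}$, so $w\in\mathcal{B}^{\mathrm{PWA}}_{s}$. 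Combining both inclusions yields the set equality \eqref{eq:PWA_behvaior_constr}. The statement's "$\mathcal{B}_{s}^{\mathrm{PWA}}=\tilde{\mathcal{B}}^{e}$" is then read as equality with the constrained and projected set.

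\emph{Main obstacle.} The delicate point is that the lifted vector $\tilde{w}^{e}_{t-j}$ is indexed by both the current time $t$ and the lag $j$. The same sample $w_{t-j}$ is embedded using the mode $s_t$, not $s_{t-j}$. As a result, $\tilde{w}^{e}$ is not a single time signal in the usual sense, and shift-invariance of $\tilde{\mathcal{B}}^{e}$ has to be interpreted for the window-wise lifted trajectory $\{\tilde{w}^{e}_{[t-\rho,t]}\}_t$. I would handle this by working window by window: I will define, for each $t$, the lifted window $\tilde{w}^{e}_{[t-\rho,t]}$ and check the kernel equation only on it, which is exactly what both \eqref{eq:kernel_equivalent2} and \eqref{eq:constraints_for_equality} require. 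I would also note explicitly that without \eqref{eq:constraints_for_equality} the inclusion can be strict. For instance, one can choose $\tilde{w}^{e}$ with several nonzero blocks satisfying $R(q)\tilde{w}^{e}=\mathbf{0}$ that does not correspond to any single-mode trajectory, which shows why the embedding is in general only an inclusion.
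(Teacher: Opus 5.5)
Your proposal follows the paper's own argument: the forward inclusion comes from the one-hot lifting in \eqref{eq:useful_definitions}, which makes $r_j$ act as $r_j(s_t)$, and exactness comes from the sparsity pattern imposed by \eqref{eq:constraints_for_equality}. The paper only asserts both directions (``by construction'', ``straightforward to verify''), whereas you write out the reverse inclusion explicitly. Your window-by-window caveat is a real subtlety that the paper's proof skips: since \eqref{eq:useful_definitions} lifts $w_{t-j}$ with $\mathds{1}(s_t)$ rather than $\mathds{1}(s_{t-j})$, consecutive lifted windows need not agree when the mode switches, so the inclusion in $\tilde{\mathcal{B}}^{e}$ holds exactly in the window-wise sense you describe, which is also how the paper's construction has to be read.
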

    \begin{proof}
    	\begin{figure}[!tb]
		\centering
		\begin{tabular}{ccc}
            \subfigure[\label{Fig:PWA_behavior}]{\scalebox{.55}{\begin{tikzpicture}
						\node[coordinate] (start) {};
						\node[curve box, below of=start,node distance=0cm,minimum height=2cm,minimum width=2cm,fill=blue!5!white] (block1) {\LARGE{$\mathcal{B}^{\mathrm{PWA}}_{s}$}};
						\node[coordinate,left of=block1,node distance=1cm] (aid_1) {};
						\node[coordinate,above of=aid_1,node distance=.8cm] (Input_sys_1) {};
						\node[draw,circle,left of=Input_sys_1,node distance=.75cm,inner sep=.7mm, minimum size=1mm, fill=white] (Pin_1) {};
						\node[coordinate, above of=aid_1,node distance=.4cm] (Input_sys_2) {};
						\node[draw,circle,left of=Input_sys_2,node distance=.75cm,inner sep=.7mm, minimum size=1mm, fill=white] (Pin_2) {};
						\node[coordinate, below of=aid_1,node distance=.25cm] (aid_2) {};
						\node[draw,circle,left of=aid_2,node distance=.375cm,inner sep=0, minimum size=1mm, fill=black] (dot_mid) {};
						\node[draw,circle,above of=dot_mid,node distance=.25cm,inner sep=0, minimum size=1mm, fill=black] (dot_up) {};
						\node[draw,circle,below of=dot_mid,node distance=.25cm,inner sep=0, minimum size=1mm, fill=black] (dot_down) {};
						\node[coordinate,below of=aid_1,node distance=.8cm] (Input_sys_end) {};
						\node[draw,circle,left of=Input_sys_end,node distance=.75cm,inner sep=.7mm, minimum size=1mm, fill=white] (Pin_end) {};
						\node[coordinate, left of=aid_1, node distance=1cm] (aid_3) {};
						\node[coordinate, above of=aid_3, node distance=1cm] (graph_1) {};
						\node[coordinate, below of=aid_3, node distance=1cm] (graph_2) {};
						
						\draw[-]        (Pin_1)   -- (Input_sys_1);
						\draw[-]        (Pin_2)   -- (Input_sys_2);
						\draw[-]        (Pin_end)   -- (Input_sys_end);
						\draw [decorate,decoration={brace,amplitude=10pt,mirror},  line width=0.5pt] (graph_1)  -- (graph_2) node[midway,xshift=-.75cm]{\LARGE{$w$}};
			\end{tikzpicture}}}
			& 
			\subfigure[\label{Fig:tilde_A_behavior}]{
				\scalebox{.55}{\begin{tikzpicture}
						\node[coordinate] (start) {};
						\node[curve box, below of=start,node distance=0cm,minimum height=2cm,minimum width=2cm,fill=white] (block1) {\LARGE{$\tilde{\mathcal{B}}^{e}$}};
						\node[coordinate,left of=block1,node distance=1cm] (aid_1) {};
						\node[coordinate,above of=aid_1,node distance=.8cm] (Input_sys_1) {};
						\node[draw,circle,left of=Input_sys_1,node distance=.75cm,inner sep=.7mm, minimum size=1mm, fill=white] (Pin_1) {};
						\node[coordinate, above of=aid_1,node distance=.4cm] (Input_sys_2) {};
						\node[draw,circle,left of=Input_sys_2,node distance=.75cm,inner sep=.7mm, minimum size=1mm, fill=white] (Pin_2) {};
						\node[coordinate, below of=aid_1,node distance=.25cm] (aid_2) {};
						\node[draw,circle,left of=aid_2,node distance=.375cm,inner sep=0, minimum size=1mm, fill=black] (dot_mid) {};
						\node[draw,circle,above of=dot_mid,node distance=.25cm,inner sep=0, minimum size=1mm, fill=black] (dot_up) {};
						\node[draw,circle,below of=dot_mid,node distance=.25cm,inner sep=0, minimum size=1mm, fill=black] (dot_down) {};
						\node[coordinate,below of=aid_1,node distance=.8cm] (Input_sys_end) {};
						\node[draw,circle,left of=Input_sys_end,node distance=.75cm,inner sep=.7mm, minimum size=1mm, fill=white] (Pin_end) {};
						\node[coordinate, left of=aid_1, node distance=1cm] (aid_3) {};
						\node[coordinate, above of=aid_3, node distance=1cm] (graph_1) {};
						\node[coordinate, below of=aid_3, node distance=1cm] (graph_2) {};
						
						\draw[-]        (Pin_1)   -- (Input_sys_1);
						\draw[-]        (Pin_2)   -- (Input_sys_2);
						\draw[-]        (Pin_end)   -- (Input_sys_end);
						\draw [decorate,decoration={brace,amplitude=10pt,mirror},  line width=0.5pt] (graph_1)  -- (graph_2) node[midway,xshift=-.75cm]{\LARGE{$\tilde{w}^{e}$}};
			\end{tikzpicture}}}
			& 
			\subfigure[\label{Fig:behavior_set}]{
				\scalebox{.55}{\begin{tikzpicture}
						\node[coordinate] (start) {};
						\node[coordinate,right of=start,node distance=.5cm] (aid_1) {};
						
						\draw[rotate=-5] (start) ellipse (2cm and 1cm) node[midway,xshift=-1.25cm, yshift=.35cm, rotate=5]{\Large{$\tilde{\mathcal{B}}^{e}$}};
						\draw[fill = blue!5!white, rotate=-5] (aid_1) ellipse (1.3cm and .7cm) node[midway,xshift=.75cm, rotate=5]{\Large{$\mathcal{B}_{s}^{\mathrm{PWA}}$}};
						
			\end{tikzpicture}}}
		\end{tabular}
		\caption{PWA system behavior under a feasible mode sequence (a), its LTI embedding with mode-extended signals (b), and their relationship (c), inspired by \cite[Figure 2]{verhoek2025behavioral}.}\label{fig:data_structures}
	\end{figure}
        By construction (see \eqref{eq:useful_definitions}) $\tilde{w}^{e}_{t} \!\in\! \tilde{\mathcal{B}}^{e}$ embeds $w_{t}\!=\!\mathrm{col}(y_{t},u_{t},1)\! \in \!\mathcal{B}_{s}^{\mathrm{PWA}}$ for all $t\! \in\! \mathbb{Z}_{\geq 0}$. At the same time, $\tilde{w}^{e} \!\in\! \tilde{\mathbb{W}}$ might not match the sparsity pattern induced by the active mode $s_t$ at a given time instant, as $\tilde{\mathcal{B}}^{e}$ disregards the interdependence between $\tilde{u}_{t}^{e}$, $\tilde{y}_{t}^{e}$, and $s_{t}$ (in turn depending on $w$ according to \eqref{eq:PWARX_partition1}-\eqref{eq:PWARX_partition2}). Therefore, the embedding is not exact if $\tilde{w}_{t}^{e}$ does not match a sparsity pattern that is coherent with the active mode at time $t$. It is straightforward to verify that such a sparsity pattern is enforced through \eqref{eq:constraints_for_equality}, guaranteeing that the embedding is exact. 
    \end{proof}
    This result allows us to formally show that behavior \eqref{eq:tilde_A_behavior} over-approximates the PWA one, as exemplified in~\figurename{\ref{fig:data_structures}}. At the same time, it allows us to take advantage of existing results on data-driven LTI representations \cite{willems2005note,markovsky2022identifiability} to characterize finite length trajectories $\tilde{w}_{[0,L-1]}^{e} \!\in\! \tilde{\mathcal{B}}_{|L}$ from data. 
    
    Indeed, given a dataset $\tilde{\pazocal{D}}=\{\tilde{y}_{[1,N]}^{e,d},\tilde{u}_{[1,N]}^{e,d}\}$ satisfying the persistence of excitation condition (see \cite[Theorem 17]{markovsky2022identifiability})  
       \begin{equation}\label{eq:generalized_condition}
        \mathrm{rank}(\pazocal{H}_{L}(\tilde{w}_{[1,N]}^{e,d}))=n_x+S(n_u+1)L,
    \end{equation}
     then for any $\tilde{w}_{[1,L]}^{e} \in \tilde{\mathcal{B}}_{|L}$ there exists $\tilde{g} \in \mathbb{R}^{N-L+1}$ such that
    \begin{equation}\label{eq:extended_data_driven}
        \pazocal{H}_{L}(\tilde{w}_{[1,N]}^{e,d})\tilde{g}=\tilde{w}_{[0,L-1]}^{e}.
    \end{equation}
    In turn, provided a feasible mode sequence $s_{[0,L-1]} \in \mathcal{B}_{[S]|L}^{\mathrm{PWA}}$, such a data-driven characterization can be \emph{restricted} to represent $\mathcal{B}_{s|L}^{\mathrm{PWA}}$ in \eqref{eq:fixed_s_behavior_truncated}. Specifically, by explicitly reintroducing the dependence of $\tilde{w}_{[0,L-1]}^{e}$ on $s_{[0,L-1]}$ as
    \begin{equation}
        \tilde{w}_{[0,L-1]}^{e}\!=\!\!\left[\mathrm{bdiag}(\mathds{1}(s_{[0,L-1]}))\otimes I_{n_y+n_u+1}\right]w_{[0,L-1]},
    \end{equation}
    where (with an abuse of notation) $\mathds{1}(s_{[0,L-1]})$ indicates the column vector stacking $\mathds{1}(s_{k})$ for $k \in [0,L-1]$, we enforce a specific sparsity pattern on the right-hand side of \eqref{eq:extended_data_driven}, which has to be matched by its left-hand side. By relying on \eqref{eq:constraints_for_equality}, this matching can be imposed by further restricting \eqref{eq:extended_data_driven} to
    \begin{equation}
        \pazocal{H}_{L}^{s}(\tilde{w}_{[1,N]}^{e,d})\tilde{g}\!=\!\!\left[\mathrm{bdiag}(\mathds{1}(s_{[0,L-1]}))\otimes I_{n_y+n_u+1}\right]w_{[0,L-1]},
    \end{equation}
    where $$\pazocal{H}_{L}^{s}(\tilde{w}_{[1,N]}^{e,d})=\left[\pazocal{S}_{n_y+n_u+1}\boldsymbol{1}_{N-L+1}^{\top}\right] \odot \pazocal{H}_{L}(\tilde{w}_{[1,N]}^{e,d}),$$
    and $\pazocal{S}_{n_y+n_u+1}\!=\!\mathds{1}(s_{[0,L-1]})\otimes \boldsymbol{1}_{n_y+n_u+1}$.
   This relationship features a set of trivial equalities (namely, $\mathbf{0}\!=\!\mathbf{0}$), which can be removed to obtain the data-driven characterization of $\mathcal{B}_{s|L}^{\mathrm{PWA}}$ matching the dimension in \eqref{eq:dimension_behavior}. 

   While this approach allows us to find a data-driven representation of $\mathcal{B}_{s|L}^{\mathrm{PWA}}$, it disregards the \emph{local affine nature} of the PWA system highlighted by \eqref{eq:kernel_representation}. Indeed, the latter shows that the kernel representation of a PWA system is a weighted sum of $S$ affine kernels, associated to the $S$ modes of the system, providing a different (and explainable) perspective on $\mathcal{B}_{s|L}^{\mathrm {PWA}}$ than the LTI embedding. Therefore, rather than exploiting the latter, we will leverage this insight, showing $(i)$ how to characterize affine local behaviors from data, and introducing $(ii)$ a data-driven characterization of PWA behaviors. 

    \subsection{Representing local behaviors from data}\label{sec:local_models}
    Let us consider the following \textquotedblleft piece\textquotedblright \ of \eqref{eq:kernel_representation}:
    \begin{equation}\label{eq:local_kernel}
        \sum_{j=0}^{\rho}r_j(i)q^{-j}\tilde{w}_t^{i}=\mathbf{0},
    \end{equation}
    where $\tilde{w}_{t}^{i}=\mathrm{col}(\tilde{y}_{t}^{i},\tilde{u}_{t}^{i},\mathds{1}(s_t=i))$ and 
        \begin{equation}\label{eq:useful_relations_tilde}
            \tilde{y}_{t-j}^{i}=\mathds{1}(s_t=i)y_{t-j},~~~\tilde{u}_{t-j}^{i}=\mathds{1}(s_t=i)u_{t-j},
        \end{equation}
        are extended signals embedding the mode selector, for all $j \!\in\! [0,\rho]$ and for all $t \!\in\! \mathbb{Z}_{\geq 0}$. According to this kernel representation, the \emph{affine behavior} of the $i$-th mode of the PWA system \eqref{eq:PWARX_sys} is dictated by
	\begin{equation}\label{eq:local_behavior}
		\mathcal{B}_{i}^{\mathrm{A}}=\left\{\tilde{w}^{i}\in \mathbb{W}^{\mathbb{Z}_{\geq 0}} \big| R_i(q)\tilde{w}^{i}=\mathbf{0}\right\},
	\end{equation}
	where $\mathbb{W}\!=\!\mathbb{R}^{n_u+n_y+1}$ and $R_i(q)\!=\!\sum_{j=0}^{\rho} r_{j}(i)q^{-j}$. This guarantees that the local affine behavior $\mathcal{B}_{i}^{\mathrm{A}}$ is shift-invariant and complete, and thus that $\mathcal{B}_{i}^{\mathrm{A}} \!\subseteq\! \mathcal{B}^{\mathrm{A}}$, where $\mathcal{B}^{\mathrm{A}}$ characterizes the class of affine systems as defined below\footnote{An alternative definition of affine systems can be found in \cite{markovsky2025represent}.}.
	\begin{definition}[Affine systems]\label{def:affine_systems}
		The tuple $(\mathbb{Z}_{\geq 0},\mathbb{W},\mathcal{B}^{\mathrm{A}})$ represents the class of affine systems if $\mathbb{W} \subseteq \mathbb{R}^{n_u+n_y+1}$ and $\mathcal{B}^{\mathrm{A}}$ is a shift-invariant, complete subspace of $\mathbb{W}^{\mathbb{Z}_{\geq 0}}$, i.e., closed in the topology of point-wise convergence (see~\cite[Section VI]{willems1986time}).
	\end{definition}
	From $\mathcal{B}_{i}^{\mathrm{A}}$, we can then define the truncated behavior of the $i$-th mode over a horizon $L$ as
	\begin{equation}\label{eq:truncated_local_behavior}
		\mathcal{B}_{i|L}^{\mathrm{A}}\!\!=\!\left\{\tilde{w}^{i\!} \!\in\! \mathbb{W}^{L} \big|  \exists \bar{w}^{i\!} \!\in\! \mathcal{B}_{i}^{\mathrm{A}\!} \mbox{ s.t. } \tilde{w}_{t}^{i}\!=\!\bar{w}_{t}^{i}, \forall  t \!\in\! [0,L\!-\!1]\right\}\!,\!\!
	\end{equation}   
	which can be characterized from data using the following extension of Willems' fundamental lemma~\cite{martinelli2022data}.
	\begin{theorem}[Mode's data-driven representation]\label{thm:affine_dd}
		Let the $i$-th affine sub-system with behavior $\mathcal{B}_{i}^{\mathrm{A}}$ in \eqref{eq:local_behavior} be controllable. Moreover, let $y_{[1,N_{i}]}^{i,d}$ be the outputs generated by it in response to an input sequence $u_{[1,N_{i}]}^{i,d}$ that is persistently exciting of order $n_x+L+1$. Then, $\{{u}_{[1,L-1]}^{i},{y}_{[1,L-1]}^{i}\}$ is an input-output trajectory of the $i$-th mode of \eqref{eq:PWARX_sys} if and only if there exists $\pazocal{G}_{i} \in \mathbb{R}^{N_{i}-L+1}$ such that 
		\begin{subequations}\label{eq:data_driven_local_behavior}
			\begin{align}
				& \begin{bmatrix}\pazocal{H}_{L}(u_{[1,N_{i}]}^{i,d})\\
					\pazocal{H}_{L}(y_{[1,N_{i}]}^{i,d})
				\end{bmatrix}\pazocal{G}_{i}=\begin{bmatrix}
					u_{[0,L-1]}^{i}\\
					y_{[0,L-1]}^{i}\end{bmatrix},\\
				& \boldsymbol{1}_{N_i-L+1}^{\top}\pazocal{G}_{i}=1,
			\end{align}
		\end{subequations}
		where $\pazocal{H}_{L}(u_{[1,N_{i}]}^{i,d})$ and $\pazocal{H}_{L}^{i}(y_{[1,N_{i}]}^{i,d})$ are Hankel data matrices defined as in \eqref{eq:Hankel}.
	\end{theorem}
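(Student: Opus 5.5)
This result is the affine version of Willems' fundamental lemma from \cite{martinelli2022data}, specialized to the $i$-th local behavior $\mathcal{B}_{i}^{\mathrm{A}}$. My plan is to derive it by lifting the affine subsystem to an LTI one and then invoking the classical LTI lemma \cite{willems2005note}. Concretely, I write the $i$-th mode in state-space form $x_{t+1}=A x_t+Bu_t+f$, $y_t=Cx_t+Du_t+g$. Such a realization exists because $R_i(q)$ is an ARX kernel with lag $\rho$, and it can be taken minimal and controllable by hypothesis. I then augment the state with a constant component $z_t\equiv 1$, giving $\bar{x}_t=\mathrm{col}(x_t,z_t)$, and treat the constant $1$ as an extra, artificially constant ``input''. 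The data relation is then linear in $\mathrm{col}(u_t,1)$ and $y_t$. Imposing $\boldsymbol{1}^{\top}\pazocal{G}_{i}=1$ is exactly the requirement that the constant channel of the combined trajectory equals $1$, because every Hankel column carries a $1$ in that channel.

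\emph{Sufficiency.} Each column of the stacked Hankel matrix is a length-$L$ trajectory of the affine subsystem, i.e., it belongs to $\mathcal{B}_{i|L}^{\mathrm{A}}$. An affine combination of trajectories of an affine system is again a trajectory: the drift terms $f,g$ are reproduced precisely because the weights sum to one, and the initial state becomes $\sum_k \pazocal{G}_{i,k}x_k^d$. This direction is routine.

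\emph{Necessity.} This is the rank step. I must show that the matrix $\mathrm{col}(\pazocal{H}_{L}(u^{i,d}),\pazocal{H}_{L}(y^{i,d}),\boldsymbol{1}^{\top})$ has rank $n_x+n_uL+1$, which equals the dimension of the affine span of $\mathcal{B}_{i|L}^{\mathrm{A}}$ (consistent with Lemma~\ref{lemma:behaviour_dimension}). Once that holds, its column span equals the linear span of the trajectories of the homogeneous lifted system, and every trajectory lies in that span. The condition $\boldsymbol{1}^{\top}\pazocal{G}_{i}=1$ can then be met because any trajectory has its constant channel equal to $1$. To prove the rank, I follow the standard argument: I write the data matrix as a transformation of $\mathrm{col}(\pazocal{H}_L(u^{i,d}),X_0,\boldsymbol{1}^{\top})$, where $X_0$ stacks the initial states, through the block matrix built from $\pazocal{O}$ and the Toeplitz matrix. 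Observability of the mode (implied by the lag being finite) makes this map injective. It then suffices to show that $\mathrm{col}(\pazocal{H}_L(u^{i,d}),X_0,\boldsymbol{1}^{\top})$ has full row rank.

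\emph{Main obstacle.} The hard step is this full-row-rank claim, which must use persistency of excitation of order $n_x+L+1$ and controllability of the $i$-th mode. I would take a left-kernel vector $(\eta,\xi,\gamma)$ and replay the Willems et al.\ argument, shifting through the dynamics $n_x+1$ times, with the extra $+1$ in the excitation order absorbing the constant direction. Controllability of $(A,B)$, together with the absence of an uncontrollable constant mode, is needed to conclude $\xi=0$. Here the difficulty is the constant component $z$, which is uncontrollable. My first attempt is to eliminate it by shifting coordinates: when $I-A$ is invertible, take an equilibrium $x^\star=(I-A)^{-1}(f)$ at zero input. This step may need care if $1$ is an eigenvalue of $A$, in which case I would appeal directly to \cite[Theorem 1]{martinelli2022data} rather than reprove it. The remaining steps then reduce to the LTI fundamental lemma.
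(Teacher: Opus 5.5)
Your skeleton (sufficiency via affine combinations of Hankel columns; necessity via full row rank of $\mathrm{col}(\pazocal{H}_L(u^{i,d}),X_0,\boldsymbol{1}^\top)$ pushed through the lifted observability/Toeplitz map) is the standard one. It matches the result the paper imports from \cite[Theorem 1]{martinelli2022data}. The paper gives nothing beyond that citation, so your fallback coincides with it. Your own route to the key rank step, however, has a gap. After the shift $\tilde x=x-x^\star$, the claim does \emph{not} reduce to the LTI fundamental lemma. That lemma (PE of order $L+n_x$) gives full row rank of $\mathrm{col}(\pazocal{H}_L(u),\tilde X_0)$. It says nothing about the $\boldsymbol{1}^\top$ row being independent of those rows, and that independence is exactly what lets you enforce $\boldsymbol{1}^\top\pazocal{G}_i=1$. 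For example, a constant scalar input is persistently exciting of order $1$, yet $\boldsymbol{1}^\top$ lies in the row span of $\pazocal{H}_1(u)$. Conversely, the obstacle you single out is not an obstacle, so neither the case split on $1\in\mathrm{spec}(A)$ nor the fallback citation is needed.

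\textbf{How to close it.} Run the Willems/Cayley--Hamilton argument directly on the lifted pair $\bar A=\smallmat{A & f\\ \mathbf{0} & 1}$, $\bar B=\smallmat{B\\ \mathbf{0}}$, with a left-kernel vector $(\eta,\bar\xi)$, $\bar\xi=\mathrm{col}(\xi,\gamma)$.
\begin{itemize}
\item Since $\det(\lambda I-\bar A)=\det(\lambda I-A)(\lambda-1)$ has degree $n_x+1$, you shift $n_x+1$ times. This needs windows of length $L+n_x+1$, which is exactly where the $+1$ in the excitation order is spent.
\item The Cayley--Hamilton combination gives a vector annihilating $\pazocal{H}_{L+n_x+1}(u^{i,d})$, so it is zero by persistency of excitation.
\item Reading its blocks from the last one gives $\eta=\mathbf{0}$. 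It then gives $\xi^\top A^mB=\mathbf{0}$ for $m\in[0,n_x]$, because $\bar A^m\bar B=\mathrm{col}(A^mB,\mathbf{0})$.
\item Controllability of $(A,B)$ alone yields $\xi=\mathbf{0}$, and the $\boldsymbol{1}^\top$ row then forces $\gamma=0$.
\end{itemize}
The uncontrollable constant mode never has to be steered, because its left-kernel coordinate is eliminated by the ones row.

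If you prefer the shift, note that controllability gives $\mathrm{rank}\smallmat{A-I & B}=n_x$. So an equilibrium $(x^s,u^s)$ always exists, possibly with $u^s\neq\mathbf{0}$, and the input offset is absorbed by the $\boldsymbol{1}^\top$ row. The LTI argument then closes once you carry that row along and use full row rank of $\mathrm{col}(\pazocal{H}_{L+n_x}(u),\boldsymbol{1}^\top)$. That rank follows from PE of order $L+n_x+1$ by differencing consecutive columns.

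Two smaller points. First, injectivity of the observability map is not needed: full row rank of the inner matrix already makes the image equal to that of the outer map, for any $L$. Second, the excitation order is calibrated to $n_x$, so you should state that the realization of $R_i(q)$ you pick has state dimension at most $n_x$; the paper also leaves this implicit.
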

	\begin{proof}
		The proof follows from that of \cite[Theorem 1]{martinelli2022data} and it is thus omitted.
	\end{proof}
	\begin{remark}[Complexity \& local behaviors]
		The quantity $\rho\!=\!\max\{n_a,n_b\}$ in \eqref{eq:local_kernel} and \eqref{eq:kernel_representation} represents the lag of each mode of the PWA system, and it is common to all of them. Together with the lag, the complexity of local representations depends on the dimension of the state $n_x$ characterizing the (minimal) state-space representation in \eqref{eq:PWA_sys}. 
        \hfill $\square$ 
	\end{remark}
	\begin{remark}
		Since \eqref{eq:PWARX_sys} is affine along the mode sequence, \eqref{eq:fixed_s_behavior} is \emph{affine}. Namely, each element of $w \!\in\! \mathcal{B}^{\mathrm{PWA}}_{s}$ satisfies $w_t\! \in\! \mathcal{B}_{s_t}^{\mathrm{A}}$, with $\mathcal{B}_{s_t}^{\mathrm{A}}$ defined as in \eqref{eq:local_behavior}, for all $t \!\in\! \mathbb{Z}_{\geq 0}$. \hfill $\square$
	\end{remark}
	\subsection{PWA behaviors from data with structural insights}\label{sec:PWA_behavior}
    Let us now reconsider the kernel representation in \eqref{eq:kernel_representation}, and equivalently rewrite it based on \eqref{eq:local_kernel} as
    \begin{equation}\label{eq:kernel_equivalent}
        \sum_{i=1}^{S}\sum_{j=0}^{\rho} r_j(i)q^{-j}\tilde{w}_t^{i}=\mathbf{0}.
    \end{equation}
    This equivalent representation shows the PWA behavior as the sum of the local behaviors of its $S$ modes. To this end, let us make the following assumptions on the data $\pazocal{D}$ in \eqref{eq:data}.
    \begin{assumption}[Known data-generating active mode]\label{assumption:mode_access}
    The 
    active mode sequence $s_{[1,N]}^{d}$ associated with the input/output pairs in the dataset $\pazocal{D}$ in \eqref{eq:data} is known
    . \hfill $\square$ 
    \end{assumption}    
    Under the previous assumption, we partition $\pazocal{D}$ into $S$ subsets
	\begin{equation}\label{eq:data_local}
		\pazocal{D}_i=\left\{u_{[1,N_i]}^{i,d},y_{[1,N_i]}^{i,d}\right\},~~i\in [S],
	\end{equation} 
    such that $\pazocal{D}=\bigcup_{i=1}^{S}\pazocal{D}_i$, with $N=\sum_{i=1}^{S}N_i$. Let us further assume that the local datasets satisfy the following. 	
    \begin{assumption}[Subset characteristics]\label{assumption:dataset_features}
	Each subset $\pazocal{D}_{i}$ in \eqref{eq:data_local} is collected by exciting the $i$-th mode of the PWA system with an input $u_{[1,N_i]}^{i,d}$ that is persistently exciting of order $n_x+L+1$, for all $i \in [S]$. \hfill $\square$ 
	\end{assumption}
	While the former assumption will be relaxed in the sequel of the paper, the latter guarantees that we can construct trajectories of length $L$ for all modes of the PWA system, as well as reconstruct the associated partition of the input/output space satisfying \eqref{eq:PWARX_partition}. At the same time, it implies that the length of the overall dataset is $N=\sum_{i=1}^{S}N_i$, with $N_i$ large enough to satisfy Assumption \ref{assumption:dataset_features}. 
    
    By rearranging terms, these assumptions allow us to characterize the (extended) behavior of each mode from the data using \eqref{eq:data_driven_local_behavior}. Exploiting such a representation and the result in \cite[Proof of Theorem 1]{fazzi2023addition}, we obtain 
    \begin{equation}\label{eq:tilde_data_driven_behavior2}
		\sum_{i=1}^{S}\begin{bmatrix}
			{\pazocal{H}}_{L}(u_{[1,N_{i}]}^{i,d})\\
			{\pazocal{H}}_{L}(y_{[1,N_{i}]}^{i,d})
		\end{bmatrix}\pazocal{G}_i=\begin{bmatrix}
			\tilde{u}_{[0,L-1]}\\
			\tilde{y}_{[0,L-1]}
		\end{bmatrix},
    \end{equation}
    where
    \begin{equation}\label{eq:utils_datarep}
        \begin{aligned}
            \tilde{u}_t\!=\!\!\sum_{i=1}^{S}\tilde{u}_t^i,~~~\tilde{y}_t\!=\!\!\sum_{i=1}^{S}\tilde{y}_t^i,
        \end{aligned}
    \end{equation}
    as in \eqref{eq:useful_relations_tilde}, $\pazocal{G}_{i} \in \mathbb{R}^{N_i-L+1}$ still has to satisfy    \begin{equation}\label{eq:local_affine_equalities}
        \boldsymbol{1}_{N_i-L+1}^{\top}\pazocal{G}_{i}=1, ~\forall i\in S,
    \end{equation}
    for local behaviors to be affine, while $\pazocal{H}_{L}(u_{[1,N_i]}^{i,d})$ and $\pazocal{H}_{L}(y_{[1,N_i]}^{i,d})$ are $Ln_{u}\times (N_i-L+1)$ and $Ln_{y}\times (N_i-L+1)$ dimensional matrices containing local data only. Note that the following equivalence holds:
    \begin{equation}\label{eq:Mosaic_modes}
			\sum_{i=1}^{S}\begin{bmatrix}
			{\pazocal{H}}_{L}(u_{[1,N_{i}]}^{i,d})\\
			{\pazocal{H}}_{L}(y_{[1,N_{i}]}^{i,d})
		\end{bmatrix}\pazocal{G}_i=\begin{bmatrix}\pazocal{M}_L(u_{[1,N]}^{d})\\
			\pazocal{M}_L(y_{[1,N]}^{d})
		\end{bmatrix}g,
    \end{equation}
    with 
    \begin{equation}\label{eq:Mosaic_modes_dec}
		\begin{bmatrix}
			\pazocal{M}_L(u_{[1,N]}^{d})\\
			\pazocal{M}_L(y_{[1,N]}^{d})
		\end{bmatrix}\!\!=\!\!\begin{bmatrix}
			\pazocal{H}_{L}(u_{[1,N_{1}]}^{i,d}) \!\!&\!\! \cdots \!\!&\!\! \pazocal{H}_{L}(u_{[1,N_{S}]}^{S,d})\\
			\pazocal{H}_{L}(y_{[1,N_{1}]}^{1,d}) \!\!&\!\! \cdots \!\!&\! \pazocal{H}_{L}(y_{[1,N_{S}]}^{S,d})
		\end{bmatrix}\!,
    \end{equation}
    and
    \begin{equation}\label{eq:g_decomposition}
		g=\begin{bmatrix}
			\pazocal{G}_{1}^{\top} &\pazocal{G}_{2}^{\top} & \cdots & \pazocal{G}_{S}^{\top}
		\end{bmatrix}^{\!\top} \in \mathbb{R}^{N-SL+S},
    \end{equation}
    thus showing that the adopted data structure aligns with those used in subspace identification~\cite{verdult2004subspace}. At the same time, what we have obtained is not yet a data-driven representation of $\mathcal{B}^{\mathrm{PWA}}$, as the dependence on the mode sequence is missing. 

    Let $u_{[0,L-1]}$, $y_{[0,L-1]}$ be the actual input/output trajectories of the PWA associated with the feasible mode sequence $s_{[0,L-1]}\!\in\! \mathcal{B}_{[S]|L}^{\mathrm{PWA}}$. Due to \eqref{eq:useful_relations_tilde}, the following holds  
    \begin{equation}
        \begin{bmatrix}
			\tilde{u}_{[0,L-1]}\\
			\tilde{y}_{[0,L-1]}
		\end{bmatrix}\!=\!\sum_{i=1}^{S}\! \underbrace{\begin{bmatrix}
		    \mathds{1}(s_{[0,L-1]}=i)\!\otimes\! \boldsymbol{1}_{n_u}\\
            \mathds{1}(s_{[0,L-1]}=i)\!\otimes \! \boldsymbol{1}_{n_y}
		\end{bmatrix}}_{:=\pazocal{S}^{i}} \odot \begin{bmatrix}
			u_{[0,L-1]}^{i}\\
			y_{[0,L-1]}^{i}
		\end{bmatrix}\!,
    \end{equation}
     where $\mathds{1}(s_{[0,L-1]}\!=\!i) \!\in\! \{0,1\}^{L}$ indicates (with a slight abuse of notation) the vector stacking zeros and ones depending on whether the $i$-th mode is inactive or active over the horizon $L$, for all $i \!\in\! [S]$. Applying the same rationale to the sum on the left-hand side of \eqref{eq:tilde_data_driven_behavior2}, we obtain a restricted data-driven representation
     \begin{equation}\label{eq:data_driven_behavior}
			\sum_{i=1}^{S}
				\tilde{\pazocal{S}}^{i\!\!}
				\odot\!\! 
			\begin{bmatrix}\pazocal{H}_{L}(u_{[1,N_{i}]}^{i,d})\\
				\pazocal{H}_{L}(y_{[1,N_{i}]}^{i,d})
			\end{bmatrix}\!\pazocal{G}_{i}\!=\!\!\!\sum_{i=1}^{S} \pazocal{S}^{i\!} \odot\!\! \begin{bmatrix}
			u_{[0,L-1]}^{i}\\
			y_{[0,L-1]}^{i}
		\end{bmatrix}\!\!=\!\! \begin{bmatrix}
		    u_{[0,L-1]}\\
                y_{[0,L-1]}
		\end{bmatrix}\!,
    \end{equation}
    where 
    \begin{equation}\label{eq:S_tilde}
      \tilde{\pazocal{S}}^{i}=\pazocal{S}^{i}\boldsymbol{1}_{N_i-L+1}^\top.   
    \end{equation}
    By additionally imposing \eqref{eq:local_affine_equalities}, the previous equality  
    characterizes a trajectory $\mathrm{col}(u_{[0,L-1]},y_{[0,L-1]}) \in \mathcal{B}_{s|L}^{\mathrm{PWA}}$, leading to a data-driven representation of the behavior of the PWA system \eqref{eq:PWARX_sys} compatible with a given mode sequence $s_{[0,L-1]}$. 
    
    We now note that, according to \eqref{eq:PWARX_partition1}-\eqref{eq:PWARX_partition2}, $s_{[0,L-1]}$ in turn depends on $w_{[0,L-1]}$ and, hence on $\{\pazocal{G}_{i}\}_{i=1}^{S}$ (see \eqref{eq:data_driven_behavior}). Indeed, by noticing that the following holds:
    \begin{equation}\label{eq:selected_partial_traj}
        \xi_{[t-\rho,t-1]}=\sum_{i=1}^{S}\underbrace{\pazocal{S}_{\xi,[t-\rho,t-1]}^{i}\odot \pazocal{H}_{\rho}^{t}(\xi_{[1,N_i]}^{i,d})}_{:=\pazocal{H}_{\rho}^{\pazocal{S},t}(\xi_{[1,N_i]}^{i,d})}\pazocal{G}_{i},
    \end{equation}    
   where \begin{align*}
       & \pazocal{S}_{\xi,[t-\rho,t-1]}^{i}=(\mathds{1}(s_{[t-\rho,t-1]}=i)\otimes \mathbf{1}_{n_{\xi}})\mathbf{1}_{N_i-L+1}^{\top},\\
       & \pazocal{H}_{\rho}^{t}(\xi_{[1,N_i]}^{i,d})=[\pazocal{H}_{L}(\xi_{[1,N_i]}^{i,d})]_{n_{\xi}(t-\rho)+1:n_{\xi}(t-1)},
   \end{align*} 
   and $\xi$ is a placeholder for $y$ and $u$, we get that
    \begin{equation}\label{eq:G_dependent_assignment}
        s_t=i \iff P_i\begin{bmatrix}
            \sum_{j=1}^{S}
            \pazocal{H}_{\rho}^{\pazocal{S},t}(y_{[1,N_j]}^{j,d})\pazocal{G}_{j}
            \\
                        \sum_{j=1}^{S}
           \pazocal{H}_{\rho}^{\pazocal{S},t}(u_{[1,N_j]}^{j,d})\pazocal{G}_{j}
            \\
            u_t\\
            1
        \end{bmatrix}\preceq \mathbf{0},
    \end{equation}
    with $P_i$ defined in \eqref{eq:PWARX_partition2}, for $i \in [S]$. Although the previous relationship is well defined for $t \in [\rho,L-1]$, it will depend on the inputs and outputs at negative time moments when $t \in [0,\rho-1]$. In this case, we rely on the following assumption.
    \begin{assumption}[On past inputs/outputs]\label{assumption:past_data}
        Inputs and outputs satisfy $y_{t}=\mathbf{0}$ and $u_t=\mathbf{0}$ for all $t<0$, respectively. \hfill $\square$
    \end{assumption}
    Note that in characterizing the active mode at time $t$, we do not replace the input $u_t$ with its data-driven representation since the latter depends on $s_t$ (i.e., the mode to be characterized). This nested dependence is present only if the system is not strictly proper.     \begin{remark}[Connection with PWA identification]
        Based on our computation, it is clear that jointly characterizing $\mathcal{B}_{s|L}^{\mathrm{PWA}}$ and $\mathcal{B}_{[S]|L}^{\mathrm{PWA}}$ from data is more demanding than solely characterizing one of the two elements assuming the other to be fixed. This finding is in line with a common practice in PWA identification approaches, which often alternate (iteratively) between estimating one unknown, either the local models or the active mode sequence, while fixing the other (see, e.g.,~\cite{yu2025,bemporad2003greedy,ferraritrecate2003,Bemporad2005,Breschi2016}). \hfill $\square$
    \end{remark}
    
    By using \eqref{eq:data_driven_behavior} and \eqref{eq:G_dependent_assignment}, we have now characterized $\mathcal{B}^{\mathrm{PWA}}_{|L}$ from data. This result allows us to state the following extension of Willems' fundamental lemma~\cite{willems2005note} for PWA systems.
    \begin{theorem}[PWA Fundamental lemma]\label{thm:fundamental_PWA}
    Consider a controllable PWA system described by \eqref{eq:PWARX_sys}, whose modes are all controllable. Let $\pazocal{D}$ be a dataset generated by such a system satisfying Assumption~\ref{assumption:dataset_features}. Under Assumptions~\ref{assumption:mode_access} and~\ref{assumption:past_data}, for a given $L\geq\rho$, $\left\{u_{[0,L-1]},y_{[0,L-1]}\right\}$ is an input-output trajectory and $s_{[0,L-1]}$ is a feasible mode sequence of \eqref{eq:PWARX_sys}, i.e., $\left\{u_{[0,L-1]},y_{[0,L-1] },s_{[0,L-1]}\right\} \!\in\! \mathcal{B}_{|L}^{\mathrm{PWA}}$, if and only if there exists $g \in \mathbb{R}^{N-SL+S}$ in \eqref{eq:g_decomposition} satisfying \eqref{eq:local_affine_equalities}, \eqref{eq:data_driven_behavior} and \eqref{eq:G_dependent_assignment}.
    \end{theorem}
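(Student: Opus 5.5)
The plan is to prove the two implications separately, reducing everything to the local data-driven representation of Theorem~\ref{thm:affine_dd} applied mode by mode. The mode-selection logic is then handled through the partition condition \eqref{eq:G_dependent_assignment}.

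\emph{Necessity.} Suppose $\{u_{[0,L-1]},y_{[0,L-1]},s_{[0,L-1]}\}\in\mathcal{B}_{|L}^{\mathrm{PWA}}$. For each $i\in[S]$, I will build a length-$L$ trajectory of the $i$-th affine subsystem that coincides with $(u_t,y_t)$ at every $t$ with $s_t=i$. For this, recall that $w_t$ satisfies the local kernel $R_i(q)$ at each $t$ with $s_t=i$ (kernel representation \eqref{eq:kernel_representation} together with the remark on affinity along the mode sequence). The natural candidate for the $i$-th local trajectory is the response of mode $i$ from the appropriate initial condition, namely the state of \eqref{eq:PWA_sys} at the first instant where mode $i$ is active.

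I expect this step to be the main obstacle. The difficulty is that $s_t$ may leave and re-enter mode $i$, so a single trajectory of $\mathcal{B}_{i|L}^{\mathrm{A}}$ need not match the data at all active instants. I would handle this as follows:
\begin{itemize}
\item Entries at instants where $s_t\neq i$ are multiplied by zero in $\pazocal{S}^i$, so they are free.
\item Controllability of mode $i$, together with Lemma~\ref{lemma:behaviour_dimension} (dimension $n_x+n_uL+1$), lets us choose the free inputs $u^i_t$ at those instants, and the initial state, so as to steer the local state to the true state of the PWA system at each re-entry time.
\end{itemize}
Because these are exactly the degrees of freedom that the Hadamard masking discards, this is plausible. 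Should it fail for short gaps, the fallback is to argue within the formalism already adopted in \eqref{eq:tilde_data_driven_behavior2}, i.e.\ the decomposition of the kernel \eqref{eq:kernel_equivalent} via \cite[Proof of Theorem 1]{fazzi2023addition}, which the paper treats as granting such local trajectories.

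Once the local trajectories $(u^i,y^i)\in\mathcal{B}_{i|L}^{\mathrm{A}}$ are available, the remaining necessity steps are direct:
\begin{enumerate}
\item Assumption~\ref{assumption:dataset_features} and Theorem~\ref{thm:affine_dd} give $\pazocal{G}_i$ with $\boldsymbol{1}^\top\pazocal{G}_i=1$ and Hankel image $(u^i,y^i)$.
\item Stacking the $\pazocal{G}_i$ as in \eqref{eq:g_decomposition} and applying the masks $\tilde{\pazocal{S}}^i$ yields \eqref{eq:data_driven_behavior}. This uses the identity $(\pazocal{S}^i\boldsymbol{1}^\top)\odot H\,\pazocal{G}=\pazocal{S}^i\odot(H\pazocal{G})$ and the fact that the masks $\pazocal{S}^i$ sum to all ones.
\item The same masking identity applied to the rows in \eqref{eq:selected_partial_traj} shows that the data-driven past windows equal the true windows $y_{[t-\rho,t-1]}$ and $u_{[t-\rho,t-1]}$, with zero padding for $t<\rho$ by Assumption~\ref{assumption:past_data}.
\item Since $s$ is generated by \eqref{eq:mode_selection2}, condition \eqref{eq:G_dependent_assignment} holds.
\end{enumerate}

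\emph{Sufficiency.} Given $g$ satisfying \eqref{eq:local_affine_equalities}, \eqref{eq:data_driven_behavior} and \eqref{eq:G_dependent_assignment}, the argument runs as follows.
\begin{enumerate}
\item Define $(u^i,y^i)$ as the Hankel image of $\pazocal{G}_i$. By the converse direction of Theorem~\ref{thm:affine_dd}, each $(u^i,y^i)$ lies in $\mathcal{B}_{i|L}^{\mathrm{A}}$, so it satisfies $R_i(q)$ at every time.
\item By \eqref{eq:data_driven_behavior}, $(u_t,y_t)=(u^{s_t}_t,y^{s_t}_t)$.
\item Using \eqref{eq:selected_partial_traj}, \eqref{eq:G_dependent_assignment} says precisely that $s_t$ is the mode selected by \eqref{eq:mode_selection2} from the assembled signal $w$. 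Hence $s$ is feasible (Definition~\ref{def:feasible_modes2}).
\item It remains to check that $w$ satisfies the switched kernel \eqref{eq:kernel_representation} at each $t$. This needs the regressor window of the active local trajectory $i=s_t$ to coincide with the window of the assembled $w$. I plan to obtain this from the masking convention in \eqref{eq:selected_partial_traj}, interpreting the local kernel \eqref{eq:local_kernel} with $\tilde w^i$ as in \eqref{eq:useful_relations_tilde}, where the past samples $y_{t-j}$ and $u_{t-j}$ are those of the global signal.
\end{enumerate}

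Then $w\in\mathcal{B}_{s|L}^{\mathrm{PWA}}$, and together with the feasibility of $s$ this gives $\{u,y,s\}\in\mathcal{B}_{|L}^{\mathrm{PWA}}$.
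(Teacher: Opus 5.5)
Your bookkeeping is fine, namely the identity $(\pazocal{S}^i\boldsymbol{1}^\top)\odot H\pazocal{G}_i=\pazocal{S}^i\odot(H\pazocal{G}_i)$ and the reconstruction of the past windows entering \eqref{eq:G_dependent_assignment}. However, the two steps you flag are genuine gaps, and neither can be closed from the stated conditions.

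\emph{Sufficiency (your step 4).} Each Hankel image $w^i$ of $\pazocal{G}_i$ with $\boldsymbol{1}^\top\pazocal{G}_i=1$ is a genuine signal of $\mathcal{B}^{\mathrm{A}}_{i|L}$. So at an instant $t$ with $s_t=i$ it obeys $R_i(q)$ with its \emph{own} past samples $w^i_{t-j}$. The pointwise mask in \eqref{eq:data_driven_behavior} ties $w^i_{t-j}$ to $w_{t-j}$ only when $s_{t-j}=i$, and \eqref{eq:G_dependent_assignment} constrains the mode labels, not the dynamics. Reading \eqref{eq:useful_relations_tilde} as ``global past'' is exactly what one selector per mode does not encode. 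Here is a concrete counterexample using \eqref{eq:example_system}, where $\rho=1$ and $s_t=1$ iff $y_{t-1}<0$. Take $L=3$ and $s_{[0,2]}=(2,1,2)$. Choose a mode-2 local trajectory through $(u_0,y_0)=(0,-1)$ and a mode-1 local trajectory with $(u^1_0,y^1_0)=(0,-5)$, so that $y_1=y^1_1=1.5$. The labels are consistent ($y_{-1}=0$, $y_0<0$, $y_1\geq 0$) and all three conditions hold. Yet the system gives $y_1=-0.3y_0+1.4u_0=0.3$. In this example the masked, affinely constrained image is all of $\mathbb{R}^6$. By contrast, $\mathcal{B}^{\mathrm{PWA}}_{s|3}$ spans only an $(n_x+n_uL)=4$-dimensional affine set in the $(u,y)$ coordinates. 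So the step cannot be imported from the paper either: the appendix closes both directions at once by asserting, via dimension matching, that this image equals $\mathcal{B}^{\mathrm{PWA}}_{s|L}$, and that assertion does not survive this example.

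\emph{Necessity.} Consider an excursion of length $k$ away from mode $i$. Steering requires the $kn_u$ free local inputs of that excursion to cancel the discrepancy the other mode injects into the regressor at re-entry. The local outputs there are not free, since mode $i$ dictates them. Moreover, the local initial window is pinned whenever mode $i$ is active at the start. Controllability therefore helps only once $k$ reaches the controllability index. For $k=1$ the construction generically fails. Take $n_u=n_y=1$, $\rho=2$, $L\geq 6$, and mode $i$ given by $y_t=a_1y_{t-1}+a_2y_{t-2}+b_1u_{t-1}+b_2u_{t-2}$. Let $s=(i,i,i,j,i,i,\dots)$, which is feasible, e.g., if the partition depends on the sign of $u_t$, as \eqref{eq:mode_selection} allows. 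The only free variable is $\delta_u=u^i_3-u_3$. Matching at $t=4,5$ requires $a_1\delta_y+b_1\delta_u=0$ and $a_2\delta_y+b_2\delta_u=0$, where $\delta_y=y^i_3-y_3$ is fixed and generically nonzero. This is solvable only if $a_1b_2=a_2b_1$. Falling back on \cite{fazzi2023addition} is circular: that result concerns the unmasked sum $\mathcal{B}^{+}_{|L}$ of independent local trajectories, not pointwise agreement with $w$ on $\{t:s_t=i\}$. Both directions therefore need a stronger representation, not just a more careful argument. For instance, one could use a separate selector for each maximal constant-mode segment, required to reproduce that segment together with the $\rho$ true samples preceding it.
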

    \begin{proof}
	   The proof can be found in Appendix~\ref{appendix:fundamental_lemma}.
    \end{proof} 

\section{Does shrinkage-based DeePC lead to actions coherent with a PWA system?}\label{sec:shrinkage_based}

We now look into the design of a data-enabled predictive control (DeePC) scheme for the PWA system in \eqref{eq:PWA_sys} to track a reference input/output trajectory $(u_t^{\mathrm{o}},y_{t}^{\mathrm{o}})$, of which we assume to have a full preview for $t \in \mathbb{Z}_{\geq0}$. Nonetheless, in this work, we do not use \eqref{eq:data_driven_behavior} to construct a data-driven predictor, which would require solving a mixed-integer problem to concurrently assign the active mode and optimize the local selectors $\{\pazocal{G}_i\}_{i=1}^{S}$. Instead, we want to analyze whether leveraging the \emph{structural} insights provided by \eqref{eq:data_driven_behavior} and using two specific \emph{shrinking schemes} is enough for DeePC to make decisions that are \emph{coherent} with the PWA nature of the controlled system. Coherence is here (loosely) intended as the capability of DeePC to leverage only data associated with the mode that should be active to take decisions at each step in the prediction horizon (see Definition~\ref{def:coherence} for a more formal statement).

To this end, let us consider the local datasets $\pazocal{D}_{i}$ constructed still under Assumption~\ref{assumption:mode_access} and use them to fill the local Hankel matrices $\pazocal{H}_{L+\rho}(u_{[1,N_{i}]}^{i,d})$ and $\pazocal{H}_{L+\rho}(y_{[1,N_{i}]}^{i,d})$, for all $i\! \in\! [S]$. Let us then slice these matrices as
\begin{equation}\label{eq:sliced_matrices}
	\pazocal{H}_{L+\rho}(u_{[1,N_{i}]}^{i,d})=\begin{bmatrix}
		U_{P}^{i}\\
		U_{F}^{i}
	\end{bmatrix},~~~\pazocal{H}_{L+\rho}(y_{[1,N_{i}]}^{i,d})=\begin{bmatrix}
		Y_{P}^{i}\\
		Y_{F}^{i}
	\end{bmatrix}\!,
\end{equation}
with 
\begin{align*}
&U_{P}^{i} \in \mathbb{R}^{n_u\rho\times N_i-L-\rho-1},~~~U_{F}^{i} \in \mathbb{R}^{n_uL\times N_i-L-\rho-1},\\
&Y_{P}^{i} \in \mathbb{R}^{n_y\rho\times N_i-L-\rho-1},~~~Y_{F}^{i} \in \mathbb{R}^{n_yL\times N_i-L-\rho-1},
\end{align*}
respectively. These matrices to form the DeePC problem 
\begin{subequations}\label{eq:general_deepc}
	\begin{align}
		& \underset{\substack{u_{[0,L-1]},~y_{[0,L-1]}\\\{\pazocal{G}_{i}\}_{i=1}^{S}}}{\mathrm{minimize}}J(u_{[0,L-1]},y_{[0,L-1]})\!+\!\! \sum_{j=1}^{M}\lambda_{j}\!\!\sum_{i=1}^{S}r_i^{j}(\pazocal{G}_i)\label{eq:general_deepc_reg}\\
		&\qquad \qquad \mbox{s.t. }\quad \sum_{i=1}^{S}\begin{bmatrix}
			Z_P^i\\
			U_{F}^{i}\\
			Y_{F}^{i}
		\end{bmatrix}\pazocal{G}_i=\begin{bmatrix}
			z_{\mathrm{ini}}\\
			u_{[0,L-1]}\\
			y_{[0,L-1]}
		\end{bmatrix},\label{eq:DeePC_model}\\
		& \qquad \qquad \qquad ~~ \boldsymbol{1}_{N_i-L-\rho+1}^{\top}\pazocal{G}_i=1,~~~\forall i \in [S],\label{eq:DeePC_affine_constr}\\
		&\qquad \qquad \qquad ~~ u_k \in \pazocal{U},~~~\forall k \in [0,L-1],\label{eq:DeePC_input_constr}\\
		&\qquad \qquad \qquad ~~ y_k \in \pazocal{Y},~~~\forall k \in [0,L-1],\label{eq:DeePC_output_constr}
	\end{align}
	to be solved at each time instant $t \in \mathbb{Z}_{\geq 0}$ in a \emph{receding horizon} fashion~\cite[Section 12.1]{borrelli2017predictive}, with $Z_P^i=\smallmat{(U_P^i)^{\top} & (Y_P^i)^{\top}}^{\top}$, and 
	\begin{equation}
		z_{\mathrm{ini}}=\begin{bmatrix}
			u_{[t-\rho,t-1]}^{\top} & y_{[t-\rho,t-1]}^{\top} 
		\end{bmatrix}^{\top}\!\in \mathbb{R}^{(n_u+n_y)\rho},
	\end{equation}
	used as a proxy for the current state of the system. 
	This problem features input and output constraints, with both $\pazocal{U} \!\subseteq\! \mathbb{R}^{n_u}$ and $\pazocal{Y} \!\subseteq\! \mathbb{R}^{n_y}$ assumed to be \emph{polytopic}. Meanwhile, its cost comprises both the conventional quadratic tracking loss
	\begin{equation}\label{eq:DeePCloss}
		J(u_{[0,L-1]},y_{[0,L-1]})\!=\!\!\sum_{k=0}^{L-1}\!\left[\|y_{k}\!-\!y_{k}^{\mathrm{o}}\|_{Q}^{2\!}\!+\!\|u_{k}\!-\!u_{k}^{\mathrm{o}}\|_{R}^{2}\right]\!,
	\end{equation}
	with $Q\! \in\! \mathbb{R}^{n_y \times n_y}$, $Q\! \succeq \!\mathbf{0}$, and $R \in \mathbb{R}^{n_u \times n_u}$ with $R \!\succ \!\mathbf{0}$,	and a linear combination of $M\!\geq\! 1$ regularization functions $r_i^{j}:\mathbb{R}^{N_i-L-\rho+1} \!\rightarrow\! \mathbb{R}^{+}_{0}$ for all $i \!\in\! [S]$ driven by the regularization penalties $\lambda_{j}\!>\!0$, with $j\! \in\! [M]$. These regularization terms will be selected afterwards to enforce sparsity into $\{\pazocal{G}_i\}_{i=1}^{S}$.
\end{subequations}

Irrespective of the chosen regularization term, it is always possible to recast \eqref{eq:general_deepc} as an optimization problem solely on $\{\pazocal{G}_i\}_{i=1}^{S}$. To this end, let us define $\pazocal{Q}=\mathrm{diag}(Q,\ldots,Q)$ and $\pazocal{R}=\mathrm{diag}(R,\ldots,R)$, with $\pazocal{Q} \in \mathbb{R}^{n_yL\times n_yL}$ and $\pazocal{R} \in \mathbb{R}^{n_uL\times n_uL}$. By leveraging \eqref{eq:DeePC_model}, the loss \eqref{eq:DeePCloss} can be equivalently recast as
	\begin{equation}\label{eq:DeePCloss2}
		J(\{\pazocal{G}_i\}_{i=1}^{S})\!= \bigg\| \sum_{i=1}^{S}Y_F^{i}\pazocal{G}_i-\boldsymbol{y^{\mathrm{o}}} \bigg\|_{\pazocal{Q}}^{2}+\bigg\| \sum_{i=1}^{S}U_F^{i}\pazocal{G}_i-\boldsymbol{u^{\mathrm{o}}} \bigg\|_{\pazocal{R}}^{2},
	\end{equation}
where $\boldsymbol{y^{\mathrm{o}}} \!\in\! \mathbb{R}^{n_yL}$ and $\boldsymbol{u^{\mathrm{o}}} \!\in\! \mathbb{R}^{n_uL}$ are compactly denoting $y_{[t,t+L-1]}^{\mathrm{o}}$ and $u_{[t,t+L-1]}^{\mathrm{o}}$, respectively. Meanwhile, thanks to the polytopic nature of \eqref{eq:DeePC_input_constr}-\eqref{eq:DeePC_output_constr}, the input/output constraints can be equivalently summarized as
\begin{equation}\label{eq:constraints2}
	\sum_{i=1}^{S}\Gamma_{i}\pazocal{G}_i \leq \gamma,
\end{equation} 
where $\Gamma_{i}$ depends on $U_F^{i}$ and $Y_{F}^{i}$, for all $i \in [S]$, as well as the features of the input/output polytopic constraints, while $\gamma$ opportunely stacks the latter's affine terms. Moreover, let us equivalently rewrite the equality constraint in \eqref{eq:DeePC_affine_constr} as
\begin{equation}
    \sum_{i=1}^{S} \pazocal{I}_i\pazocal{G}_i=\boldsymbol{1}_{S},
\end{equation}
where $\pazocal{I}_i$ denotes the $i$-th sub-block of the matrix $\mathrm{bdiag}(\boldsymbol{1}_{N_1-L+1}^{\top},\ldots,\boldsymbol{1}_{N_S-L+1}^{\top})$. We can then define 
\begin{equation}
	\tilde{Z}_{P}^{i}=\begin{bmatrix}
	Z_{P}^{i}\\
	\pazocal{I}_i
	\end{bmatrix},~~~\tilde{z}_{\mathrm{ini}}=\begin{bmatrix}
	z_{\mathrm{ini}}\\
	\boldsymbol{1}_{S}
	\end{bmatrix},
\end{equation}
so that \eqref{eq:general_deepc} can be equivalently recast as
\begin{subequations}\label{eq:general_deepc2}
	\begin{align}
		& \underset{\{\pazocal{G}_{i}\}_{i=1}^{S}}{\mathrm{minimize}}~~J(\{\pazocal{G}_{i}\}_{i=1}^{S})+ \sum_{j=1}^{M}\lambda_{j}\sum_{i=1}^{S}r_i^{j}(\pazocal{G}_i)\\
		&\qquad ~ \mbox{s.t. }~ \sum_{i=1}^{S}\tilde{Z}_{P}^{i}
		\pazocal{G}_i=\tilde{z}_{\mathrm{ini}},\label{eq:DeePC_init_constr}\\
		&\qquad \qquad  \sum_{i=1}^{S}\Gamma_i \pazocal{G}_i \leq \gamma.\label{eq:DeePC_ineq_constr}
	\end{align}
\end{subequations}
According to \eqref{eq:general_deepc2}, we can now 
 draw from \cite[Definition~3]{giacomelli2025insights} and formalize our notion of coherence of DeePC's decisions with respect to the PWA nature of the controlled system. 
 \begin{definition}[DeePC coherence with PWA behavior]\label{def:coherence}
    The solution $\{\pazocal{G}_i^\star\}_{i=1}^{S}$ of the DeePC problem in \eqref{eq:general_deepc2} is coherent with the PWA behavior if $\pazocal{G}_{j}^\star\!=\!\mathbf{0}$ for all $j \!\neq\! i$, $i,j \!\in\! [S]$, whenever the initial condition $z_{\mathrm{ini}}$ and the references $(\boldsymbol{u^{\mathrm{o}}},\boldsymbol{y^{\mathrm{o}}})$ demand only the $i$-th mode of \eqref{eq:PWARX_sys} to be active. \hfill $\square$
 \end{definition}
Note that, if multiple modes are required to be active, then \eqref{eq:general_deepc2} should be able to at least prioritize data associated with the \textquotedblleft right\textquotedblright \ operating mode at each prediction step. 

\begin{remark}[An alternative for coherence analysis] 
    While our strategy to analyze the coherence of DeePC schemes relies on deriving their explicit solutions, one could alternatively adapt the implicit predictor approach based on the  (see~\cite{kladtke2023implicit}) as done for the linear framework in \cite{kladtke2025data,kladtke2025towards}.  
\end{remark}
 
\subsection{Elastic-DeePC \& its coherence with PWA behaviors}\label{sec:Elastic_DeePC}
Similarly to \cite[Section III]{giacomelli2025insights}, we first consider an \emph{elastic net} regularization on $\{\pazocal{G}_i\}_{i=1}^{S}$, i.e.,
\begin{equation}\label{eq:elastic_reg}
	\sum_{j=1}^{M} \lambda_{j}\sum_{i=1}^{S}r_{i}^{j}(\pazocal{G}_i)=\lambda_{1}\sum_{i=1}^{S}\|\pazocal{G}_{i}\|_{1}+\lambda_{2}\sum_{i=1}^{S}\|\pazocal{G}_{i}\|_{2}^{2},
\end{equation}
and find an explicit solution for \eqref{eq:general_deepc2} when using it. To this end, we now introduce the augmented Lagrangian associated with the Elastic-DeePC problem, namely
\begin{align}\label{eq:Lagrangian_elastic}
	\nonumber &\pazocal{L}(\{\pazocal{G}_{i}\}_{i=1}^{S},\alpha,\mu;\lambda_{1},\lambda_{2})\!=\!J(\{\pazocal{G}_i\}_{i=1}^{S})\!+\!\lambda_{1}\sum_{i=1}^{S}\|\pazocal{G}_i\|_{1}\!\!\\
	& \!+\!\!\lambda_{2}\!\sum_{i=1}^{S}\!\|\pazocal{G}_i\|_{2}^{2\!}\!+\!\alpha^{\!\top\!\!}\left(\!
    \sum_{i=1}^{S}\tilde{Z}_{P}^{i}\pazocal{G}_i\!-\!\tilde{z}_{\mathrm{ini\!\!}}\!\right)\!\!+\!\mu^{\!\top\!\!\!}\left(\sum_{i=1}^{S}\Gamma_i\pazocal{G}_i\!-\!\gamma\!\!\right)\!\!,\!\!
\end{align}
where $\alpha$ and $\mu$ are vectors stacking the Lagrange multipliers associated with the equality and inequality constraints featured in \eqref{eq:general_deepc2}, respectively. Let us then assume only a subset of the inequalities in \eqref{eq:DeePC_ineq_constr} to be \emph{active} over the prediction horizon and let $\tilde{\Gamma}_{i}$ and $\tilde{\gamma}$ be the coefficients associated with them. Accordingly, \eqref{eq:DeePC_init_constr} can be rewritten as 
\begin{equation}\label{eq:DeePC_init_constr_ext}
    \sum_{i=1}^{S}\tilde{Z}_{i}\pazocal{G}_{i}^{\star}-\underbrace{\begin{bmatrix}
    \tilde{z}_{\mathrm{ini}}\\
    \tilde{\gamma}
    \end{bmatrix}}_{=\tilde{b}}=\mathbf{0},
\end{equation}
with
\begin{equation}\label{eq:equality_utils}
    \tilde{Z}_{i}=\begin{bmatrix}
    (\tilde{Z}_{P}^{i})^{\top} & \tilde{\Gamma}_i^{\top} 
\end{bmatrix}^{\top}.
\end{equation}
Similar to \cite[Assumption 2]{Breschi2023}, let us assume that the following holds for
the \textquotedblleft expanded\textquotedblright \ equality constraints in \eqref{eq:DeePC_init_constr_ext}. 
\begin{assumption}[Equalities well posedness I]\label{assumption:technical1}
    The rows of $\tilde{Z}_i$ are linearly independent for all $i \in [S]$.
\end{assumption}

Supposing that Assumption~\ref{assumption:technical1} holds for all possible combinations of active constraints, we can use the Karush-Kuhn-Tucker (KKT) conditions associated with \eqref{eq:Lagrangian_elastic} to find the unique, explicit solution of \eqref{eq:general_deepc2} with the regularization in \eqref{eq:elastic_reg}, whose result is formalized in the following proposition.
\begin{proposition}[Explicit solution of elastic-DeePC]\label{prop:explicit_elastic}
   Let $\lambda_1,\lambda_2$ in \eqref{eq:elastic_reg} be positive real numbers and Assumption~\ref{assumption:technical1} hold for all possible combinations of active constraints of \eqref{eq:general_deepc2}. Then, the explicit solutions of \eqref{eq:general_deepc2} with the regularization in \eqref{eq:elastic_reg} is given by a PWA law in the $\tilde{z}_{\mathrm{ini}}$ domain of the kind: 
    \begin{equation}\label{eq:explicit_elastic}
        \pazocal{G}_{i}^{\star}=\begin{cases}
          F_{1}^{i}\tilde{z}_{\mathrm{ini}},~~~~~\mbox{if}~~~ E_{1}^{i}\tilde{z}_{\mathrm{ini}}\preceq\mathbf{0}, \\
          \vdots\\
          F_{n_{M}^i}^{i}\tilde{z}_{\mathrm{ini}},~~~\mbox{if}~~~E_{n_M^i}^{i}\tilde{z}_{\mathrm{ini}}\preceq\mathbf{0},
        \end{cases}
    \end{equation}
   where the coefficient matrices $E_{m} \!\in\! \mathbb{R}^{n_m\times [(n_u+n_y)\rho+S]}$ and $F_{m} \!\in\! \mathbb{R}^{(N_i-L-\rho+1)\times [(n_u+n_y)\rho+S]}$ may depend on data from \emph{all} modes of \eqref{eq:PWARX_sys}, for all $m \in [n_M^i]$ and all $i \in [S]$.   
   \hfill $\square$
\end{proposition}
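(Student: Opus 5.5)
The plan is to follow the multiparametric-QP route used for elastic-net DeePC in \cite{giacomelli2025insights}, with $\tilde{z}_{\mathrm{ini}}$ as the parameter. First, I would remove the nonsmooth $\ell_1$ term in \eqref{eq:elastic_reg} by splitting each $\pazocal{G}_i=\pazocal{G}_i^{+}-\pazocal{G}_i^{-}$ with $\pazocal{G}_i^{\pm}\succeq\mathbf{0}$, so that $\|\pazocal{G}_i\|_1=\boldsymbol{1}^{\top}(\pazocal{G}_i^{+}+\pazocal{G}_i^{-})$. Equivalently, I would keep $\pazocal{G}_i$ and fix a sign pattern $\theta_i=\mathrm{sign}(\pazocal{G}_i^\star)$ on its support, adding the constraints that enforce that pattern. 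With this reformulation, \eqref{eq:general_deepc2} becomes a QP in the stacked variable $g$ of \eqref{eq:g_decomposition}. The Hessian of this QP is
\begin{equation*}
H=\pazocal{Y}_F^{\top}\pazocal{Q}\pazocal{Y}_F+\pazocal{U}_F^{\top}\pazocal{R}\pazocal{U}_F+\lambda_2 I,
\end{equation*}
where $\pazocal{Y}_F=[Y_F^1\ \cdots\ Y_F^S]$ and $\pazocal{U}_F=[U_F^1\ \cdots\ U_F^S]$. Since $\lambda_2>0$, $H\succ\mathbf{0}$, so the problem is strictly convex and has a unique minimizer for every feasible $\tilde{z}_{\mathrm{ini}}$.

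The second step is to write the KKT conditions of \eqref{eq:Lagrangian_elastic} for a fixed combination of active inequalities, meaning the rows $\tilde{\Gamma}_i,\tilde{\gamma}$ together with the active sign and zero constraints of the $\ell_1$ part. Stationarity gives
\begin{equation*}
H g+\lambda_1\theta+\tilde{Z}^{\top}\nu=c,
\end{equation*}
where $\tilde{Z}=[\tilde{Z}_1\ \cdots\ \tilde{Z}_S]$ collects the matrices in \eqref{eq:equality_utils}, $\nu$ stacks $\alpha$ and the active part of $\mu$, and $c$ collects the reference terms. Primal feasibility is \eqref{eq:DeePC_init_constr_ext}. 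Assumption~\ref{assumption:technical1} makes $\tilde{Z}H^{-1}\tilde{Z}^{\top}$ invertible (I will check that this holds for the concatenated matrix; see below). Eliminating $g$ then gives
\begin{equation*}
\nu=(\tilde{Z}H^{-1}\tilde{Z}^{\top})^{-1}\big(\tilde{Z}H^{-1}(c-\lambda_1\theta)-\tilde{b}\big),
\end{equation*}
and substituting back shows that $g^\star$ is affine in $\tilde{b}$, hence in $\tilde{z}_{\mathrm{ini}}$. The reference preview, $\lambda_1\theta$ and $\tilde{\gamma}$ enter as constant offsets. These offsets are absorbed into a linear map by using the last $S$ entries of $\tilde{z}_{\mathrm{ini}}$, which equal $\boldsymbol{1}_S$. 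That is why the statement can write $F_m^i\tilde{z}_{\mathrm{ini}}$ with no separate affine term.

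Third, the region on which each affine law is valid is the critical region of the corresponding active set and sign pattern. It is cut out by two families of linear inequalities in $\tilde{z}_{\mathrm{ini}}$: primal feasibility of the inactive constraints, dual feasibility $\mu\succeq\mathbf{0}$, and sign consistency for the $\ell_1$ part. Homogenizing these with the constant-one entries gives $E_m^i\tilde{z}_{\mathrm{ini}}\preceq\mathbf{0}$. There are finitely many active sets and sign patterns, so there are finitely many regions. Uniqueness of the minimizer makes the map single-valued, so the regions together give the piecewise law \eqref{eq:explicit_elastic}. The block $\pazocal{G}_i^\star$ is read off as the $i$-th block of $g^\star$. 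Because $H$ and $\tilde{Z}H^{-1}\tilde{Z}^{\top}$ couple all $Y_F^j$, $U_F^j$ and $\tilde{Z}_j$, both $F_m^i$ and $E_m^i$ generally involve data from every mode, which is the last claim of the proposition.

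The main obstacle is the invertibility step. Assumption~\ref{assumption:technical1} is stated blockwise, for each $\tilde{Z}_i$, whereas the elimination needs full row rank of the concatenated $\tilde{Z}$ once the active $\ell_1$ constraints are appended. I would argue it as follows. Row independence of any single block $\tilde{Z}_i$ already forces full row rank of $[\tilde{Z}_1\ \cdots\ \tilde{Z}_S]$. The only exception is the selector rows $\pazocal{I}_i$, which act on disjoint blocks and so stay independent anyway. For the zero constraints of the $\ell_1$ part, I would eliminate the zero coordinates outright instead of carrying them as extra rows, in the standard LASSO/mp-QP fashion. If that fails, I would add a nondegeneracy (LICQ) assumption per critical region. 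This is harmless, because the statement already quantifies the assumption over all combinations of active constraints.
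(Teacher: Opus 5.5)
Your argument is correct in substance, but it takes a different route from the paper. The paper also splits $\pazocal{G}_i=\pazocal{G}_i^{+}-\pazocal{G}_i^{-}$, but it then eliminates block by block. It solves the stationarity condition of mode $i$ for $\pazocal{G}_i^{+}-\pazocal{G}_i^{-}$, keeping the other modes frozen inside $c_{\neq i}$ and $d_{\neq i}$. It recovers $\tilde{\alpha}$ from the per-block matrix $\bar{W}_i=\tilde{Z}_i(\bar{W}_F^i)^{-1}\tilde{Z}_i^{\top}$ and distinguishes three sign cases for each selector. It then states that repeating this over all active-constraint combinations yields a PWA law. You instead eliminate jointly on the stacked $g$, using the full Hessian and the concatenated $\tilde{Z}$, and you work coordinatewise with a signed support.

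Your route buys three things.
\begin{enumerate}
\item The map $\tilde{b}\mapsto g^\star$ is genuinely explicit and affine on each region. The paper's per-block formulas still contain the other blocks' optimal selectors, so they are fixed-point relations, and affinity in $\tilde{z}_{\mathrm{ini}}$ is only asserted.
\item Your homogenization through the trailing $\boldsymbol{1}_S$ explains why the law and its regions can be written as $F_m^i\tilde{z}_{\mathrm{ini}}$ and $E_m^i\tilde{z}_{\mathrm{ini}}\preceq\mathbf{0}$ with no affine terms. The paper leaves this implicit.
\item Your concatenated reading of Assumption~\ref{assumption:technical1} is the one under which a rank argument can work at all. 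Each $\tilde{Z}_i$ contains the $S-1$ zero rows of $\pazocal{I}_i$, so the per-block $\bar{W}_i$ is singular whenever $S\geq 2$.
\end{enumerate}
The paper's block form, in turn, shows term by term how other modes' data enter mode $i$'s selector through $c_{\neq i}$ and $d_{\neq i}$. That is the interpretive point used in the coherence discussion.

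One step needs tightening. Once you drop the off-support coordinates, the matrix to invert is $\tilde{Z}_{\cdot\mathcal{A}}H_{\mathcal{A}\mathcal{A}}^{-1}\tilde{Z}_{\cdot\mathcal{A}}^{\top}$, where $\tilde{Z}_{\cdot\mathcal{A}}$ keeps only the columns in the support $\mathcal{A}$. Full row rank of $\tilde{Z}$ does not pass to a column subset. Moreover, Assumption~\ref{assumption:technical1} quantifies over active rows of $\Gamma$ in \eqref{eq:DeePC_ineq_constr}, not over the zero and sign constraints created by the $\ell_1$ term. So your LICQ fallback is a strictly stronger hypothesis, not a harmless restatement.

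You do not need it. Fix an active set and a signed support. Since $H_{\mathcal{A}\mathcal{A}}\succ\mathbf{0}$, the $g_{\mathcal{A}}$-part of any solution of the KKT linear system is unique whenever the system is consistent, even if the multiplier $\nu$ is not. It is therefore an affine function of $\tilde{b}$, obtained for instance from the pseudo-inverse of the KKT matrix. Now consider the set of parameters for which some $(g_{\mathcal{A}},\nu)$ satisfies this system together with:
\begin{enumerate}
\item $\tilde{\mu}\succeq\mathbf{0}$;
\item sign consistency on the support;
\item the inactive inequalities;
\item the off-support bounds $|(c-H_{\cdot\mathcal{A}}g_{\mathcal{A}}-\tilde{Z}^{\top}\nu)_k|\leq\lambda_1$ for $k\notin\mathcal{A}$.
\end{enumerate}
This set is the projection of a polyhedron, hence a polyhedron. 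Finitely many such regions cover the feasible set, and by uniqueness of the optimizer their affine laws agree on overlaps. List the off-support bounds explicitly among your region inequalities, since they are what certifies that the eliminated coordinates are optimally zero.
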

\begin{proof}
    See Appendix~\ref{appendix:proof_prop_elastic}.
\end{proof}
Hence, as the optimal selectors of Elastic-DeePC may depend on data from other modes of the PWA system, the reconstructed behavior might be incoherent with the local dynamics (see Definition~\ref{def:coherence}). Note that, even if we consider PWA systems rather than linear ones, this result is also aligned with the one of \cite[Theorem 1]{kladtke2025data}, further sustaining that data forming the solution may not be aligned with the system's \textquotedblleft true\textquotedblright \ behavior.

\subsection{CAP-DeePC \& its coherence with PWA behaviors}\label{sec:CAP_DeePC}
Given the impossibility of obtaining coherent behaviors with Elastic-DeePC, we explore the suitability of the Composite Absolute Penalty (CAP) regularization~\cite{zhao2006grouped} to match a sparsity pattern of the PWA system's behavior in line with the tracking objective, without solving a mixed integer DeePC. Following the suggestions in~\cite[Section 2.3.1]{zhao2006grouped}, we focus on a specific case of CAP regularization, i.e., the Group Lasso (see~\cite{yuan2006model})
\begin{equation}\label{eq:GLasso}
	\sum_{i=1}^{M} \lambda_{j}\sum_{i=1}^{S}r_{i}^{j}(\pazocal{G}_i)=\lambda\sum_{i=1}^{S}\sqrt{N_i-L-\rho+1}\|\pazocal{G}_{i}\|_{2},
\end{equation}
as no prior information about the data, apart from their grouping, is available.
We note that, in this setting, it is not possible to find a closed-form expression of the optimal selectors $\pazocal{G}_{i}^{\star}$ as a function of the data, as our problem does not generally match the particular structures mentioned in \cite[Chapter 4.3.1]{hastie2015statistical}. We thus focus on unveiling the conditions under which a local selector is shrunk to zero. To this end, let us introduce the Lagrangian associated with \eqref{eq:general_deepc2} with regularization \eqref{eq:GLasso}, i.e.,
\begin{align}\label{eq:Lagrangian_GLasso}
	\nonumber &\pazocal{L}(\{\pazocal{G}_{i}\}_{i=1}^{S},\alpha,\mu;\lambda)=J(\{\pazocal{G}_i\}_{i=1}^{S})+\sum_{i=1}^{S}\lambda_i\|\pazocal{G}_i\|_{2}\\
	& \qquad~~ +\alpha^{\!\top\!\!}\left(\sum_{i=1}^{S}\tilde{Z}_{P}^{i}\pazocal{G}_i-\tilde{z}_{\mathrm{ini}}\right)\!+\!\mu^{\!\top\!\!}\left(\sum_{i=1}^{S}\Gamma_i\pazocal{G}_i-\gamma\right),
\end{align}
where, with an abuse of notation, $\lambda_i=\lambda\sqrt{N_i-L-\rho+1}$, for all $i \in [S]$.
Accordingly, the KKT conditions associated with the problem of interest are
\begin{subequations}\label{eq:KKT_GLasso}
\begin{align}
& W_{F}^{i}\pazocal{G}_{i}^{\star\!}+c_{\neq i\!}+\lambda_i\omega_{i}^{\star\!}+(\tilde{Z}_{P}^{i})^{\!\top\!\!}\alpha\!+\!(\Gamma_i)^{\!\top\!\!}\mu\!=\!\mathbf{0},~~\forall i \!\in\! [S],\label{eq:KKT_GLasso_1}\\
& \sum_{i=1}^{S}\tilde{Z}_{P}^{i}\pazocal{G}_{i}^{\star}=\tilde{z}_{\mathrm{ini}},\label{eq:KKT_GLasso_2}\\
& \sum_{i=1}^{S}\Gamma_{i}\pazocal{G}_{i}^{\star}\leq \gamma,\label{eq:KKT_GLasso_3}\\
& \sum_{i=1}^{S} \mu^{\top}\left(\Gamma_{i}\pazocal{G}_{i}^{\star}- \gamma\right)=0,\label{eq:KKT_GLasso_4}\\
& \mu \geq \mathbf{0},\label{eq:KKT_GLasso_5}
\end{align}
where $\{\pazocal{G}_{i}^{\star}\}_{i=1}^{S}$ are the local data selectors satisfying all the previous conditions, while (see~\cite{friedman2010note})
\begin{equation}\label{eq:omega_def}
\omega_{i}^{\star}=\begin{cases}
\frac{\pazocal{G}_{i}^{\star}}{\|\pazocal{G}_{i}^{\star}\|_{2}}, &\mbox{ if } \pazocal{G}_{i}^{\star}\neq \mathbf{0},\\
v, &\mbox{ if } \pazocal{G}_{i}^{\star}= \mathbf{0},
\end{cases}
\end{equation}
and $v\in \mathbb{R}^{N_i-L-\rho+1}$ is such that $\|v\|_{2}\leq 1$, while 
 \begin{align}
 & W_{F}^{i}=2(U_{F}^{i})^{\top}\pazocal{R}U_{F}^{i}+2(Y_{F}^{i})^{\top}\pazocal{Q}Y_{F}^{i},\label{eq:KKT_utils1} \\
 & c_{\neq i}=2(U_{F}^{i})^{\top}\pazocal{R}\varepsilon_{u}^{i}+2(Y_{F}^{i})^{\top}\pazocal{Q}\varepsilon_{y}^{i}, \label{eq:KKT_utils2}
 \end{align}
 with $\varepsilon_{u}^{i}=\sum_{\substack{j=1\\j\neq i}}^{S}(U_{F}^{j}\pazocal{G}_{j}^{\star}-\boldsymbol{u^{\mathrm{o}}})$ and  $\varepsilon_{y}^{i}=\sum_{\substack{j=1\\j\neq i}}^{S}(Y_{F}^{j}\pazocal{G}_{j}^{\star}-\boldsymbol{y^{\mathrm{o}}})$.
\end{subequations}

As done in Section~\ref{sec:Elastic_DeePC}, let us now split the Lagrange multipliers $\mu$ associated with the inequality constraints in \eqref{eq:general_deepc2} into two groups $\mu=\begin{bmatrix}
    \tilde{\mu}^{\top} & \hat{\mu}^{\top}
\end{bmatrix}^{\top}$, where $\tilde{\mu}\! >\!\mathbf{0}$ are the Lagrange multipliers linked to the subset of \emph{active} inequality constraints and $\hat{\mu}\!=\!\mathbf{0}$ are those associated with inactive constraints (see \eqref{eq:KKT_GLasso_4}-\eqref{eq:KKT_GLasso_5}). Accordingly, let $\tilde{\Gamma}_{i}$ and $\tilde{\gamma}$ be the active inequalities coefficients, so that \eqref{eq:KKT_GLasso_1} becomes 
\begin{equation}\label{eq:KKT_GLasso_1_bis}
    W_{F}^{i}\pazocal{G}_{i}^{\star}+c_{\neq i}+\lambda_i\omega_{i}^{\star}+(\tilde{Z}_{i})^{\top}\tilde{\alpha}=\!\mathbf{0},~\forall i \!\in\! [S],
\end{equation}
where $\tilde{Z}_{i}$ is defined as in \eqref{eq:equality_utils} and $\tilde{\alpha}=\begin{bmatrix}
    \alpha^{\top} & \tilde{\mu}^{\top}
\end{bmatrix}^{\top}$, while \eqref{eq:KKT_GLasso_2} can be expanded as in \eqref{eq:DeePC_init_constr_ext}.
Under Assumption~\ref{assumption:technical1}, suppose that the $\iota$-th local selector $\pazocal{G}_{\iota}^{\star}$, with $\iota \in [S]$, is shrunk to zero when solving \eqref{eq:general_deepc2} with the Group Lasso regularizer, i.e., $\pazocal{G}_{\iota}^{\star}=\mathbf{0}$, while the other ones are not. Then, based on \eqref{eq:omega_def}, \eqref{eq:KKT_GLasso_1_bis} for $i=\iota$ can be simplified as
\begin{equation}\label{eq:KKT_GLasso_1_ter}
c_{\neq \iota}+\lambda_\iota v+\tilde{Z}_{\iota}^{\top}\tilde{\alpha}=\!\mathbf{0},
\end{equation}
and, consequently,
\begin{equation}\label{eq:v_iota}
    v=-\frac{1}{\lambda_\iota}\left(c_{\neq \iota}+\tilde{Z}_{\iota}^{\top}\tilde{\alpha}\right).
\end{equation}
Since $\|v\|_{2}\leq 1$, the latter condition implies that the following should hold
\begin{equation}
    \|c_{\neq \iota}+\tilde{Z}_{\iota}^{\top}\tilde{\alpha}\|_{2}\leq \lambda_{\iota},
\end{equation}
resulting in a lower bound on $\lambda$ in \eqref{eq:GLasso} for the $\iota$-th local selector to be shrunk to zero:
\begin{equation}\label{eq:bound_lambdaGLasso}
    \lambda \geq \frac{\|c_{\neq \iota}+\tilde{Z}_{\iota}^{\top}\tilde{\alpha}\|_{2}}{\sqrt{N_{\iota}-L-\rho+1}}. 
\end{equation}
To characterize this condition, we are left to find an explicit expression for $\tilde{\alpha}$. We do this by further assuming that no other group is shrunk, namely $\pazocal{G}_{i}^{\star} \neq \mathbf{0}$ for all $i\neq \iota$ with $i \in [S]$. In turn, this implies that the following should hold for all $i 
\neq \iota$:
\begin{equation}
\left(W_{F}^{i}+\frac{\lambda_i}{\|\pazocal{G}_{i}^{\star}\|_{2}}\right)\pazocal{G}_{i}^{\star}+c_{\neq i}+(\tilde{Z}_{i})^{\top}\tilde{\alpha}=\mathbf{0},    
\end{equation}
leading to
\begin{equation}\label{eq:Gi_star_GLasso}
    \pazocal{G}_{i}^{\star}=-W_i^{-1}(c_{\neq i}+(\tilde{Z}_{i})^{\top}\tilde{\alpha}),~~\forall i\neq \iota,~i \!\in\![S],
\end{equation}
where $W_i=W_{F}^{i}+\frac{\lambda_i}{\|\pazocal{G}_{i}^{\star}\|_{2}}$ is invertible by construction. Let us now assume the following.
\begin{assumption}[Equalities well posedness II]\label{assumption:technical2}
 The matrix 
 \vspace{-2.5mm}
 \begin{equation}
     \bar{W}=\sum_{\substack{i=1\\i\neq \iota}}^{S}\tilde{Z}_i W_i^{-1}\tilde{Z}_i^\top,
      \vspace{-2.5mm}
 \end{equation}
is invertible. 
\end{assumption}
Under this assumption, using condition in \eqref{eq:Gi_star_GLasso} 
to replace $\{\pazocal{G}_{i}^{\star}\}_{\substack{i=1\\i\neq \iota}}^{S}$ into \eqref{eq:DeePC_init_constr_ext}, we ultimately obtain
\begin{equation}\label{eq:tilde_alpha2}
    \tilde{\alpha}=-\bar{W}^{-1}\left(\sum_{\substack{i=1\\i\neq \iota}}^{S}\tilde{Z}_iW_i^{-1}c_{\neq i}+\tilde{b}\right),
\end{equation}
which, substituted into \eqref{eq:bound_lambdaGLasso}, yields the following conclusion.
\begin{proposition}[Local shrinkage \& penalty tuning]\label{prop:local_shrink_lambda}
    Let Assumption~\ref{assumption:technical2} hold for all possible combinations of active constraints. Consider \eqref{eq:general_deepc2} with the CAP regularization in \eqref{eq:GLasso}. A necessary and sufficient condition for its solution $\pazocal{G}_{\iota}^{\star}$, with $\iota \in [S]$, to be shrunk to zero is for the regularization penalty $\lambda$ to satisfy \eqref{eq:bound_lambdaGLasso}, with $\tilde{\alpha}$ given in \eqref{eq:tilde_alpha2}. \hfill $\square$
\end{proposition}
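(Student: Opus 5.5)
The proof is an optimality-condition argument on \eqref{eq:general_deepc2} with the Group Lasso term \eqref{eq:GLasso}. The problem is convex: the quadratic loss \eqref{eq:DeePCloss2}, a sum of weighted Euclidean norms, and affine equality and inequality constraints. Slater-type qualification holds for affine constraints, and Assumption~\ref{assumption:technical1} gives full row rank. Hence the KKT conditions \eqref{eq:KKT_GLasso_1}--\eqref{eq:KKT_GLasso_5} are both necessary and sufficient for optimality. The subdifferential of $\|\cdot\|_2$ at the origin is the unit ball, and this gives the characterization \eqref{eq:omega_def}. So the statement reduces to showing that the KKT system admits a solution with $\pazocal{G}_\iota^\star=\mathbf{0}$ if and only if $\lambda$ satisfies \eqref{eq:bound_lambdaGLasso}, with $\tilde\alpha$ given by \eqref{eq:tilde_alpha2}.

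\emph{Step 1 (reduction to the active set).} I fix the set of active inequalities, which is one of the finitely many combinations covered by the hypothesis. I then absorb them into the equalities \eqref{eq:DeePC_init_constr_ext}, so that stationarity takes the form \eqref{eq:KKT_GLasso_1_bis}.

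\emph{Step 2 (necessity).} Suppose $\pazocal{G}_\iota^\star=\mathbf{0}$. Stationarity for block $\iota$ then reads \eqref{eq:KKT_GLasso_1_ter}, which gives $v$ as in \eqref{eq:v_iota}, and $\|v\|_2\le 1$ yields \eqref{eq:bound_lambdaGLasso}. For the remaining blocks, $\pazocal{G}_i^\star\neq\mathbf{0}$, so $\omega_i^\star=\pazocal{G}_i^\star/\|\pazocal{G}_i^\star\|_2$. I then solve for $\pazocal{G}_i^\star$ as in \eqref{eq:Gi_star_GLasso}; the matrix $W_i$ is positive definite because $W_F^i\succeq\mathbf{0}$ and $\lambda_i>0$. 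Substituting into \eqref{eq:DeePC_init_constr_ext} gives $\bar W\tilde\alpha$ plus known terms equal to zero, and Assumption~\ref{assumption:technical2} lets me invert to obtain \eqref{eq:tilde_alpha2}. Since the multiplier $\tilde\alpha$ is then uniquely determined, the bound involves exactly this $\tilde\alpha$.

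\emph{Step 3 (sufficiency).} Suppose $\lambda$ satisfies \eqref{eq:bound_lambdaGLasso} with this $\tilde\alpha$. I set $\pazocal{G}_\iota=\mathbf{0}$, define the other blocks by \eqref{eq:Gi_star_GLasso}, and take $v$ as in \eqref{eq:v_iota}. Then $\|v\|_2\le1$, so every KKT condition holds, and by convexity this point is optimal. I would also argue that the optimal $\pazocal{G}_\iota$ is unique, so that ``the'' solution is shrunk. For that I would use the fact that the objective is strictly convex along the group directions once the $\lambda_i\|\cdot\|_2$ terms and $W_F^i$ are combined; otherwise I would state the result for the selected solution.

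\emph{Main obstacle.} The hard part is the implicit dependence of the quantities on the solution itself. Both $W_i$, which contains $\|\pazocal{G}_i^\star\|_2$, and $c_{\neq i}$, which contains the other $\pazocal{G}_j^\star$, depend on the optimum. As a result, \eqref{eq:Gi_star_GLasso}--\eqref{eq:tilde_alpha2} form a fixed-point system rather than a closed form. I would handle this by reading the condition as holding at the optimum, which is consistent with the paper's remark that no closed form exists. Sufficiency then becomes a verification of the KKT conditions at a candidate solution of that fixed-point system, and I would invoke existence of a minimizer, which is guaranteed by coercivity of the regularizer, to ensure such a candidate exists.
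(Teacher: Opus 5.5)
Your proposal is correct and takes the same route as the paper: the paper's proof combines the necessity and sufficiency of the KKT conditions \eqref{eq:KKT_GLasso} with the derivation from \eqref{eq:KKT_GLasso_1_bis} to \eqref{eq:tilde_alpha2}, and your reading of \eqref{eq:Gi_star_GLasso}--\eqref{eq:tilde_alpha2} as a fixed-point system evaluated at the optimum is, if anything, more careful than the paper. One caveat you inherit from the paper is that, since $\tilde{Z}_i$ contains $\pazocal{I}_i$, for every $i\neq\iota$ the row of $\tilde{Z}_i$ for the constraint $\boldsymbol{1}^{\top}\pazocal{G}_{\iota}=1$ in \eqref{eq:DeePC_affine_constr} is zero while the matching entry of $\tilde{b}$ is $1$, so $\pazocal{G}_{\iota}=\mathbf{0}$ is never feasible and $\bar{W}$ always has a zero row: Assumption~\ref{assumption:technical2} can never hold, and the statement is only vacuously true unless \eqref{eq:DeePC_affine_constr} is dropped (e.g., for a PWL system).
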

\begin{proof}
    The proof follows from the fact that \eqref{eq:KKT_GLasso} are necessary and sufficient conditions for the optimality of $\{\pazocal{G}_{i}^{\star}\}_{i=1}^{S}$, and from the derivation from \eqref{eq:KKT_GLasso_1_bis} to \eqref{eq:tilde_alpha2}. 
\end{proof}
Looking back at condition \eqref{eq:bound_lambdaGLasso}, then \eqref{eq:tilde_alpha2} indicates that the numerator on the lower bound on $\lambda$ depends on all data and the local selector of other modes than the $\iota$-th one. 
\begin{remark}[Shrinking multiple local selectors]
    The steps from \eqref{eq:KKT_GLasso_1_bis} to \eqref{eq:tilde_alpha2} can be straightforwardly extended to the case where more than one local selector is shrunk to zero. In this case, $\lambda$ will have to satisfy 
    \begin{equation}
        \lambda \geq \max_{j \in J} \frac{\|c_{\neq j}+\tilde{Z}_{j}^{\top}\tilde{\alpha}\|_2}{\sqrt{N_j-L-\rho+1}},
    \end{equation}
     where $J \subset [S]$ is the set of indexes of the shrunk local selectors and $\tilde{\alpha}$ solves \eqref{eq:KKT_GLasso_1}-\eqref{eq:KKT_GLasso_5}.
    \hfill $\square$
\end{remark}
Since $\lambda$ depends on the optimal selectors that are not shrunk through $\tilde{\alpha}$, the lower bound in \eqref{eq:bound_lambdaGLasso} cannot serve as a tuning rule. Therefore, we cannot enforce such a condition for Definition~\ref{def:coherence} to be met a priori. While one could eventually find a configuration such that Definition~\ref{def:coherence} is met, it will never be possible to have coherence when the reference and/or the initial condition require the system to switch, given the lack of structural flexibility of our predictor.

\section{The impact of unknown modes in the data matrices construction}\label{sec:misclassification}
While in the previous sections we have relied on Assumption~\ref{assumption:mode_access}, the latter is not realistic in a purely data-driven setting. Indeed, it implies that one has partial knowledge of the controlled system. We now lift this assumption, and indicate a strategy to estimate the active mode for the available batch of data $\pazocal{D}$. Moreover, we analyze the impact that \emph{misclassification errors} have on the cost attained by the non-coherent DeePC schemes with shrinkage considered in Section~\ref{sec:shrinkage_based}.

Based on \eqref{eq:mode_selection2}, the active mode associated with a data point $(u_{\tau}^{d},y_{\tau}^{d})$ for $\tau \in [N]$, with $\tau$ describing the time index for the data acquisition, can be determined by looking at the past $\rho$ input/output values and the associated control action $u_{\tau}^{d}$. We hence discard the first $\rho$ samples in $\pazocal{D}$ and use an unsupervised clustering technique, e.g., K-means~\cite{hartigan1979k}, to separate the vectors $\smallmat{(y_{[\tau-\rho,\tau-1]}^{d})^{\top} & (u_{[\tau-\rho,\tau]}^{d})^{\top}}^{\top}$ into $S$ clusters $\{\pazocal{K}_{i}\}_{i=1}^{S}$, and estimate the mode sequence
\begin{equation}\label{estimated}
    \hat{s}_{\tau}^{d}=i \iff \begin{bmatrix}
        y_{[\tau-\rho,\tau-1]}^{d} \\
        u_{[\tau-\rho,\tau]}^{d}
    \end{bmatrix} \in \pazocal{K}_{i},~~i \in [S],
\end{equation}
for $\tau \in [\rho,N]$. Such an approach can nonetheless lead to $\hat{s}_{\tau}^{d}\neq s^{d}_{\tau}$ for some $\tau$ even in the absence of noise, as the definition of the clusters $\{\pazocal{K}_{i}\}_{i=1}^{S}$ is only contingent to the finite number of observed data samples~\cite[Section 2]{ferraritrecate2003}. In turn, this could lead to a different partition of the dataset $\pazocal{D}$ with respect to the ideal one, i.e., with respect to the sub-sets $\pazocal{D}_{i}$ defined in \eqref{eq:data_local}, for $i \in [S]$ and, hence, differences in the dimension and composition of the data matrices used to construct the data-driven predictor in \eqref{eq:general_deepc2}.

Let us now replace the predictor in \eqref{eq:DeePC_model}-\eqref{eq:DeePC_affine_constr} with its equivalent Mosaic representation, i.e., 
\begin{equation}\label{eq:mosaic_predictor}
    \sum_{i=1}^{S}\!\!\begin{bmatrix}
        Z_{P}^{i}\\
        U_{F}^{i}\\
        Y_{F}^{i}\\
        \pazocal{I}_i
    \end{bmatrix}\!\pazocal{G}_{i}\!=\!\!\underbrace{\begin{bmatrix} Z_{P}^{1} \!&\! Z_{P}^{2} \!&\! \ldots \!&\! Z_{P}^{S}\\
    U_{F}^{1} \!&\! U_{F}^{2} \!&\! \ldots \!&\! U_{F}^{S}\\
    Y_{F}^{1} \!&\! Y_{F}^{2} \!&\! \ldots \!&\! Y_{F}^{S}\\
    \pazocal{I}_1 & \pazocal{I}_2 \!&\! \ldots \!&\! \pazocal{I}_S
    \end{bmatrix}}_{:=M^d}\!g\!=\!\!\begin{bmatrix}
        Z_P\\
        U_F\\
        Y_F\\
        \pazocal{I}
    \end{bmatrix}\!g\!=\!\!\begin{bmatrix}
        z_{\mathrm{ini}}\\
        u_{[0,L-1]}\\
        y_{[0,L-1]}\\
        \boldsymbol{1}_S
    \end{bmatrix}\!.
\end{equation}
Note that $\hat{\pazocal{I}}$ differs from $\pazocal{I}$ as the characteristics of $\hat{\pazocal{I}}_i$ are dictated by the number $\hat{N}_i$ of data points assigned by the unsupervised clustering technique of choice to the $i$-th mode, which might be different from $N_i$, with $i \in [S]$.

In the presence of classification errors, the Mosaic matrix in \eqref{eq:mosaic_predictor} is actually replaced by
\begin{subequations}\label{eq:permutation}
    \begin{equation}
    \hat{M}^d=\begin{bmatrix} \hat{Z}_{P}^{1} & \hat{Z}_{P}^{2} & \ldots &\hat{Z}_{P}^{S}\\
    \hat{U}_{F}^{1} & \hat{U}_{F}^{2} \!&\! \ldots \!&\! \hat{U}_{F}^{S}\\
    \hat{Y}_{F}^{1} & \hat{Y}_{F}^{2} \!&\! \ldots \!&\! \hat{Y}_{F}^{S}\\
    \hat{\pazocal{I}}_1 & \hat{\pazocal{I}}_2 \!&\! \ldots \!&\! \hat{\pazocal{I}}_S
    \end{bmatrix}=\begin{bmatrix}
        \hat{Z}_{P}\\
        \hat{U}_{F}\\
        \hat{Y}_{F}\\
        \hat{\pazocal{I}}
    \end{bmatrix},
\end{equation}
satisfying
\begin{equation}
    \mathrm{vec}(\hat{M}^d)=P(\hat{s}_{[\rho+1,N]}^{d})\mathrm{vec}(M^d),
\end{equation}
\end{subequations}
where $P(\hat{s}_{[\rho+1,N]}^{d})$ permutes the elements in the \textquotedblleft true\textquotedblright \ Mosaic $M^d$ according to the estimated mode sequence. This leads to a different predictor
\begin{equation}\label{eq:mosaic_predictor_wrong}
    \hat{M}^d\hat{g}=\begin{bmatrix}
        \hat{Z}_P\\
        \hat{U}_F\\
        \hat{Y}_F\\
        \hat{\pazocal{I}}
    \end{bmatrix}\hat{g}=\begin{bmatrix}
        z_{\mathrm{ini}}\\
        u_{[0,L-1]}\\
        y_{[0,L-1]}\\
        \boldsymbol{1}_S
    \end{bmatrix}.
\end{equation}

\begin{figure}[!tb]
    \centering
    \scalebox{.7}{
    \begin{tikzpicture}
        \node[draw=blue!70!black,rectangle,fill=blue!2!white] (exact) {\begin{tabular}{c}
             Shrinking-based DeePC\\
             with exact data clusters\\
             $Z_P$, $U_F$, $Y_F$, $\pazocal{I}$
        \end{tabular}};
        \node[draw=red!70!black,rectangle,fill=red!2!white, below of=exact,node distance=3cm] (approx) {\begin{tabular}{c}
             Shrinking-based DeePC\\
             with estimated data clusters\\
             $\hat{Z}_P$, $\hat{U}_F$, $\hat{Y}_F$, $\hat{\pazocal{I}}$
        \end{tabular}};
        \node[coordinate,below of=exact,node distance=1.5cm] (aid1) {};
        \node[coordinate,above of=exact,node distance=1.2cm] (aid2) {};
        \node[coordinate,below of=approx,node distance=1.2cm] (aid3) {};
        \node[draw,rectangle,fill=gray!2!white, right of=aid1,node distance=5cm,minimum width=3em,minimum height=3em] (plant) {PWA system};
        \node[coordinate, right of=plant,node distance=1.25cm] (aid4) {};
        \node[coordinate, above right of=aid4,node distance=.4cm] (aid5) {};
        \node[coordinate, left of=aid5,node distance=.5cm] (aid6) {};
        \node[coordinate, below right of=aid4,node distance=.4cm] (aid7) {};
        \node[coordinate, left of=aid7,node distance=.5cm] (aid8) {};
        \draw[->,red!70!black] (approx) -| node[yshift=-.3cm,xshift=-.7cm]{\textcolor{red!70!black}{$\hat{u}_f(\hat{g}^\star)$}} (plant);
        \draw[->,blue!70!black] (exact) -| node[yshift=.3cm,xshift=-.7cm]{\textcolor{blue!70!black}{$u_f(g^\star)$}} (plant);
        \draw[-,blue!70!black] (aid6) -- (aid5);
        \draw[-,blue!70!black] (aid5) |- node[yshift=.2cm,xshift=-2cm] {\textcolor{blue!70!black}{$z_{\mathrm{ini}}$}} (aid2);
        \draw[->,blue!70!black] (aid2) -| (exact);
        \draw[-,red!70!black] (aid8) -- (aid7);
        \draw[-,red!70!black] (aid7) |- node[yshift=-.3cm,xshift=-2cm] {\textcolor{red!70!black}{$\hat{z}_{\mathrm{ini}}$}} (aid3);
        \draw[->,red!70!black] (aid3) -| (approx);
    \end{tikzpicture}
    }
    \caption{Schematic of the loop between the controlled PWA system and the shrinkage-based data-enabled predictive controllers obtained with exact and estimated data clusters.}
    \label{fig:misclass_loop}
\end{figure}
Given this difference, the solution of \eqref{eq:general_deepc2} with the data matrices constructed through the estimated mode sequence $\hat{g}^{\star}$ differs from the DeePC solution $g^{\star}$ obtained by relying on the true mode sequence one, i.e., 
\begin{equation}\label{eq:epsilon_g}
    \varepsilon_g=\hat{g}^\star-g^\star \neq \mathbf{0}.
\end{equation}
In turn, as schematized in \figurename{~\ref{fig:misclass_loop}}, such a difference causes the optimal control sequences returned by DeePC and its initial conditions to also differ, namely
\begin{subequations}\label{eq:epsilon_other}
    \begin{align}
        & \varepsilon_u=U_Fg^\star-\hat{U}_F\hat{g}^\star=\hat{u}_{f}^{\star}(\hat{g}^\star)-u_f^\star(g^\star)\neq \mathbf{0}, \\
        & \varepsilon_{\mathrm{ini}}=z_{\mathrm{ini}}-\hat{z}_{\mathrm{ini}} \neq \mathbf{0}.
    \end{align}
\end{subequations}
This difference further leads the attained costs, i.e., 
\begin{subequations}\label{eq:costs}
\begin{align}
    \nonumber &J^{\mathrm{reg}}(g^{\star})\!=\!\|y_f(g^{\star})\!-\!\boldsymbol{y^{\mathrm{o}}}\|_{\pazocal{Q}}^{2}\!+\!\|U_Fg^{\star}\!\!-\!\boldsymbol{u^{\mathrm{o}}}\|_{\pazocal{R}}^{2}+r(g^\star)\\
    &\qquad \quad~= J(g^\star)+r(g^\star),\label{eq:J_reg1}\\
    \nonumber &J^{\mathrm{reg}}(\hat{g}^{\star})\!=\!\|\hat{y}_f(\hat{g}^{\star})\!-\!\boldsymbol{y^{\mathrm{o}}}\|_{\pazocal{Q}}^{2}+\|\hat{U}_F\hat{g}^{\star}\!\!-\!\boldsymbol{u^{\mathrm{o}}}\|_{\pazocal{R}}^{2}+\hat{r}(\hat{g}^\star)\\
    &\qquad \quad~= J(\hat{g}^\star)+\hat{r}(\hat{g}^\star),\label{eq:J_reg2}
\end{align} 
\end{subequations}
with $y_f(g^{\star})$ and $\hat{y}_f(g^{\star})$ be the outputs of the PWA system respectively \emph{realized} by feeding it with $u_f(g^\star)=U_Fg^\star$ and $\hat{u}_f(\hat{g}^\star)=\hat{U}_F\hat{g}^\star$, and  
\begin{equation}\label{eq:shrinkage_terms}
    r(g^\star)=\!\!\sum_{j=1}^{M}\lambda_{j}\!\!\sum_{i=1}^{S}r_i^{j}(\pazocal{G}_i^\star),~~~\hat{r}(\hat{g}^\star)=\!\!\sum_{j=1}^{M}\lambda_{j}\!\!\sum_{i=1}^{S}\hat{r}_i^{j}(\hat{\pazocal{G}}_i^\star),
\end{equation}
where $\hat{r}_i^{j}(\hat{\pazocal{G}}_i^\star)$ denotes the local regularizer modified by misclassifications, to differ as formalized in the following Lemma.
\begin{lemma}[Cost of misclassification]\label{lemma:misclass}
    Let $g^\star$ and $\hat{g}^{\star}$ be the finite, optimal solutions of \eqref{eq:general_deepc} using the predictor in \eqref{eq:mosaic_predictor} and \eqref{eq:mosaic_predictor_wrong}, respectively. Suppose that neither of the two resulting optimal input sequences causes the controlled system to become unstable if fed to it for $L$ time instants. Moreover, assume that $\varepsilon_g$ in \eqref{eq:epsilon_g} and $\varepsilon_{\mathrm{ini}}$ and $\varepsilon_{u}$ in \eqref{eq:epsilon_other} satisfy 
    \begin{equation}\label{eq:bounds}
        \|\varepsilon_{g}\|_{1} \leq \eta_g,~~~\|\varepsilon_{\mathrm{ini}}\|_{2}^{2}\leq \eta_{\mathrm{ini}},~~\|\varepsilon_{u}\|_{2}^{2} \leq \eta_u.
    \end{equation}
    Then, when using Elastic-DeePC, the following bound hold on the attained regularized cost:
    \begin{align}\label{eq:bound_elastic}
        \nonumber & J^{\mathrm{reg}}(g^{\star}) \!\leq\! J^{\mathrm{reg}}(\hat{g}^{\star})\!+\!c_1(\eta_{\mathrm{ini}}\!+\!\|\hat{z}_{\mathrm{ini}}\|_{2}^{2})\!+\!c_2\eta_{u}\!+\!c_3^{\mathrm{EN}}\eta_g^2\\
        &\qquad+\!c_4^{\mathrm{EN}}\|\hat{g}^{\star}\|_{2}^{2}\!+\!\lambda_1\eta_g+\!2\bar{\varphi}\hat{\varepsilon}\!+\!8\bar{\varphi}\|\Delta\pazocal{M}\|_{2}^{2},
    \end{align}
    where 
    \begin{align*}
    &c_1=16\bar{\varphi}\|\Delta\pazocal{M}\|_{2}^{2},~~~c_2=\bar{\varphi}(16\|\Delta\pazocal{M}\|_{2}^{2}+2),\\
    & c_3^{\mathrm{EN}\!}\!=16\bar{\varphi}\|Y_F\|_{2}^{2}\!+\!2\lambda_2,\\
    &c_4^{\mathrm{EN}\!}\!=\bar{\varphi}(16\|\Delta\pazocal{M}\|_{2}^{2}\|\hat{U}_{F}\|_{2}^{2}+8\|\Delta Y_F\|_{2}^{2})+\!\lambda_2,
\end{align*}
    where $\Delta \pazocal{M}$ indicates the mismatch between the \textquotedblleft true\textquotedblright \ $L$-steps ahead model of the system and the data-driven subspace predictor associated with \eqref{eq:mosaic_predictor}, $\bar{\varphi}$ denotes the maximum among the eigenvalues of $\pazocal{Q}$ and $\pazocal{R}$ in \eqref{eq:costs}, and 
    \begin{equation}\label{eq:varepsilon_hat}
    \hat{\varepsilon}=\|\hat{Y}_F\hat{g}^\star-\boldsymbol{y^{\mathrm{o}}}\|_{2}^{2}+\|\hat{U}_F\hat{g}^\star-\boldsymbol{u^{\mathrm{o}}}\|_{2}^{2}.
\end{equation}
  When using CAP-DeePC, the bound instead becomes 
 \begin{align}\label{eq:bound_cap}
     \nonumber & J^{\mathrm{reg}}(g^{\star}) \!\leq\! J^{\mathrm{reg}}(\hat{g}^{\star})\!+\!c_1(\eta_{\mathrm{ini}}\!+\!\|\hat{z}_{\mathrm{ini}}\|_{2}^{2})\!+\!c_2\eta_{u}\!+\!c_3^{\mathrm{CAP}}\eta_g^2\\
        &~~+\!c_4^{\mathrm{CAP}}\|\hat{g}^{\star}\|_{2}^{2}+\bar{\lambda}(\eta_g\!+\!\|\hat{g}^\star\|_{1})\!+\!2\bar{\varphi}\hat{\varepsilon}\!+\!8\bar{\varphi}\|\Delta\pazocal{M}\|_{2}^{2},
 \end{align}
 with $\bar{\lambda}=\max_{i \in [S]}\lambda_i$ and
    \begin{equation*}
    c_3^{\mathrm{CAP}\!}\!=16\bar{\varphi}\|Y_F\|_{2}^{2},~c_4^{\mathrm{CAP}\!}\!=\bar{\varphi}(16\|\Delta\pazocal{M}\|_{2}^{2}\|\hat{U}_{F}\|_{2}^{2}+8\|\Delta Y_F\|_{2}^{2}).\!
\end{equation*}
\end{lemma}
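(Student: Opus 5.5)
We bound the two pieces of $J^{\mathrm{reg}}(g^\star)$ in \eqref{eq:J_reg1} separately. The tracking loss $J(g^\star)$ is compared with $J(\hat{g}^\star)$ by routing both \emph{realized} output sequences through the true $L$-step ahead model. The regularization term $r(g^\star)$ is compared with $\hat{r}(\hat{g}^\star)$ using $g^\star=\hat{g}^\star-\varepsilon_g$. The coupling between $g^\star$ and $\hat{g}^\star$ enters only through $\varepsilon_g$, $\varepsilon_u$ and $\varepsilon_{\mathrm{ini}}$, which are controlled by \eqref{eq:bounds}. We never invoke optimality of $g^\star$ against $\hat{g}^\star$, because the two problems have different feasible sets; the argument is a chain of triangle inequalities.

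\textbf{Step 1 (model representation).} Since both input sequences keep the system stable over $L$ steps, the realized outputs are finite. We write them as $y_f(g^\star)=\pazocal{M}\,\mathrm{col}(z_{\mathrm{ini}},U_Fg^\star,1)$ and $\hat{y}_f(\hat{g}^\star)=\pazocal{M}\,\mathrm{col}(\hat{z}_{\mathrm{ini}},\hat{U}_F\hat{g}^\star,1)$, where $\pazocal{M}$ is the true $L$-step ahead map along the realized mode sequence. We then introduce $\Delta\pazocal{M}$ as the difference between $\pazocal{M}$ and the subspace predictor built from \eqref{eq:mosaic_predictor}. This lets us rewrite $y_f(g^\star)$ as $Y_Fg^\star$ plus a mismatch term proportional to $\Delta\pazocal{M}$ acting on $(z_{\mathrm{ini}},U_Fg^\star,1)$.

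\textbf{Step 2 (tracking loss).} First bound $\|\cdot\|_{\pazocal{Q}}^2,\|\cdot\|_{\pazocal{R}}^2$ by $\bar{\varphi}\|\cdot\|_2^2$. Next, write the deviation as
\begin{equation*}
y_f(g^\star)-\boldsymbol{y^{\mathrm{o}}}=(\hat{Y}_F\hat{g}^\star-\boldsymbol{y^{\mathrm{o}}})+\bigl(y_f(g^\star)-\hat{Y}_F\hat{g}^\star\bigr),
\end{equation*}
and split it with $\|a+b\|^2\le 2\|a\|^2+2\|b\|^2$. Applying the same split to the input term yields $2\bar{\varphi}\hat{\varepsilon}$ with $\hat{\varepsilon}$ as in \eqref{eq:varepsilon_hat}. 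The input remainder equals $\varepsilon_u$, which produces the ``$2$'' in $c_2$.

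The output remainder is expanded repeatedly, each time with $\|\sum_{k=1}^4 a_k\|^2\le 4\sum_k\|a_k\|^2$; after the outer factor $2$, this explains the constants $8$ and $16$. Its pieces are:
\begin{itemize}
\item a $\Delta\pazocal{M}$-term on $z_{\mathrm{ini}}=\varepsilon_{\mathrm{ini}}+\hat{z}_{\mathrm{ini}}$, giving $c_1(\eta_{\mathrm{ini}}+\|\hat{z}_{\mathrm{ini}}\|^2)$;
\item a $\Delta\pazocal{M}$-term on $U_Fg^\star=\hat{U}_F\hat{g}^\star-\varepsilon_u$, giving $16\bar{\varphi}\|\Delta\pazocal{M}\|^2(\eta_u+\|\hat{U}_F\|^2\|\hat{g}^\star\|^2)$;
\item a $\Delta\pazocal{M}$-term on the constant, giving $8\bar{\varphi}\|\Delta\pazocal{M}\|^2$;
\item the term $Y_Fg^\star-\hat{Y}_F\hat{g}^\star=-Y_F\varepsilon_g-\Delta Y_F\hat{g}^\star$ with $\Delta Y_F=\hat{Y}_F-Y_F$, giving $16\bar{\varphi}\|Y_F\|^2\eta_g^2+8\bar{\varphi}\|\Delta Y_F\|^2\|\hat{g}^\star\|^2$.
\end{itemize}
For the $Y_F$-term we use $\|\varepsilon_g\|_2\le\|\varepsilon_g\|_1\le\eta_g$.

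\textbf{Step 3 (regularizers).} For the elastic net, the triangle inequality gives $\lambda_1\|g^\star\|_1\le\lambda_1\|\hat{g}^\star\|_1+\lambda_1\eta_g$. The quadratic part gives $\lambda_2\|g^\star\|_2^2\le 2\lambda_2\eta_g^2+2\lambda_2\|\hat{g}^\star\|_2^2$. Absorbing $\lambda_2\|\hat{g}^\star\|_2^2+\lambda_1\|\hat{g}^\star\|_1$ into $\hat{r}(\hat{g}^\star)$ leaves exactly the extra $2\lambda_2$ in $c_3^{\mathrm{EN}}$ and $\lambda_2$ in $c_4^{\mathrm{EN}}$.

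For the group lasso, the groups of $g^\star$ and $\hat{g}^\star$ do not coincide, because misclassification changes $\hat{N}_i$ and hence the weights and partitions. We therefore avoid cancellation against $\hat{r}$. Instead we bound $\sum_i\lambda_i\|\pazocal{G}_i^\star\|_2\le\bar{\lambda}\|g^\star\|_1\le\bar{\lambda}(\eta_g+\|\hat{g}^\star\|_1)$, keep $\hat{r}(\hat{g}^\star)\ge 0$ on the right, and obtain \eqref{eq:bound_cap}.

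\textbf{Main obstacle.} The delicate step is Step 1: making $\Delta\pazocal{M}$ well defined, since the true $L$-step model of a PWA system depends on the realized mode sequence. We would fix $\pazocal{M}$ as the affine map associated with the mode sequence actually realized under $u_f(g^\star)$, and take $\|\Delta\pazocal{M}\|_2$ as its mismatch with the Mosaic predictor. Stability and finiteness of the solutions guarantee all norms are finite. The remaining work is bookkeeping of constants.
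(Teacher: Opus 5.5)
Your proposal is correct and follows essentially the paper's own proof: the same $\bar{\varphi}$-bound and the same splits through $\hat{Y}_F\hat{g}^\star$, $Y_Fg^\star$, $\Delta\pazocal{M}$ and $\Delta Y_F$ for the tracking loss, and the same treatment of the elastic-net and CAP terms via $g^\star=\hat{g}^\star-\varepsilon_g$, $\|\cdot\|_2\le\|\cdot\|_1$ and $\hat{r}(\hat{g}^\star)\ge 0$. Two small fixes: state explicitly that $J(\hat{g}^\star)\ge 0$ is simply added to the right-hand side, since your Step 2 never produces it and your expression for $\hat{y}_f(\hat{g}^\star)$ is never used; and note that a literal four-term split gives $16\bar{\varphi}\|\Delta Y_F\|_2^2$, so use the paper's nested two-term splits to stay within the stated $c_4$.
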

\begin{proof}
    The proof can be found in Appendix~\ref{appendix:proof_mismatch}.
\end{proof}
As expected, the bounds in \eqref{eq:bound_elastic}-\eqref{eq:bound_cap} depend on the differences between the ideal initial condition and that attained with misclassified data in the batch $\pazocal{D}$, realized input sequence, and optimal weights. At the same time, both depend on the mismatch $\Delta \pazocal{M}$ between the actual system and the subspace predictor constructed with perfectly clustered data, acknowledging possible modeling errors even in this ideal scenario. Their main difference lies in the dependence of \eqref{eq:bound_cap} on $\lambda_i$ and hence, the amount of data assigned to each mode, making this bound more sensitive to clusters' sizes.


\section{Illustrative example}\label{sec:example}
\begin{table}[!tb]
    \caption{Parameters of shrinking-based DeePC schemes. The regularization of Elastic-DeePC has been chosen for the 1-norm regularization to dominate over the 2-norm one.}
    \label{tab:parameters_DeePC}
    \centering
    \begin{tabular}{cccc|cc|c} 
         & & & \multicolumn{1}{c}{} & \multicolumn{2}{c}{Elastic-DeePC} & CAP-DeePC\\
        \hline
        $L$ & $\rho$ & $Q$ & $R$ & $\lambda_1$ & $\lambda_2$ & $\lambda$\\
        \hline
         19 & 25 & 1 & 1 & 10 & $10^{-9}$ & 10\\
         \hline
    \end{tabular}
\end{table}
\begin{figure}[!tb]
    \centering
    \includegraphics[scale=0.7]{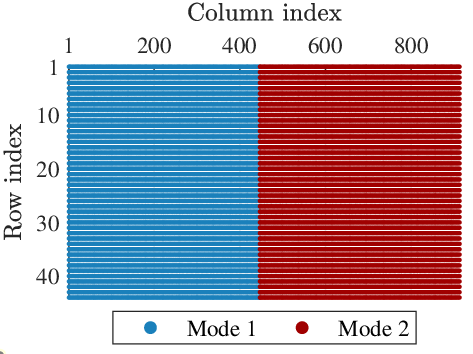}
    \caption{Mosaic data matrix with exact data clustering.}
    \label{fig:hankel_exact}
\end{figure}
Consider the relatively simple PWA system
\begin{equation}\label{eq:example_system}
\begin{aligned}
    x_{t+1}&=\begin{cases}
        -0.3x_t+1.4u_t,~~\mbox{if } x_t<0,~~\texttt{(mode 1)}\\
        0.9x_t+0.15u_t,~~~\mbox{if } x_t\geq 0,~~\texttt{(mode 2)}
    \end{cases}\\
    y_t&=x_t,
\end{aligned}    
\end{equation}
which we aim to control to track two piecewise constant references $\boldsymbol{y}^{\mathrm{o}}$ and the associated piecewise constant reference inputs. These references are chosen to force the system to switch once between its two operating regimes\footnote{The experiments can be reproduced with the code available at \url{https://github.com/GiacomelliGianluca/Shrinkage_Impact_DeePC.git}}. 

To this end, we collect a dataset by exciting the system with an input generated by an oracle model predictive controller (i.e., using the true system model) that weights only the output, with a prediction horizon of $20$. With this approach, we collect $N=1000$ data by tracking a decaying triangular-wave reference with a range of $[-10,10]$, a period of $40$ steps, and an amplitude decreased by $1\%$ each period. While we acknowledge that this data collection is unrealistic, it allows us to obtain a balanced exploration of the two systems' modes during data collection, leading to data subsets that are PE in line with Assumption \ref{assumption:dataset_features}. In turn, this results in the Hankel data structure shown in \figurename{~\ref{fig:hankel_exact}} when data are perfectly clustered. Using this data, we test both shrinking-based DeePC schemes analyzed in Section~\ref{sec:shrinkage_based} with the parameters reported in \tablename{~\ref{tab:parameters_DeePC}}. For both these schemes, we heuristically select the past horizon length $\rho=25$ based on Akaike's Information Criteria (AIC)\footnote{This approach is intended for linear systems subject to and does not account for possible mode switches.} (see \cite{hirotugu1969fitting}). Note that we omit \eqref{eq:DeePC_output_constr} as the controlled system is PieceWise Linear (PWL), and we assume to know this a priori. 
To assess the performance of these strategies, we use the Root Mean Square tracking Errors (RMSE) for the input and the output, i.e.,
\begin{equation}\label{eq:RMSE}
\begin{aligned}
    \mathrm{RMSE}_{\xi}&=\sqrt{\frac{1}{T}\sum_{t=0}^{T-1}(\xi_t\!-\!\xi_t^{\mathrm{o}})^2},
\end{aligned}
\end{equation}
where $\xi$ is a placeholder for $u$ and $y$, and $T=50$ is the length of the considered closed-loop simulation. Moreover, inspired by \cite[Section IV.C]{dorfler2022bridging}, we introduce the Behavioral Performance Indicator (BPI), defined as
\begin{equation}\label{eq:BPI}
    \mathrm{BPI}_t^i=\frac{\|\pazocal{G}_{i,t}\|_{0}}{n_un_t^i +n_x}\geq 0,~~~i \in [S],
\end{equation}
where $\|\pazocal{G}_{i,t}\|_0$ is the number of nonzero entries of the solution of \eqref{eq:general_deepc} at time $t$, while $n_{t}^{i}$ is the number of samples actually belonging to the $i$-th mode in $z_{\mathrm{ini},t}$ and over the future horizon $L$, for all $i \in [S]$. This index thus provides an indication of the distance between the sparsity pattern of the shrinking-based DeePC problems' solutions and that of the actual behavioral representation of the system (see Section~\ref{sec:behaviors}). We point out that \eqref{eq:BPI} represents the BPI for PWL systems, while the same expression holds for PWA ones by adding a one to the denominator due to Lemma \ref{lemma:behaviour_dimension}. If $\mathrm{BPI}_t^i=1$, then the right amount of data associated with the $i$-th local behavior is used to predict the output, while the mode is overrepresented if $\mathrm{BPI}_t^i>1$ and underrepresented otherwise.   
\begin{table*}[!tb]
    \caption{RMSEs attained with Elastic-DeePC and CAP-DeePC with exact and inexact data clustering.}
    \label{tab:performance_indexes}
    \centering
    \begin{tabular}{|c|cc|cc|cc|cc|}
    \cline{2-9}
         \multicolumn{1}{c|}{} &  \multicolumn{4}{c|}{Case 1} & \multicolumn{4}{c|}{Case 2}\\
         \cline{2-9}
          \multicolumn{1}{c|}{} & \multicolumn{2}{c|}{Exact} & \multicolumn{2}{c|}{Misclassified} & \multicolumn{2}{c|}{Exact} & \multicolumn{2}{c|}{Misclassified}\\
         \cline{2-9}
         \multicolumn{1}{c|}{} & Elastic-DeePC & CAP-DeePC & Elastic-DeePC & CAP-DeePC & Elastic-DeePC & CAP-DeePC & Elastic-DeePC & CAP-DeePC\\
         \hline
         RMSE$_u$& \textbf{0.38} & 0.59 & 0.52 & 0.88 & \textbf{0.34} & 0.58 & 0.47 & 0.99 \\
         \hline
         RMSE$_y$ & 2.94 & \textbf{2.46} & 2.88 & 2.78 & 5.37 & \textbf{5.11} & 5.24 & 5.71\\
         \hline
    \end{tabular}
\end{table*}
\begin{figure}[!tb]
    \centering
    \begin{tabular}{c}
         \subfigure[Case 1: transition from mode 1 to mode 2.]{\begin{tabular}{cc}
            \includegraphics[scale=.5]{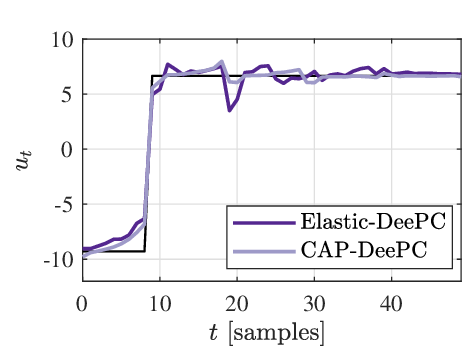}  & \includegraphics[scale=.5]{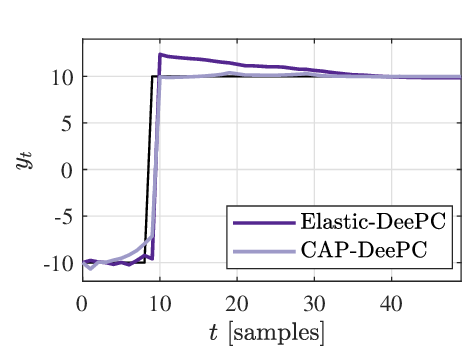} 
         \end{tabular}}\\
         \subfigure[Case 2: transition from mode 2 to mode 1.]{\begin{tabular}{cc}
           \includegraphics[scale=.5]{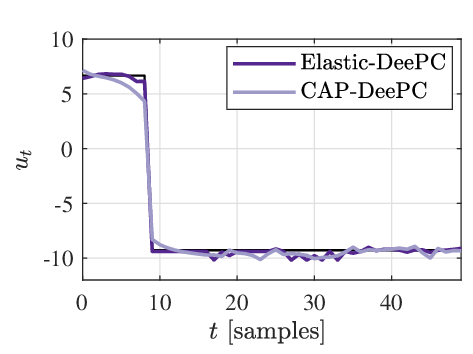} & \includegraphics[scale=.5]{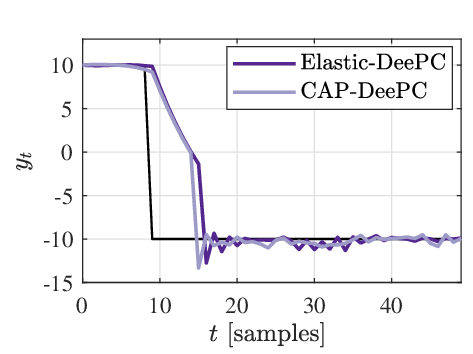} 
         \end{tabular}}
    \end{tabular}
    \caption{Tracking performance achieved with Elastic-DeePC and CAP-DeePC with exact data clustering.}
    \label{fig:exact_clusters_trajectories}
\end{figure}
\begin{figure}[!tb]
    \centering
    \begin{tabular}{c}
         \subfigure[Case 1: transition from mode 1 to mode 2.]{\begin{tabular}{cc}
            \includegraphics[scale=.5]{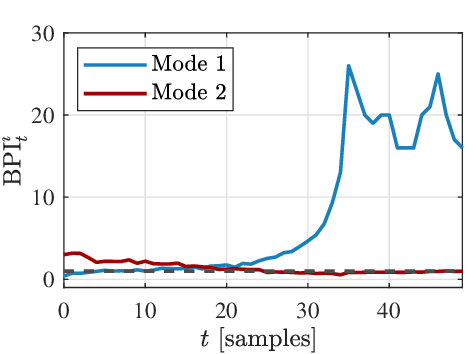}  & \includegraphics[scale=.5]{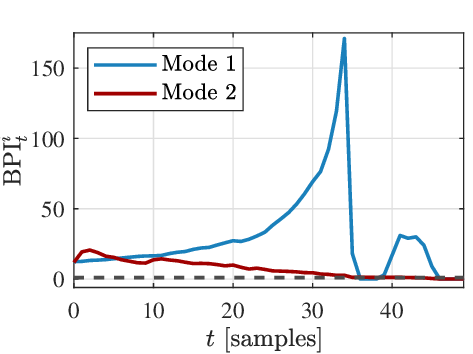}\\
            Elastic-DeePC & CAP-DeePC
         \end{tabular}}\\
         \subfigure[Case 2: transition from mode 2 to mode 1.]{\begin{tabular}{cc}
           \includegraphics[scale=.5]{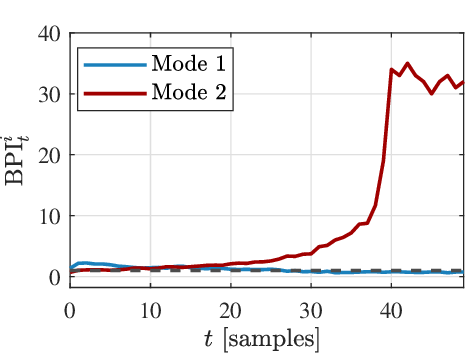} & \includegraphics[scale=.5]{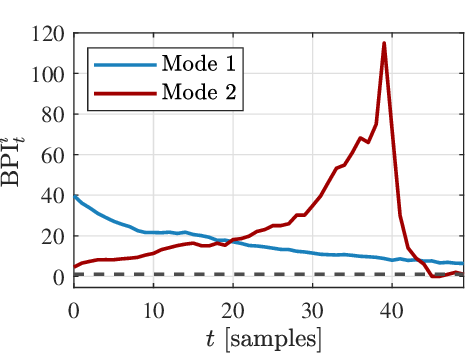}\\
            Elastic-DeePC & CAP-DeePC 
         \end{tabular}}
    \end{tabular}
    \caption{BPIs achieved with Elastic-DeePC and CAP-DeePC with exact data clustering.}
    \label{fig:BPIs}
\end{figure}
\begin{figure}[!tb]
    \centering
    \includegraphics[scale=0.7]{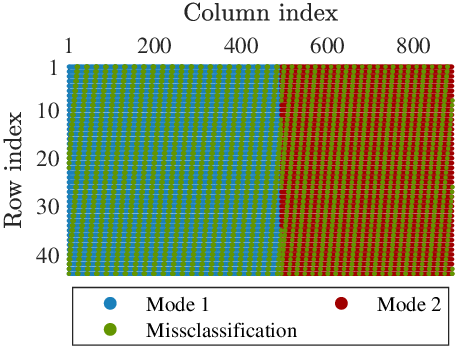}
    \caption{Mosaic data matrix with estimated data clusters.}
    \label{fig:hankel_estimated}
\end{figure}

Let us first assume that the data in $\pazocal{D}$ have been perfectly clustered, leading to the Mosaic matrix depicted in \figurename{~\ref{fig:hankel_exact}}. In this scenario, as shown in \figurename{~\ref{fig:exact_clusters_trajectories}} and confirmed by the RMSEs achieved in \tablename{~\ref{tab:performance_indexes}}, CAP-DeePC tends to perform slightly better than Elastic-DeePC in terms of output tracking while achieving slightly worse input tracking performance, but ultimately leading to a better trade-off between input and output tracking performance. However, irrespective of the considered shrinking scheme, performance tends to deteriorate when the reference forces the system to switch from the second to the first operating mode. This result can be explained by looking at the Behavioral Performance Indicator over time, reported in \figurename{~\ref{fig:BPIs}}. In all cases, the BPIs have a value that hardly corresponds to $1$, ultimately indicating that the solution depends on data that do not match with the actual mode of the system.     

\begin{figure}[!tb]
    \centering
    \begin{tabular}{c}
         \subfigure[Case 1: transition from mode 1 to mode 2.]{\begin{tabular}{cc}
            \includegraphics[scale=.5]{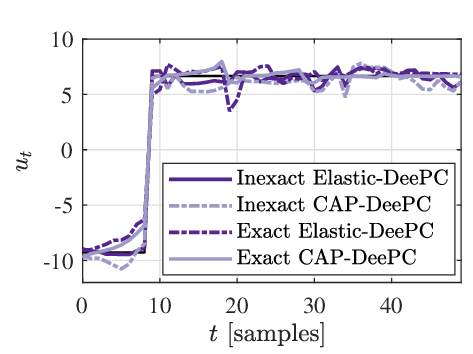}  & \includegraphics[scale=.5]{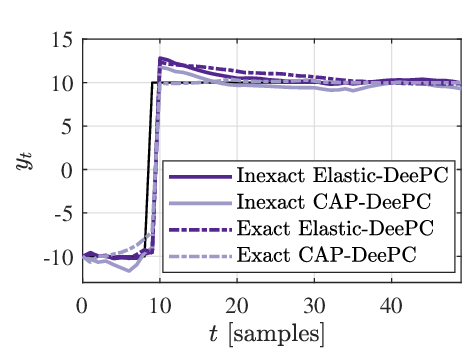} 
         \end{tabular}}\\
         \subfigure[Case 2: transition from mode 2 to mode 1.]{\begin{tabular}{cc}
           \includegraphics[scale=.5]{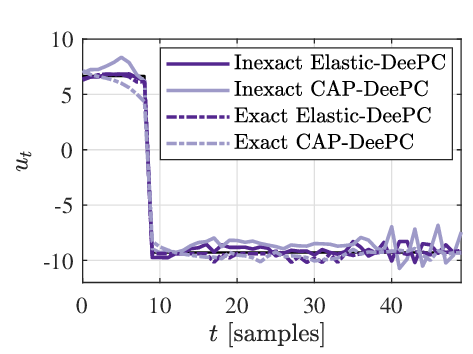} & \includegraphics[scale=.5]{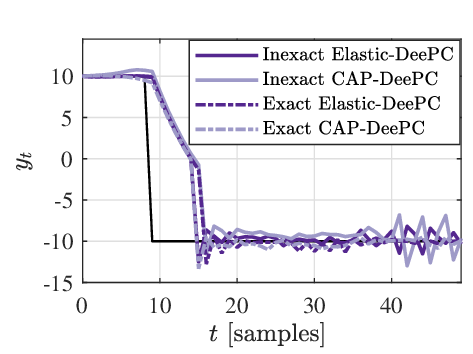} 
         \end{tabular}}
    \end{tabular}
    \caption{Tracking performance achieved with Elastic-DeePC and CAP-DeePC with exact (bold lines) and inexact (dashed lines) data classification.}
    \label{fig:inexact_clusters_trajectories}
\end{figure}

We then focus on the implications of data misclassifications on the performance of Elastic-DeePC and CAP-DeePC, by considering the Mosaic matrix in \figurename{~\ref{fig:hankel_estimated}} built from data classified with K-means. Note that such a procedure leads to a misclassification rate of 14\%. As shown in \figurename{~\ref{fig:inexact_clusters_trajectories}}, both schemes still allow the system to switch between the two modes, yet with trajectories that are different from those achieved with a Mosaic with exactly classified data. In particular, the performance of Elastic-DeePC tends to slightly improve in the misclassified scenario, while CAP-DeePC performance deteriorates when data are misclassified (see \tablename{~\ref{tab:performance_indexes}). This result indicates CAP-DeePC is more sensitive to misclassification errors, in line with Lemma~\ref{lemma:misclass}.  

\section{Conclusions}\label{sec:conclusions}
Building on existing results on PWA input/output realizations, we introduced a behavioral representation of PWA systems and extended Willems' fundamental lemma to define a data-driven multi-step predictor for this class of systems. By considering two DeePC schemes, each featuring a term-wise and a group-wise shrinkage regularizer, we have shown that regularization alone is likely insufficient to achieve coherence between the control policy and the behavior of the controlled system. As the derived data-driven predictor relies on clustered data, we have also analyzed the impact of misclassifications on the losses achieved in the closed-loop for the two considered regularization strategies. Our numerical analysis supports our theoretical insights, showing the inconsistency of shrinking-based DeePC schemes with the PWA behavior as well as the deterioration induced by data misclassifications.  

To ensure the coherence of the DeePC's solution with the local system behavior, future research will focus on exploiting the derived data-driven predictor to guarantee the consistency of DeePC with the PWA behavior by design, at the price of shifting to a mixed-integer DeePC scheme. In addition, future work will explore strategies to lift the assumptions of having noise-free data and access to the actual number of modes of the PWA system.   

\appendix
\subsection{Useful notions on PWA systems \& their properties}\label{appendix:useful_notions}
Here, we recall some definitions that shed light on basic concepts about PWA systems, exploited throughout the paper.
\begin{definition}[Controllability~\cite{bemporad2000}]
	Let $\mathcal{X}_0\! \subseteq\! \mathbb{R}^{n_x}$ and $\mathcal{X}_T \subseteq \mathbb{R}^{n_x}$ be two non-empty sets of initial and \textquotedblleft final\textquotedblright \ states. The PWA system in \eqref{eq:PWA_sys} is said to be controllable in $T$ steps from $\mathcal{X}_0$ to $\mathcal{X}_T$ if for all $x_0 \in \mathcal{X}_0$ there exists an admissible input sequence $u_{[0,T-1]}$ such that $x_T \in \mathcal{X}_T$. 
\end{definition}
\begin{definition}[Finite-time observability~\cite{paoletti2009input}]\label{def:finite_time_obs}
	The system \eqref{eq:PWA_sys} is observable in finite time $T$ if there exists $T \!\in\! \mathbb{Z}_{\geq 0}$ such that the mapping from $x_0$ and $u_{[0,T-1]}$ to $u_{[0,T-1]}$ and $y_{[0,T-1]}$ is invertible.   
\end{definition}
We then characterize a feasible active mode sequence, used in the second technical condition (see Assumption~\ref{ass:condition_C3}).
\begin{definition}[State-space feasible mode sequence~\cite{paoletti2009input}]\label{def:feasible_modes}
	Given $\eta\! \in\! \mathbb{Z}_{>0}$, the mode sequence $\tilde{\sigma}_{[0,\eta-1]}\! \in\! [n_\sigma]^{\eta}$ is feasible for \eqref{eq:PWA_sys} if there exists an initial state $x_0 \!\in\! \mathbb{R}^{n_x}$ and an input sequence $u_{[0,\eta-1]}$ such that $\sigma_{[0,\eta-1]}\!=\!\tilde{\sigma}_{[0,\eta-1]}$.  
\end{definition}
We also define an input/output trajectory for a state-space PWA system, a PWARX, and their equivalence~\cite{paoletti2009input}, enabling our shift from a model-based to a behavioral setting.
\begin{definition}[Trajectory of \eqref{eq:PWA_sys}]\label{def:trajectory_PWA}
	The pairs $\{(u_t,y_t)\}_{t \in \mathbb{Z}_{\geq 0}}$ constitute an input-output trajectory of a PWA system described by \eqref{eq:PWA_sys} if there exists an initial state $x_0 \in \mathbb{R}^{n_x}$ such that \eqref{eq:PWA_sys} is satisfied for all $t \in \mathbb{Z}_{\geq 0}$.
\end{definition}
\begin{definition}[Trajectory of \eqref{eq:PWARX_sys}]\label{def:trajectory_PWARX}
	The pairs $\{(u_t,y_t)\}_{t \in \mathbb{Z}_{\geq 0}}$ constitute an input-output trajectory of a PWA system described by \eqref{eq:PWARX_sys} if \eqref{eq:PWA_sys} is satisfied for all $t \in \mathbb{Z}_{\geq 0}$, with $t \geq \max(n_a,n_b)$.
\end{definition}  

\subsection{Proof of Lemma~\ref{lemma:behaviour_dimension}}\label{appendix:proof_dimension_behavior}
Let $w_{[0,L-1]}=\mathrm{col}(y_{[0,L-1]},u_{[0,L-1]},\boldsymbol{1}_{L}) \in \mathcal{B}^{\mathrm{PWA}\!}_{s|L}$ be the vector comprising the inputs and outputs of \eqref{eq:PWARX_sys} compatible with a given feasible mode sequence $s_{[0,L-1]}$ (see Definition~\ref{def:feasible_modes2}), with $L\geq \rho$. As \eqref{eq:PWARX_sys} is input-output equivalent to \eqref{eq:PWA_sys} according to Definition~\ref{def:equivalence}, then there exists $x_0 \in \mathbb{R}^{n_x}$ and $\sigma_{[0,L-1]} \in [n_\sigma]^{L}$ satisfying \eqref{eq:active_mode} such that the following holds
\begin{equation*}
    y_{[0,L-1]}=\pazocal{O}_0^{L-1}x_0+\pazocal{T}_0^{L-1}u_{[0,L-1]}+\pazocal{C}_0^{L-1},
\end{equation*}
or, equivalently,
\begin{equation}\label{eq:matrix_expression}
    \begin{bmatrix}
        y_{[0,L-1]}\\
        u_{[0,L-1]}\\
        \boldsymbol{1}_{L}
    \end{bmatrix}\!\!=\!\!\underbrace{\begin{bmatrix}
        \pazocal{O}_0^{L-1} \!&\! \pazocal{T}_0^{L-1} \!&\! \pazocal{C}_0^{L-1}\\
    \mathbf{0} \!&\! I \!&\! \mathbf{0}\\
    \mathbf{0} \!&\! \mathbf{0} \!&\! \boldsymbol{1}_{L}
    \end{bmatrix}}_{=\pazocal{B}^{\mathrm{PWA}}}\!
    \begin{bmatrix}
        x_{0}\\
        u_{[0,L-1]}\\
        1
    \end{bmatrix},
\end{equation}
where
\begin{align*}
   &\pazocal{O}_0^{L-1\!}=\begin{bmatrix}
        C_{\sigma_{0}}^{\top} \!&\!
        (C_{\sigma_{1}}A_{\sigma_{0}})^{\!\top\!} \!&\!
        \cdots \!\!&\!
        (C_{\sigma_{L-1}} \pazocal{A}_0^{L-2})^{\!\top}
    \end{bmatrix}^{\!\top}\!\!,\\
    &\pazocal{T}_0^{L-1\!\!} =\!\! \begin{bmatrix}
        D_{\sigma_{0}} & \cdots & \mathbf{0} & \mathbf{0}\\
        C_{\sigma_{1}}B_{\sigma_{0}} & \cdots & \mathbf{0} & \mathbf{0}\\
        \vdots &\ddots & \vdots &\vdots\\
        C_{\sigma_{L-2}}\pazocal{A}_1^{L-3}B_{\sigma_0} & \cdots &
        D_{\sigma_{L-2}} & \mathbf{0}\\     
        C_{\sigma_{L-1}}\pazocal{A}_1^{L-2}B_{\sigma_0} & \cdots &
        C_{\sigma_{L-1}}B_{\sigma_{L-1}} &  D_{\sigma_{L-1}}
    \end{bmatrix},\\
    &\pazocal{C}_0^{L-1\!}\!=\!\begin{bmatrix}
         g_{\sigma_0}^{\top}\!&\!
        (g_{\sigma_1}\!+\!C_{\sigma_{1}}f_{\sigma_{0}})^{\!\top\!} \!&\!\!
        \cdots \!\!&\!
        (g_{\sigma_{L-1}}\!+C_{\sigma_{L-1}} \pazocal{F}_0^{L-2})^{\!\top}
    \end{bmatrix}^{\!\top}\!\!,
\end{align*}
with 
\begin{equation*}
\pazocal{A}_\tau^{\tau'}=\prod_{j=0}^{\tau'-\tau} A_{\sigma_{\tau'-j}},~~~   \pazocal{F}_\tau^{\tau'}=\sum_{h=0}^{\tau'-\tau}\prod_{j=1}^{h} A_{\sigma_{\tau'-j+1}}f_{\sigma_{\tau'-h}}, 
\end{equation*}
for all $\tau,\tau' \in [0,L-1]$ with $\tau\leq\tau'$. As \eqref{eq:PWA_sys} is observable in $L$ steps by assumption, $\mathrm{rank}(\pazocal{O}_0^{L-1})=n_x$ and, as a consequence, $\mathrm{rank}(\pazocal{B}^{\mathrm{PWA}})=n_x+n_u L+1$ (see \eqref{eq:matrix_expression}). Meanwhile, thanks to the input-output equivalence between \eqref{eq:PWA_sys} and \eqref{eq:PWARX_sys}, $\pazocal{B}^{\mathrm{PWA}}$ is a basis for $\mathcal{B}^{\mathrm{PWA}\!}_{s|L}$, leading to \eqref{eq:dimension_behavior} and, thus, concluding the proof.

\subsection{Proof of Theorem~\ref{thm:fundamental_PWA}}\label{appendix:fundamental_lemma}
Let $\mathcal{B}^{+\!}_{|L}$ be the sum of the local, truncated behaviors $\mathcal{B}_{i|L}^{\mathrm{A}}$ in \eqref{eq:local_behavior}, i.e.,
\begin{equation}                 \mathcal{B}^{+\!}_{|L}=\sum_{i=1}^{S}\mathcal{B}_{i|L}^{\mathrm{A}}.
\end{equation}
Under Assumption~\ref{assumption:dataset_features} and based on \cite[Proof of Theorem 1]{fazzi2023addition} it holds that
\begin{equation*}
    \forall~\{\tilde{u}_{[0,L-1]},\tilde{y}_{[0,L-1]}\} \!\in\!\mathcal{B}^{+\!}_{|L},~~\exists g \mbox{ s.t. } \eqref{eq:tilde_data_driven_behavior2}, \eqref{eq:local_affine_equalities} \mbox{ hold,}  
\end{equation*}
where the elements of $\tilde{u}_{[0,L-1]}$ and $\tilde{y}_{[0,L-1]}$ are defined as in \eqref{eq:utils_datarep}, and $g$ is given by \eqref{eq:g_decomposition}. Since $\mathcal{B}^{+\!}_{|L}$ does not account for any switching sequence, this behavior describes trajectories that may not be consistent with a feasible switching sequence $s \in \mathcal{B}_{[S]|L}^{\mathrm{PWA}}$. Therefore, $\mathcal{B}^{+\!}_{|L}$ overapproximates $\mathcal{B}_{s|L}^{\mathrm{PWA}}$, i.e., $\mathcal{B}^{\mathrm{PWA}\!}_{s|L} \subseteq\mathcal{B}^{+\!}_{|L}$. Such an approximation is removed in \eqref{eq:data_driven_behavior}, which can be further recast as
\begin{equation}\label{eq:DD_proof1}
    \begin{bmatrix}
        \tilde{\pazocal{S}}^{1} \!&\! \cdots \!&\! \tilde{\pazocal{S}}^{S}    \end{bmatrix}\odot\begin{bmatrix}\pazocal{M}_L(u_{[1,N]}^{d})\\
			\pazocal{M}_L(y_{[1,N]}^{d})
		\end{bmatrix}g=\begin{bmatrix}
		    u_{[0,L-1]}\\
                y_{[0,L-1]}
		\end{bmatrix},
\end{equation}
with $\tilde{\pazocal{S}}^{i}$ defined in \eqref{eq:S_tilde} and the Mosaic matrices $\pazocal{M}_L(u_{[1,N]}^{d})$ and $\pazocal{M}_L(y_{[1,N]}^{d})$ given in \eqref{eq:Mosaic_modes_dec}. By augmenting \eqref{eq:DD_proof1} to account for \eqref{eq:local_affine_equalities}, we can then prove that the following equality on the image holds:
\begin{equation}
    \mathrm{img}\left(\begin{bmatrix}
        \tilde{\pazocal{S}}^{1} \!\!\!&\!\!\! \cdots \!\!\!&\!\!\!  \tilde{\pazocal{S}}^{S}\\
        \!\!\!&\!\!\!\boldsymbol{1}_{N-SL+S}^{\top}\!\!\!&\!\!\!
    \end{bmatrix}\!\!\odot\!\!\begin{bmatrix}
        \pazocal{M}_L(u_{[1,N]}^{d})\\
			\pazocal{M}_L(y_{[1,N]}^{d})\\
        \boldsymbol{1}_{N-SL+S}^{\top}
    \end{bmatrix}\right)\!=\!\mathcal{B}_{s|L}^{\mathrm{PWA}},
\end{equation}
since, under Assumption~\ref{assumption:dataset_features}, the dimension of $\mathcal{B}_{s|L}^{\mathrm{PWA}}$, i.e., $n_x+n_uL+1$, is matched by the data-driven representation. Hence, for input/output trajectories satisfying \eqref{eq:DD_proof1} it holds $\{u_{[0,L-1]}, y_{[0,L-1]}\}\in \mathcal{B}^{\mathrm{PWA}\!}_{s|L}$. At the same time, based on definition \eqref{eq:fixed_s_behavior_truncated}, this trajectory is associated with a feasible mode sequence $s_{[0,L-1]}$, which is characterized by the data-driven representation using  \eqref{eq:G_dependent_assignment}, concluding the proof. 
    
\subsection{Proof of Proposition~\ref{prop:explicit_elastic}}\label{appendix:proof_prop_elastic}
Following the steps of~\cite[Section III]{giacomelli2025insights}, let us decompose the selectors associated with each mode $\{\pazocal{G}_i\}_{i=1}^{S}$ as follows
\begin{equation}
    \pazocal{G}_i=\pazocal{G}_i^{+}-\pazocal{G}_i^{-},
\end{equation}
with $\pazocal{G}_i^{+},\pazocal{G}_i^{-} \geq \mathbf{0}$. Accordingly, \eqref{eq:general_deepc2} with \eqref{eq:elastic_reg} can be rewritten as a quadratic program in $\{\pazocal{G}_i^{+},\pazocal{G}_i^{-}\}_{i=1}^{S}$ by adding the associated non-negativity constraints and equivalently rewriting the lasso term in \eqref{eq:elastic_reg} as
\begin{equation}\label{eq:lasso_reg_bis}
\lambda_{1}\sum_{i=1}^{S}\|\pazocal{G}_{i}\|_{1}=\lambda_1\sum_{i=1}^{S}\boldsymbol{1}_{N_i-L-\rho+1}^{\top}(\pazocal{G}_i^{+}+\pazocal{G}_i^{-}).
\end{equation}
The Lagrangian \eqref{eq:Lagrangian_elastic} thus becomes
\begin{align}\label{eq:Lagrangian_elastic2}
	\nonumber &\pazocal{L}(\{\pazocal{G}_{i}^{+},\pazocal{G}_{i}^{-}\}_{i=1}^{S},\alpha,\mu;\lambda_{1},\lambda_{2})=J(\{\pazocal{G}_{i}^{+},\pazocal{G}_{i}^{-}\}_{i=1}^{S})\\
	\nonumber & +\!\lambda_{1}\!\sum_{i=1}^{S}\boldsymbol{1}_{N_i-L-\rho+1}^{\top}(\pazocal{G}_{i}^{+\!}\!\!+\!\pazocal{G}_{i}^{-})\!+\!\lambda_{2}\!\sum_{i=1}^{S}\!\|\pazocal{G}_{i}^{+\!}\!-\!\pazocal{G}_{i}^{-}\|_{2}^{2\!}\\
    \nonumber  &+\alpha^{\!\top\!\!}\left(\sum_{i=1}^{S}\tilde{Z}_{P}^{i}(\pazocal{G}_{i}^{+\!}\!-\!\pazocal{G}_{i}^{-})\!-\!\tilde{z}_{\mathrm{ini\!}}\right)\!\!+\!\mu^{\!\top\!\!}\left(\sum_{i=1}^{S}\Gamma_i(\pazocal{G}_{i}^{+\!}\!-\!\pazocal{G}_{i}^{-})\!-\!\gamma\right)\!\\
    & -\sum_{i=1}^{S}(\mu_{i}^{+})^{\top}\pazocal{G}_{i}^{+}-\sum_{i=1}^{S}(\mu_{i}^{-})^{\top}\pazocal{G}_{i}^{-}.
\end{align}
Simplifying the notation by dropping the dimension in the subscript of $\boldsymbol{1}_{N_i-L-\rho+1}$ and the second superscript for $\pazocal{G}_{i}^{+,\star}$ and $\pazocal{G}_{i}^{-,\star}$, the associated KKT conditions are
\begin{subequations}\label{eq:KKT_elastic}
    \begin{align}
        & \bar{W}_{F}^{i}(\pazocal{G}_{i}^{+\!}\!-\!\pazocal{G}_{i}^{-})\!+\!c_{\neq i}\!+\!\lambda_{1}\boldsymbol{1}\!\!+\!(\tilde{Z}_{P}^{i})^{\!\top\!}\alpha\!+\!\Gamma_i^{\top}\mu\!-\!\mu_i^{+\!}\!=\!\mathbf{0},\label{eq:KKT_elastic1}\\
        & -\!\bar{W}_F^{i}(\pazocal{G}_{i}^{+\!}\!-\!\pazocal{G}_{i}^{-})\!-\!c_{\neq i}\!+\!\lambda_{1}\boldsymbol{1}\!\!-\!(\tilde{Z}_{P}^{i})^{\!\top\!}\alpha\!-\!\Gamma_i^{\top\!}\mu\!-\!\mu_i^{\!-\!}\!=\!\mathbf{0},\label{eq:KKT_elastic2}
        \end{align}
    where $\bar{W}_{F}^{i}=W_{F}^{i}+2\lambda_{2}I$, with $W_{F}^{i}$ and $c_{\neq i}$ are defined as in \eqref{eq:KKT_utils1} and \eqref{eq:KKT_utils2}, respectively, for all $i \in [S]$, and 
    \begin{align}
        & \sum_{i=1}^{S} \tilde{Z}_{P}^{i}(\pazocal{G}_{i}^{+}-\pazocal{G}_{i}^{-})-\tilde{z}_{\mathrm{ini}}=\mathbf{0},\label{eq:KKT_elastic3}\\
        & \sum_{i=1}^{S}\Gamma_i(\pazocal{G}_{i}^{+}-\pazocal{G}_{i}^{-})-\gamma \leq \mathbf{0},\label{eq:KKT_elastic4}\\
        & \pazocal{G}_{i}^{+}\geq \mathbf{0},~~\pazocal{G}_{i}^{-}\geq \mathbf{0}, ~~\forall i \in [S],\label{eq:KKT_elastic5}\\
        & \mu^{\top}\left(\sum_{i=1}^{S}\Gamma_i(\pazocal{G}_{i}^{+}-\pazocal{G}_{i}^{-})-\gamma\right)=0,\label{eq:KKT_elastic6}\\
        & (\mu_{i}^{+})^{\top}\pazocal{G}_{i}^{+}=0,~~~\forall i \in [S],\label{eq:KKT_elastic7}\\
        & (\mu_{i}^{-})^{\top}\pazocal{G}_{i}^{-}=0,~~~\forall i \in [S],\label{eq:KKT_elastic8}\\
        & \mu\geq \mathbf{0},~~\mu_{i}^{+}\geq \mathbf{0},~~\mu_{i}^{-}\geq \mathbf{0},~~\forall i \in [S].\label{eq:KKT_elastic9}
    \end{align}
\end{subequations}
We can now split the inequality constraints in \eqref{eq:KKT_elastic4} into the set of \emph{active} and \emph{inactive} inequalities, i.e.,
\begin{subequations}
    \begin{align}
        & \sum_{i=1}^{S}\tilde{\Gamma}_{i}(\pazocal{G}_{i}^{+}-\pazocal{G}_{i}^{-})-\tilde{\gamma}=\mathbf{0}, \label{eq:elastic_equalities}\\
        &\sum_{i=1}^{S}\hat{\Gamma}_{i}(\pazocal{G}_{i}^{+}-\pazocal{G}_{i}^{-})-\hat{\gamma}\leq \mathbf{0},\label{eq:elastic_inequalities}
    \end{align} 
\end{subequations}
and, accordingly, let us split the set of Lagrange multipliers $\mu$ as $\mu=\begin{bmatrix}
    \tilde{\mu}^{\top} & \hat{\mu}^{\top}
\end{bmatrix}^{\top}$, with the first group of multipliers being associated with the active inequalities and the second with the inactive ones. Note that, according to the complementary slackness condition in \eqref{eq:KKT_elastic6}, $\hat{\mu}=\mathbf{0}$, while $\tilde{\mu}>\mathbf{0}$. Moreover, \eqref{eq:KKT_elastic3} can be enlarged as in \eqref{eq:DeePC_init_constr_ext} and $\alpha$ and $\tilde{\mu}$ can be merged into a unique vector of Lagrange multipliers $\tilde{\alpha}$ (see below~\eqref{eq:KKT_GLasso_1_bis}), allowing us to simplify \eqref{eq:KKT_elastic1}-\eqref{eq:KKT_elastic2} as 
\begin{subequations}\label{eq:KKT_lasso12_bis}
       \begin{align}
        & \bar{W}_{F}^{i}(\pazocal{G}_{i}^{+\!}\!-\!\pazocal{G}_{i}^{-})\!+\!c_{\neq i}\!+\!\lambda_{1}\boldsymbol{1}\!\!+\!(\tilde{Z}_{i})^{\!\top\!}\tilde{\alpha}-\!\mu_i^{+\!}\!=\!\mathbf{0},\label{eq:KKT_elastic1_bis}\\
        & -\!\bar{W}_{F}^{i}(\pazocal{G}_{i}^{+\!}\!-\!\pazocal{G}_{i}^{-})\!-\!c_{\neq i}\!+\!\lambda_{1}\boldsymbol{1}\!\!-\!(\tilde{Z}_{i})^{\!\top\!}\tilde{\alpha}-\!\mu_i^{\!-\!}\!=\!\mathbf{0},\label{eq:KKT_elastic2_bis}
        \end{align}
    where $\tilde{Z}_{i}$ is defined as in \eqref{eq:equality_utils}, for all $i \in [S]$. 
\end{subequations}
These two equations can then be summed and subtracted from each other, leading to the following (equivalent) conditions
\begin{subequations}\label{eq:KKT_lasso12_ter}
       \begin{align}
        & \mu_i^{+\!}+\mu_i^{-\!}=2\lambda_{1}\boldsymbol{1}, \label{eq:KKT_elastic1_ter}\\
        & 2\bar{W}_{F}^{i}(\pazocal{G}_{i}^{+\!}\!-\!\pazocal{G}_{i}^{-})\!+2c_{\neq i}\!+\!2(\tilde{Z}_{i})^{\!\top\!}\tilde{\alpha}-\!\mu_i^{\!+\!}+\!\mu_i^{\!-\!}\!=\!\mathbf{0},\label{eq:KKT_elastic2_ter}
        \end{align}
     for all $i \in [S]$. 
\end{subequations}

From \eqref{eq:KKT_elastic2_ter}, we can explicitly find the expression for $\pazocal{G}_{i}^{+\!}\!-\!\pazocal{G}_{i}^{-}$, i.e.,
\begin{equation}\label{eq:Gi_lasso_1}
    \pazocal{G}_{i}^{+\!}-\pazocal{G}_{i}^{-\!}=\!-(\bar{W}_{F}^{i})^{-1}\left[c_{\neq i\!}+\!(\tilde{Z}_{i})^{\top}\tilde{\alpha}-\frac{1}{2}(\mu_{i}^{+\!}-\!\mu_{i}^{-})\right],
\end{equation}
where $\bar{W}_{F}^{i}$ is invertible since $\lambda_2>0$ by assumption. We can then replace this solution into \eqref{eq:DeePC_init_constr_ext} to find the explicit expression for the Lagrange multipliers $\tilde{\alpha}$, i.e.,
\begin{equation}\label{eq:tilde_alpha}
    \tilde{\alpha}\!=\!-\bar{W}_{i}^{-1}\!\!\left[\!\tilde{Z}_i(\bar{W}_F^i)^{-1}c_{\neq i} \!-\! \frac{1}{2}\tilde{Z}_i(\bar{W}_F^i)^{-1}\Delta \mu_i\!-\!d_{\neq i}\!+\!\tilde{b} \right]\!\!,\!\!
\end{equation}
where $\Delta \mu_i=\mu_i^+-\mu_i^-$, $\bar{W}_{i}=\tilde{Z}_i(\bar{W}_{F}^{i})^{-1}\tilde{Z}_{i}^{\top}$ is invertible thanks to Assumption~\ref{assumption:technical1} and
\begin{equation}
    d_{\neq i}=\sum_{\substack{j=1\\j\neq i}}^S \tilde{Z}_j(\pazocal{G}_j^+ - \pazocal{G}_j^-).
\end{equation}
By replacing the expression for $\tilde{\alpha}$ into \eqref{eq:Gi_lasso_1}, similarly to what is done in \cite{he2011lasso}, we can now distinguish three cases for each local selector, i.e., for all $i \in [S]$.
\begin{enumerate}
    \item If $\pazocal{G}_{i}^{+,\star}>\mathbf{0}$, then $\mu_i^{-}>\mathbf{0}$, $\pazocal{G}_{i}^{-,\star}=\mathbf{0}$ and $\mu_i^{+}=\mathbf{0}$. Therefore, according to \eqref{eq:KKT_elastic1_ter}, $\mu_{i}^{-}=2\lambda_{1}\boldsymbol{1}$ and
        \begin{subequations}\label{eq:G_plus_final_lasso}
    \begin{align}
    \nonumber \pazocal{G}_{i}^{+,\star\!}\!=\!-(\bar{W}_{F}^{i})^{-1}&\left[\delta_{i}(c_{\neq i}+\lambda_{1}\boldsymbol{1}_{N_i-L+1})+\right.\\
    &\qquad +\left.\tilde{Z}_{i}^{\top}\bar{W}_{i}^{-1}(d_{\neq i}-\tilde{b})\right],
    \end{align}
    with 
    \begin{align}
    & \delta_{i}=I-\tilde{Z}_{i}^{\top}\bar{W}_{i}^{-1}\tilde{Z}_{i}(\bar{W}_{F}^{i})^{-1}.
    \label{eq:delta_i}
\end{align}
\end{subequations}
    \item If $\pazocal{G}_{i}^{-,\star}>\mathbf{0}$, then $\mu_i^{+}>\mathbf{0}$, $\pazocal{G}_{i}^{+,\star}=\mathbf{0}$ and $\mu_i^{-}=\mathbf{0}$. Therefore, according to \eqref{eq:KKT_elastic1_ter}, $\mu_{i}^{+}=2\lambda_{1}\boldsymbol{1}$ and
    \begin{align}\label{eq:G_minus_final_lasso}
    \nonumber \pazocal{G}_{i}^{-,\star\!}\!=
    (\bar{W}_{F}^{i})^{-1}&\left[\delta_{i}(c_{\neq i}-\lambda_{1}\boldsymbol{1}_{N_i-L+1})+\right.\\
    &\left.\qquad \tilde{Z}_{i}^{\top}\bar{W}_{i}^{-1}(d_{\neq i}-\tilde{b})\right],
    \end{align}
    with 
    $\delta_i$ defined as in \eqref{eq:delta_i}.
    \item If a local selector is shrunk to zero, i.e., $\pazocal{G}_{i}^{+,\star}=\pazocal{G}_{i}^{-,\star}=\mathbf{0}$, then the Lagrange multipliers satisfy
    \begin{equation}\label{eq:multipliers_lasso}
        \frac{1}{2}\delta_{i}(2c_{\neq i}-
        \mu_{i}^{+}+\mu_{i}^{-})=-\tilde{Z}_{i}^{\top}\bar{W}_{i}^{-1}(d_{\neq i}-\tilde{b}).
    \end{equation}
\end{enumerate}
The same reasoning can be extended to all possible combinations of active constraints, leading to $\{\pazocal{G}_{i}^{\star}\}_{i=1}^{S}$ being a PWA function of the initial condition $z_{\mathrm{ini}}$ (which appears in $\tilde{b}$ as shown in \eqref{eq:DeePC_init_constr_ext}), whose value changes over a polyhedral partition of the space of $z_{\mathrm{ini}}$ dictated by \eqref{eq:elastic_inequalities} as well as the auxiliary variables $\{\pazocal{G}_{i}^{+,\star},\pazocal{G}_{i}^{-,\star}\}_{i=1}^{S}$ be non-zero, thus concluding the proof. 
\begin{remark}[Data-dependence]
    Generally, the value of $\pazocal{G}_{i}^{\star}$, with $i\in[S]$ is driven by data from the other modes with non-zero selectors (see \eqref{eq:G_plus_final_lasso}-\eqref{eq:G_minus_final_lasso}). Nonetheless, looking at \eqref{eq:G_plus_final_lasso}-\eqref{eq:multipliers_lasso}, there might exist a specific configuration of $\lambda_1$ and $\lambda_2$ such that only a selector is not null. In this specific case, Definition~\ref{def:coherence} is satisfied because this selector depends only on the data associated with the associated mode. However, coherence cannot still be consistently achieved with this configuration when the reference and/or the initial condition require the system to switch over the prediction horizon.
\end{remark}
\subsection{Proof of Lemma~\ref{lemma:misclass}}\label{appendix:proof_mismatch}
Since $J(\hat{g}^{\star})$ is non-negative by definition, it is straightforward to prove that
\begin{align*}
    J(g^{\star})&\leq J(\hat{g}^{\star})+\|y_f(g^\star)-\boldsymbol{y^{\mathrm{o}}}\|_{\pazocal{Q}}^{2}+\|u_f(g^\star)-\boldsymbol{u^{\mathrm{o}}}\|_{\pazocal{R}}^{2}\\
    &\leq \!J(\hat{g}^{\star})\!+\!\bar{\varphi}(\pazocal{Q})\|y_f(g^\star)\!-\!\boldsymbol{y^{\mathrm{o}}}\|_{2}^{2}\!+\!\bar{\varphi}(\pazocal{R})\|u_f(g^\star)\!-\!\boldsymbol{u^{\mathrm{o}}}\|_{2}^{2}\\
    &\leq \!J(\hat{g}^{\star})\!+\!\bar{\varphi}(\|y_f(g^\star)\!-\!\boldsymbol{y^{\mathrm{o}}}\|_{2}^{2}\!+\!\|u_f(g^\star)\!-\!\boldsymbol{u^{\mathrm{o}}}\|_{2}^{2}),
\end{align*}
where the second and third terms in the first inequality come from the definition of $J(g^\star)$ itself (see \eqref{eq:J_reg1}), while the second inequality is a direct consequence of the properties of quadratic forms, with $\bar{\varphi}(\pazocal{Q})$ and $\bar{\varphi}(\pazocal{R})$ being the maximum eigenvalues of $\pazocal{Q}$ and $\pazocal{R}$, respectively, and $\bar{\varphi}=\max\{\bar{\varphi}(\pazocal{Q}),\bar{\varphi}(\pazocal{R})\}$. 

Let us first consider the last term on the right-hand side of the previous inequality and, specifically, $\|u_f(g^\star)\!-\!\boldsymbol{u^{\mathrm{o}}}\|_{2}^{2}$. This quantity can be bounded as follows
\begin{align}
   \nonumber \|u_f(g^\star)\!-\!\boldsymbol{u^{\mathrm{o}}}\|_{2}^{2}&=\|u_f(g^\star)\!-\!\hat{u}_f(\hat{g}^\star)\!+\!\hat{U}_{F}\hat{g}^\star\!-\!\boldsymbol{u^{\mathrm{o}}}\|_{2}^{2}\\
    \nonumber &=\|\varepsilon_u\!+\!\hat{U}_{F}\hat{g}^\star\!-\!\boldsymbol{u^{\mathrm{o}}}\|_{2}^{2}\\
    \nonumber & \leq 2\|\varepsilon_u\|_{2}^{2}+2\|\hat{U}_{F}\hat{g}^\star\!-\!\boldsymbol{u^{\mathrm{o}}}\|_{2}^{2}\\
    &\leq 2\eta_u+2\|\hat{U}_{F}\hat{g}^\star\!-\!\boldsymbol{u^{\mathrm{o}}}\|_{2}^{2}, \label{eq:bound_u}
\end{align}
where, we have added and subtracted $\hat{U}_{F}\hat{g}^\star$ and used the definition $\hat{u}_f(\hat{g}^\star)=\hat{U}_{F}\hat{g}^\star$ to obtain the first equality, while we have used the definition of $\varepsilon_{u}$ in \eqref{eq:epsilon_other}, its bound in \eqref{eq:bounds}, and the Cauchy-Schwarz inequality to obtain the subsequent inequalities. Meanwhile, $\|y_f(g^\star)\!-\!\boldsymbol{y^{\mathrm{o}}}\|_{2}^{2}$ can be bounded as follows
\begin{align*}
    \|y_f(g^\star)\!-\!\boldsymbol{y^{\mathrm{o}}}\|_{2}^{2}& = \|y_f(g^\star)\!-\!\hat{Y}_F\hat{g}^\star\!+\!\hat{Y}_F\hat{g}^\star\!-\!\boldsymbol{y^{\mathrm{o}}}\|_{2}^{2}\\
    & \leq 2\|y_f(g^\star)\!-\!\hat{Y}_F\hat{g}^\star\|_{2}^{2}\!+\!2\|\hat{Y}_F\hat{g}^\star\!-\!\boldsymbol{y^{\mathrm{o}}}\|_{2}^{2}.
\end{align*}
By focusing on the first term, we can further decompose and upper-bound it as follows
\begin{align*}
    \nonumber &\|y_f(g^\star)\!-\!\hat{Y}_F\hat{g}^\star\|_{2}^{2} = \|y_f(g^\star)\!-\!Y_Fg^\star\!+\!Y_Fg^\star\!-\!\hat{Y}_F\hat{g}^\star\|_{2}^{2}\\
    \nonumber &~~ \leq 2\|y_f(g^\star)\!-\!Y_F\hat{g}^\star\|_{2}^{2}\!+\!2\|Y_Fg^\star\!-\!\hat{Y}_F\hat{g}^\star\|_{2}^{2}\\
    \nonumber &~~ = 2\|y_f(g^\star)\!-\!Y_Fg^\star\!+\!Y_Fg^\star\!-\!Y_F\hat{g}^\star\|_{2}^{2}\!+\!2\|Y_Fg^\star\!-\!\hat{Y}_F\hat{g}^\star\|_{2}^{2}\\
    \nonumber & ~~ \leq 4(\|y_f(g^\star)\!-\!Y_Fg^\star\|_{2}^{2}\!+\!\|Y_F\|_{2}^{2}\|\varepsilon_g\|_{2}^{2})\!+\!2\|Y_Fg^\star\!-\!\hat{Y}_F\hat{g}^\star\|_{2}^{2}\\
     & ~~ \leq 4(\|y_f(g^\star)\!-\!Y_Fg^\star\|_{2}^{2}\!+\!\|Y_F\|_{2}^{2}\eta_g^{2})\!+\!2\|Y_Fg^\star\!-\!\hat{Y}_F\hat{g}^\star\|_{2}^{2},
\end{align*}
where all the inequalities are obtained via the Cauchy-Schwarz inequality, yet the second one relies on the definition of $\varepsilon_g$ in \eqref{eq:epsilon_g}, while the third inequality leverages the fact that $\|\varepsilon_g\|_{2}\leq \|\varepsilon_g\|_{1}\leq \eta_{g}$ according to \eqref{eq:bounds}. The last term on the right-hand side of the previous inequality can be further upper-bounded as follows
\begin{align*}
    \|Y_Fg^\star-\hat{Y}_F\hat{g}^\star\|_{2}^{2}&=\|Y_Fg^\star\!-\!Y_F\hat{g}^\star\!+\!Y_F\hat{g}^\star\!-\!\hat{Y}_F\hat{g}^\star\|_{2}^{2}\\
    & \leq 2\|Y_F\|_{2}^{2}\|\varepsilon_g\|_{2}^{2}+2\|\Delta Y_F\|_{2}^{2}\|\hat{g}^\star\|_{2}^{2}\\
    & \leq 2\|Y_F\|_{2}^{2}\eta_g^{2}+2\|\Delta Y_F\|_{2}^{2}\|\hat{g}^\star\|_{2}^{2},
\end{align*}
where $\Delta Y_F=Y_F-\hat{Y}_F$. Meanwhile, by relying on \eqref{eq:matrix_expression}, the following holds
\begin{equation*}
    y_f(g^\star)=\begin{bmatrix}
        \pazocal{V}_{0}^{L-1} & \pazocal{T}_{0}^{L-1} & \pazocal{C}_{0}^{L-1} 
    \end{bmatrix}\begin{bmatrix}
        z_{\mathrm{ini}}\\
        u_f(g^{\star})\\
        1
    \end{bmatrix},
\end{equation*}
where $\pazocal{V}_{0}^{L-1}$ maps the impact of past input/output sequences of length $\rho$ of the controlled system on its future, whose existence is guaranteed by the system's finite-time observability (see Definition~\ref{def:finite_time_obs} and \cite{paoletti2009input}), due to  $\rho=\mathrm{max}\{n_a,n_b\}$, and that \eqref{eq:PWA_sys} and \eqref{eq:PWARX_sys} are input-output equivalent. At the same time, using the data-driven prediction model in \eqref{eq:mosaic_predictor}, we can equivalently cast $Y_Fg^\star$ as
\begin{equation*}
    Y_Fg^\star=\underbrace{Y_F\begin{bmatrix}
        Z_P\\
        U_F\\
        \pazocal{I}
    \end{bmatrix}^{\dagger}}_{=\Phi}\begin{bmatrix}
        z_{\mathrm{ini}}\\
        u_f(g^\star)\\
        1
    \end{bmatrix},
\end{equation*}
leading to
\begin{align*}
  &\|y_f(g^\star)\!-\!Y_Fg^\star\|_{2}^{2}\leq \|\Delta \pazocal{M}\|_{2}^{2}(\|z_{\mathrm{ini}}\|_{2}^{2}+\|u_{f}(g^\star)\|_{2}^{2}+1)\\
  &~~\leq\! \|\Delta \pazocal{M}\|_{2}^{2\!}(2\|\varepsilon_{\mathrm{ini}}\|_{2}^{2}\!+\!2\|\hat{z}_{\mathrm{ini}}\|_{2}^{2\!}\!+\!2\|\varepsilon_u\|_{2}^{2\!}\!+\!2\|\hat{U}_F\|_{2}^{2\!}\|\hat{g}^{\star}\|_{2}^{2\!}\!+\!1\!)\!\!\\
  &~~\leq\! \|\Delta \pazocal{M}\|_{2}^{2}(2\eta_{\mathrm{ini}}\!+\!2\|\hat{z}_{\mathrm{ini}}\|_{2}^{2}\!+\!2\eta_u\!+\!2\|\hat{U}_F\|_{2}^{2}\|\hat{g}^{\star}\|_{2}^{2}\!+\!1),
\end{align*}
where the first inequality is due to Cauchy-Schwarz with
\begin{equation}\label{eq:dd_vs_real_model_mismatch}
    \Delta \pazocal{M}=\begin{bmatrix}
      \pazocal{V}_{0}^{L-1} & \pazocal{T}_{0}^{L-1} & \pazocal{C}_{0}^{L-1}
  \end{bmatrix}-\Phi,
\end{equation}
and the second and third inequalities are obtained by summing and subtracting $\hat{z}_{\mathrm{ini}}$ and $\hat{u}_f(\hat{g}^\star)$ to the first and second terms, then exploiting the definitions of and the bounds on $\varepsilon_{\mathrm{ini}}$ and $\varepsilon_u$. Combining all these results with \eqref{eq:bound_u}, we obtain 
\begin{align}\label{eq:bound_perf_cost}
     \nonumber J(g^\star)\leq~&  J(\hat{g}^\star)+c_1(\eta_{ini}+\|\hat{z}_{\mathrm{ini}}\|_{2}^{2})+c_2\eta_u+c_3\eta_g^2\\
     &+c_4\|\hat{g}^\star\|_{2}^{2}+2\bar{\varphi}\hat{\varepsilon}+8\bar{\varphi}\|\Delta\pazocal{M}\|_{2}^{2},
\end{align}
where
\begin{align*}
    c_1&=16\bar{\varphi}\|\Delta\pazocal{M}\|_{2}^{2},~~c_2=\bar{\varphi}(16\|\Delta\pazocal{M}\|_{2}^{2}+2),\\
    c_3&=16\bar{\varphi}\|Y_F\|_{2}^{2},~~~~~\!c_4=\bar{\varphi}(16\|\Delta\pazocal{M}\|_{2}^{2}\|\hat{U}_{F}\|_{2}^{2}+8\|\Delta Y_F\|_{2}^{2}),
\end{align*}
and $\hat{\varepsilon}$ in \eqref{eq:varepsilon_hat} is the 2-norm of the input/output tracking error achieved with the trajectories misclassified using misclassified data. We are now left to analyze the shrinkage terms in \eqref{eq:shrinkage_terms}. In particular, when considering the elastic net regularizer the following holds:
\begin{align*}
    r(g^\star)&=\lambda_1\|g^\star\|_{1}+\lambda_2\|g^\star\|_{2}^{2}\\
    &=\lambda_1\|g^\star+\hat{g}^\star-\hat{g}^\star\|_{1}+\lambda_2\|g^\star+\hat{g}^\star-\hat{g}^\star\|_{2}^{2}\\
    &\leq \lambda_1\|\varepsilon_g\|_1+\lambda_1\|\hat{g}^\star\|_{1}+2\lambda_2\|\varepsilon_g\|_{2}^{2}+2\lambda_2\|\hat{g}^\star\|_{2}^{2}\\
    &\leq r(\hat{g}^{\star})+\lambda_1 \eta_g+2\lambda_2\eta_g^2+\lambda_2\|\hat{g}^\star\|_{2}^{2}.
\end{align*}
Meanwhile, the CAP regularizer satisfies
\begin{align*}
    r(g^\star)&=\!\sum_{i=1}^{S}\lambda_i\|\pazocal{G}_i\|_{2}=\hat{r}(\hat{g}^\star)\!+\!\!\sum_{i=1}^{S}\!\left(\lambda_i\|\pazocal{G}_i^\star\|_{2}-\hat{\lambda}_i\|\hat{\pazocal{G}}_i^\star\|_{2}\right)\\
    &\leq \hat{r}(\hat{g}^\star)+\sum_{i=1}^{S}\lambda_i\|\pazocal{G}_i^\star\|_{2}\leq \hat{r}(\hat{g}^\star)+\bar{\lambda}\sum_{i=1}^{S}\|\pazocal{G}_i^\star\|_{2}\\
    & \leq \hat{r}(\hat{g}^\star)+\bar{\lambda}\sum_{i=1}^{S}\|\pazocal{G}_i^\star\|_{1}= \hat{r}(\hat{g}^\star)+\bar{\lambda}\|g^\star\|_{1}\\
    &=\hat{r}(\hat{g}^\star)+\bar{\lambda}\|g^\star+\hat{g}^\star-\hat{g}^\star\|_{1}\\
    & = \hat{r}(\hat{g}^\star)\!+\!\bar{\lambda}(\|\varepsilon_g\|_1+\|\hat{g}^{\star}\|_1)\leq \hat{r}(\hat{g}^\star)\!+\!\bar{\lambda}(\eta_g+\|\hat{g}^{\star}\|_1), 
\end{align*}
where $\lambda_i\neq \hat{\lambda}_i$ due to the dependence of the regularization parameters on the dimension of the data clusters, and $\bar{\lambda}=\max_{i \in [S]}\lambda_i$. The combination of these results with \eqref{eq:bound_perf_cost} concludes the proof.

\bibliographystyle{abbrv}
\bibliography{main.bib}

\end{document}